\let\accentvec\vec  % to remove the warning about vec
\let\vec\accentvec %ditto
\tikzset{node/.style={
circle,
very thick,
%draw=red!50!black!50, % 50% red and 50% black,
draw=black!90!white!90, % 50% red and 50% black,
% and that mixed with 50% white
% The filling:
top color=white, % a shading that is white at the top...
bottom color=red!50!black!20, % and something else at the bottom
inner sep=0,
minimum size=0.5cm,
align=center
}}
\tikzset{smallNode/.style={
circle,
thick,
%draw=red!50!black!50, % 50% red and 50% black,
draw=black!90!white!90, % 50% red and 50% black,
% and that mixed with 50% white
% The filling:
top color=white, % a shading that is white at the top...
bottom color=red!50!black!20, % and something else at the bottom
inner sep=0,
minimum size=0.2cm,
align=center
}}
\newcommand{\inducedPathPicture}{
\begin{figure}[t]
\centering

\small

\subfigure[Input instance $(G_1, v_1, v_4)$ of \HamSTpath.]
{
\begin{tikzpicture}[thick,>=stealth]

\foreach \i \j \k in {1/1/0,2/2/1,3/3/0,4/4/0}
	\node (a-\i) at (\j, \k) [node] {$v_{\i}$};

\foreach \i \j in {1/2,1/3,2/3,3/4}
	\draw (a-\i) -- (a-\j);
\end{tikzpicture}
}
\subfigure[Output instance of \InducedPathByVC.]{
\begin{tikzpicture}[thick,>=stealth]

\coordinate (iw) at (6cm,0cm);
\foreach \i in {1,...,4}
{
	% Draw base vertices v^*.
	\node (v-\i) at ($\i*(1cm, 0cm) + (0cm, 0cm)$) [node] {$v^*_{\i}$};
	
	% Draw instance vertices z_i.
	\node (z-\i) at ($\i*(1cm, 0cm) + (0cm, 2.5cm)$) [node] {$z_{\i}$};
}

\coordinate (mid) at ($0.5*(v-1) + 0.5*(v-4)$);

\foreach \j \h \i in {1/2/1,1/3/2,1/4/3,2/3/4,2/4/5,3/4/6}
{
	\node (e-\j-\h) at ($\i*(1cm,0cm) + (mid) + (-3.5cm, 1cm)$) [node] {$e_{\j,\h}$};
	\draw (e-\j-\h) -- (v-\j);
	\draw (e-\j-\h) -- (v-\h);
}

\coordinate (left) at ($(e-1-2) + (-1cm, 0.5cm)$);
\coordinate (right) at ($(e-3-4) + (+1cm, 0.5cm)$);

% Draw paths.
\path ($(left)$) node (yB) [node] {$y_B$}
	+(0, -2cm) node (xB) [node]  {$x_B$}
	++(-1cm, 0cm) node (yA) [node] {$y_A$}
	+(0, -2cm) node (xA) [node]  {$x_A$}
	++(8cm, 0cm) node (yC) [node] {$y_C$}
	+(0, -2cm) node (xC) [node]  {$x_C$};

% Draw braces showing nr of nodes on a path.
\draw [black,decorate,decoration={brace,amplitude=5pt}]
   ($(xA.south) + (-0.5cm, 0)$) -- ($(yA.north) + (-0.5cm, 0)$);

\node [rotate=90] at ($0.5*(xA) + 0.5*(yA) + (-1cm, -0.1cm)$) {$n^3$ vertices};

\foreach \i in {A,B,C}
	\draw [dotted] (x\i) -- (y\i);

\foreach \i in {A,B}
	\draw (y\i) -- (z-1);
	
\foreach \i in {A,B}
	\foreach \j in {2,...,4}
		\draw [very thin] (y\i) -- (z-\j);
	
\draw (xB) -- (v-1);
\draw (xC) -- (v-4);

\draw [dashed] ($(yA) + (-1cm, 0.5cm)$) -- ($(yC) + (1cm, 0.5cm)$);

% Draw some edges for instances.
\foreach \i \j \h in {1/1/4,1/2/4}
	\draw (z-\i) -- (e-\j-\h);

\end{tikzpicture}
}
\caption{ \label{inducedPathPicture} An example of the lower-bound construction of \thmref{theorem:inducedPathByVCNoPoly}. (a) The first input instance. (b) The graph~$G^*$ which is the result of cross-composing four inputs on~$n=4$ vertices each. Note that the composition algorithm only builds the graph~$G^*$ when~$n \geq 9$, but this picture gives the correct intuition for the construction. Edges between~$\{z_2, z_3, z_4\}$ and~$\{e_{j,h} \mid \{j,h\} \in \binom{[4]}{2}\}$ have been omitted for readability. The vertices below the horizontal dashed line form the vertex cover~$X^*$.
}
\end{figure}
}
\newcommand{\problemdef}[3]
{
\begin{quote}
\textsc{#1}\\
\textbf{Input:} #2\\
\textbf{Question:} #3
\end{quote}
}
\newcommand{\parproblemdef}[4]
{
\begin{quote}
\textsc{#1}\\
\textbf{Input:} #2\\
\textbf{Parameter:} #3\\
\textbf{Question:} #4
\end{quote}
}
\newcommand{\myqed}{}
\newcommand{\yes}[0]{\textsc{yes}\xspace}
\newcommand{\no}[0]{\textsc{no}\xspace}
\newcommand{\orsymb}[0]{\textsc{or}\xspace}
\newcommand{\poly}[0]{\mathrm{poly}}
\newcommand{\containment}[0]{NP~$\subseteq$~coNP$/$poly\xspace}
\newcommand{\ncontainment}[0]{NP~$\not \subseteq$~coNP$/$poly\xspace}
\newcommand{\F}[0]{\ensuremath{\mathcal{F}}\xspace}
\renewcommand{\S}[0]{\ensuremath{\mathcal{S}}\xspace}
\newcommand{\Reduce}[0]{\textsc{Reduce}\xspace}
\newcommand{\vc}[0]{\mathop{\mathrm{\textsc{vc}}}}
\newcommand{\Oh}[0]{\ensuremath{\mathcal{O}}\xspace}
\newlength{\baseImageHeight}
\newcommand{\XTC}[0]{\textsc{Exact Cover by 3-Sets}\xspace}
\newcommand{\pTwoSplitIS}[0]{\textsc{Independent Set on $P_2$-Split Graphs}\xspace}
\newcommand{\OrIndependentSet}[0]{\orsymb-\textsc{Independent Set}\xspace}
\newcommand{\BipartiteBiclique}[0]{\textsc{Balanced Biclique in Bipartite Graphs}\xspace}
\newcommand{\BipartiteInducedMatching}[0]{\textsc{Maximum Induced Matching in Bipartite Graphs}\xspace}
\newcommand{\BipartiteRegularPerfectCodeByTK}[0]{\textsc{Bipartite Regular Perfect Code ($|T| + k$)}\xspace}
\newcommand{\HMinorTestByVCH}[0]{\textsc{$H$-Minor Test ($\vc + |V(H)|$)}\xspace}
\newcommand{\InducedPsiTestByVC}[0]{\textsc{Induced $\Psi_{s,t}$-Subgraph Test ($\vc$)}\xspace}
\newcommand{\HInducedSubgraphTestByVCH}[0]{\textsc{Induced $H$-Subgraph Test ($\vc + |V(H)|$)}\xspace}
\newcommand{\InducedMatchingByVC}[0]{\textsc{Maximum Induced Matching ($\vc$)}\xspace}
\newcommand{\DistanceHereditaryDeletionVC}[0]{\textsc{Distance Hereditary Deletion ($\vc$)}\xspace}
\newcommand{\InducedPathByVC}[0]{\textsc{Long Induced Path ($\vc$)}\xspace}
\newcommand{\ConnectedDominatingSetVC}[0]{\textsc{Connected Dominating Set ($\vc$)}\xspace}
\newcommand{\InducedPath}[0]{\textsc{Long Induced Path}\xspace}
\newcommand{\LargestInducedPiVC}[0]{\textsc{Largest Induced $\Pi$-Subgraph ($\vc$)}\xspace}
\newcommand{\DeletionToPiFreeVC}[0]{\textsc{Deletion Distance To~$\Pi$-free ($\vc$)}\xspace}
\newcommand{\PartitionPiVC}[0]{\textsc{Partition into~$q$ Disjoint $\Pi$-free Subgraphs ($\vc$)}\xspace}
\newcommand{\PartitionThreeColoringVC}[0]{\textsc{Partition into~$3$ Disjoint $\{K_2\}$-free Subgraphs ($\vc$)}\xspace}
\newcommand{\disjointPaths}[0]{\textsc{Disjoint Paths}\xspace}
\newcommand{\OddCycleTransversal}[0]{\textsc{Odd Cycle Transversal}\xspace}
\newcommand{\EtaTransversal}[0]{\textsc{$\eta$-Transversal}\xspace}
\newcommand{\Planarization}[0]{\textsc{Planarization}\xspace}
\newcommand{\IndependentSet}[0]{\textsc{Independent Set}\xspace}
\newcommand{\VertexCover}[0]{\textsc{Vertex Cover}\xspace}
\newcommand{\IndependentSetByVC}[0]{\textsc{Independent Set ($\vc$)}\xspace}
\newcommand{\VertexCoverByVC}[0]{\textsc{Vertex Cover ($\vc$)}\xspace}
\newcommand{\OddCycleTransversalByVC}[0]{\textsc{Odd Cycle Transversal ($\vc$)}\xspace}
\newcommand{\PlanarizationByVC}[0]{\textsc{Planarization ($\vc$)}\xspace}
\newcommand{\LongPathByVC}[0]{\textsc{Long Path ($\vc$)}\xspace}
\newcommand{\LongCycleByVC}[0]{\textsc{Long Cycle ($\vc$)}\xspace}
\newcommand{\HPackingByVC}[0]{\textsc{$H$-Packing ($\vc$)}\xspace}
\newcommand{\PartitionIntoForestsByVC}[0]{\textsc{Partition into $q$ Forests ($\vc$)}\xspace}
\newcommand{\PartitionIntoDominatingSets}[0]{\textsc{Partition into $q$ Dominating Sets}\xspace}
\newcommand{\PartitionIntoPerfectMatchings}[0]{\textsc{Partition into $q$ Perfect Matchings}\xspace}
\newcommand{\DomaticNumber}[0]{\textsc{Domatic Number}\xspace}
\newcommand{\PartitionIntoHamiltonianSubgraphs}[0]{\textsc{Partition into $q$ Hamiltonian Subgraphs}\xspace}
\newcommand{\PartitionIntoDistanceHereditary}[0]{\textsc{Partition into $q$ Distance-Hereditary Graphs}\xspace}
\newcommand{\PartitionIntoForests}[0]{\textsc{Partition into $q$ Forests}\xspace}
\newcommand{\PartitionIntoPlanarGraphs}[0]{\textsc{Partition into $q$ Planar Graphs}\xspace}
\newcommand{\ChordalDeletionByVC}[0]{\textsc{Chordal Deletion ($\vc$)}\xspace}
\newcommand{\EtaTransversalByVC}[0]{\textsc{$\eta$-Transversal ($\vc$)}\xspace}
\newcommand{\FMinorFreeDeletionByVC}[0]{\textsc{$\F$-Minor-Free Deletion ($\vc$)}\xspace}
\newcommand{\TrianglePacking}[0]{\textsc{Triangle Packing}\xspace}
\newcommand{\IntervalDeletion}[0]{\textsc{Interval Deletion}\xspace}
\newcommand{\orientablegenus}[0]{\textsc{Orientable Genus}\xspace}
\newcommand{\bandwidth}[0]{\textsc{Bandwidth}\xspace}
\newcommand{\ChordalDeletion}[0]{\textsc{Chordal Deletion}\xspace}
\newcommand{\CliqueMinorTestVC}[0]{\textsc{Clique Minor Test ($\vc$)}\xspace}
\newcommand{\TreewidthVC}[0]{\textsc{Treewidth ($\vc$)}\xspace}
\newcommand{\PathwidthVC}[0]{\textsc{Pathwidth ($\vc$)}\xspace}
\newcommand{\StarMinorTestVC}[0]{\textsc{$K_{1,t}$ Minor Test ($\vc$)}\xspace}
\newcommand{\ConstantBicliqueTest}[0]{\textsc{Induced $K_{c,t}$ Subgraph Test ($\vc$)}\xspace}
\newcommand{\VariableBicliqueTest}[0]{\textsc{Induced $K_{s,t}$ Subgraph Test ($\vc$)}\xspace}
\newcommand{\ConstantBicliqueTestClassical}[0]{\textsc{Induced $K_{c,t}$ Subgraph Test}\xspace}
\newcommand{\Clique}[0]{\textsc{Clique}\xspace}
\newcommand{\PerfectDeletion}[0]{\textsc{Perfect Deletion}\xspace}
\newcommand{\HPacking}[0]{\textsc{$H$-Packing}\xspace}
\newcommand{\LongPath}[0]{\textsc{Long Path}\xspace}
\newcommand{\LongCycle}[0]{\textsc{Long Cycle}\xspace}
\newcommand{\HamSTpath}[0]{\textsc{Hamiltonian $s-t$ Path}\xspace}
\newcommand{\eqvr}[0]{\ensuremath{\mathcal{R}}\xspace}
\newcommand{\ChromaticNr}[0]{\textsc{Chromatic Number}\xspace}
\newcommand{\clique}[0]{\textsc{Clique}\xspace}
\newcommand{\Treewidth}[0]{\textsc{Treewidth}\xspace}
\newcommand{\FMinorFreeDeletion}[0]{\textsc{$\F$-Minor-Free Deletion}\xspace}
\newcommand{\Cutwidth}[0]{\textsc{Cutwidth}\xspace}
\newcommand{\qcoloring}[0]{\textsc{$q$-Coloring}\xspace}
\newcommand{\ThreeColoring}[0]{\textsc{$3$-Colo\-ring}\xspace}
\newcommand{\DominatingSet}[0]{\textsc{Dominating Set}\xspace}
\newcommand{\ConnectedDominatingSet}[0]{\textsc{Connected Dominating Set}\xspace}
\newcommand{\WeightedVertexCover}[0]{\textsc{Weighted Vertex Cover}\xspace}
\newtheorem{redrule}{Reduction Rule}
\newtheorem{observation}{Observation}
\newtheorem{theorem}{Theorem}
\newtheorem{proposition}{Proposition}
\newtheorem{corollary}{Corollary}
\newtheorem{lemma}{Lemma}
\newtheorem*{claim}{Claim}
\newenvironment{claimproof}{\begin{proof}}{\end{proof}}
\theoremstyle{definition}
\newtheorem{definition}{Definition}
\newcommand{\sectref}[1]{Section~\ref{#1}}
\newcommand{\defref}[1]{Definition~\ref{#1}}
\newcommand{\lemmaref}[1]{Lemma~\ref{#1}}
\newcommand{\thmref}[1]{Theorem~\ref{#1}}
\newcommand{\corollaryref}[1]{Corollary~\ref{#1}}
\newcommand{\obsref}[1]{Observation~\ref{#1}}
\newcommand{\tableref}[1]{Table~\ref{#1}}
\newcommand{\tablesref}[1]{Tables~\ref{#1}}
\newcommand{\imgref}[1]{Fig.~\ref{#1}}
\newcommand{\algref}[1]{Algorithm~\ref{#1}}
\newcommand{\ruleref}[1]{Rule~\ref{#1}}
\newcommand{\propref}[1]{Property~(\ref{#1})}
\newcommand{\proposref}[1]{Proposition~\ref{#1}}
\newcommand{\condref}[1]{(\ref{#1})}
\begin{document}

\begin{frontmatter}

\title{Preprocessing Subgraph and Minor Problems: When Does a Small Vertex Cover Help?\tnoteref{t1,t2}}

\tnotetext[t1]{This work was supported by the Netherlands Organization for Scientific Research (NWO), project ``KERNELS: Combinatorial Analysis of Data Reduction'', and by the European Research Council (ERC) grant ``Rigorous Theory of Preprocessing'', reference 267959.}

\tnotetext[t2]{An extended abstract of this work appeared at the 7th International Symposium on Parameterized and Exact Computation (IPEC 2012). The present paper contains the full proofs, together with three new theorems (Theorems~\ref{theorem:partitioningVC},~\ref{theorem:inducedpsitestbyvc:nopoly}, and~\ref{theorem:hminortestbyvch:nopoly}).}

\author[uib]{Fedor V.\ Fomin}
\ead{fomin@ii.uib.no}
\address[uib]{Department of Informatics, University of Bergen. PO Box 7803, N-5020, Bergen, Norway. Phone: +47 55 58 40 24. Fax: +47 55 58 41 99.}

\author[uib]{Bart M.\ P.\ Jansen\corref{cor1}}
\ead{bart.jansen@ii.uib.no}

\author[uib]{Micha\l \ Pilipczuk}
\ead{michal.pilipczuk@ii.uib.no}

\cortext[cor1]{Corresponding author}

\begin{abstract}
We prove a number of results around kernelization of problems parameterized by the size of a given vertex cover of the input graph. We provide three sets of simple general conditions characterizing problems admitting kernels of polynomial size. Our characterizations not only give generic explanations for the existence of many known polynomial kernels for problems like \qcoloring, \OddCycleTransversal, \ChordalDeletion, \EtaTransversal, or \LongPath, parameterized by the size of a vertex cover, but also imply new polynomial kernels for problems like \FMinorFreeDeletion, which is to delete at most~$k$ vertices to obtain a graph with no minor from a fixed finite set~$\mathcal{F}$.
 
While our characterization captures many interesting problems, the kernelization complexity landscape of parameterizations by vertex cover is much more involved. We demonstrate this by several results about induced subgraph and minor containment testing, which we find surprising. While it was known that testing for an induced complete subgraph has no polynomial kernel unless \containment, we show that the problem of testing if a graph contains a complete graph on~$t$ vertices as a minor admits a polynomial kernel. On the other hand, it was known that testing for a path on~$t$ vertices as a minor admits a polynomial kernel, but we show that testing for containment of an induced path on~$t$ vertices is unlikely to admit a polynomial kernel.
\end{abstract}
\begin{keyword}
Kernelization Complexity \sep Parameterization by Vertex Cover
\MSC[2010]{05C85,68R10,68Q17,68Q25}
\end{keyword}

\end{frontmatter}

\section{Introduction}
Kernelization is an attempt at providing rigorous  mathematical analysis of preprocessing algorithms. While the initial interest in kernelization was driven mainly by practical applications, it turns out that kernelization provides a deep insight into the nature of fixed-parameter tractability. In the last few years, kernelization has transformed into one of the major research domains of parameterized complexity and many important advances in the area are on kernelization. These advances include general algorithmic findings on problems admitting kernels of polynomial size~\cite{AlonGKSY11,BodlaenderFLPST09,FominLST10,KratschW12} and frameworks for ruling out polynomial kernels under certain complexity-theoretic assumptions~\cite{BodlaenderDFH09,BodlaenderJK11,DellM10,FortnowS11}.

A recent trend in the development of parameterized complexity, and more generally, multivariate analysis~\cite{Niedermeier10}, is the study of the contribution of various structural measurements (i.e., different than just the total input size or expected solution size) to problem complexity. Not surprisingly, the development of kernelization followed this trend, resulting in various kernelization algorithms and complexity lower bounds for different kinds of parameterizations. In parameterized graph algorithms, one of the most important and relevant \emph{complexity measures} of a graph is its treewidth. The algorithmic properties of problems parameterized by treewidth are, by now, well-understood~\cite{BodlaenderK08}. However, from the perspective of kernelization, this complexity measure is too general to obtain positive results: it is known that a multitude of graph problems such as \VertexCover, \DominatingSet, and \ThreeColoring, do not admit polynomial kernels parameterized by the treewidth of the input graphs unless \containment~\cite{BodlaenderDFH09}. This is why parameterization by more restrictive complexity measures, like the minimum size of a feedback vertex set or of a vertex cover, is much more fruitful for kernelization.

In particular, kernelization of graph problems parameterized by the~\emph{vertex cover number}, which is the size of the smallest vertex set meeting all edges, was studied intensively \cite{BodlaenderJK11,BodlaenderJK11b,CyganLPPS11a,CyganLPPS12,DomLS09,JansenK11b}. For example, it has been shown that several graph problems such as \Treewidth~\cite{BodlaenderJK11b}, \EtaTransversal~\cite{CyganLPPS12}, and \ThreeColoring~\cite{JansenK11b}, admit polynomial kernels parameterized by the size of a given vertex cover. On the other hand, under certain complexity-theoretic assumptions it is possible to show that a number of problems including \DominatingSet~\cite{DomLS09}, \Clique~\cite{BodlaenderJK11}, \ChromaticNr~\cite{BodlaenderJK11}, \Cutwidth~\cite{CyganLPPS11a}, and \WeightedVertexCover~\cite{JansenB11}, do not admit polynomial kernels for this parameter. As the vertex cover number is one of the largest structural graph parameters, being at least as large as treewidth and the feedback vertex number, a superpolynomial kernel lower bound for a parameterization by vertex cover immediately rules out the possibility of obtaining polynomial kernels for these smaller parameters (cf.~\cite{FellowsJR12}). Understanding the kernelization complexity for parameterizations by vertex cover forms the first step towards more complex parameterizations. While different kernelization algorithms for various problems parameterized by vertex cover are known, we lack a general characterization of such problems. The main motivation of our work on this paper is the quest for meta-theorems on kernelization algorithms for problems parameterized by vertex cover.

According to Grohe~\cite{Grohe07log}, meta-theorems expose the deep relations between logic and combinatorial structures, which is a fundamental issue of computational complexity. Such theorems also yield a better understanding of the scope of general algorithmic techniques and the limits of tractability. 
The canonical example here is Courcelle's Theorem~\cite{Courcelle90}, which states that all problems expressible in Monadic Second-Order Logic are linear-time solvable on graphs of bounded treewidth. For more restricted parameters such as the vertex cover number, meta-theorems are available with a better dependency on the parameter~\cite{Lampis11a,Ganian12}. 
In kernelization there are meta-theorems showing polynomial kernels for restricted graph families~\cite{BodlaenderFLPST09,FominLST10}. A systematic way to understand the kernelization complexity of parameterizations by vertex cover would therefore be to obtain a meta-theorem capturing a large class of problems admitting polynomial kernels. A natural approach would be to devise a logical formalism capturing the class of problems admitting polynomial kernels parameterized by the vertex cover number. However, such a formalism should to be able to express \VertexCover, which admits polynomial kernel, but not \Clique, which does not~\cite{JansenK12}; it should capture \OddCycleTransversal~\cite{JansenK12} and \LongCycle~\cite{BodlaenderJK12c} but not \DominatingSet~\cite{DomLS09}; and \Treewidth~\cite{BodlaenderJK11b} but not \Cutwidth~\cite{CyganLPPS11a}. This suggests that the constructed logical formalism would be unnecessarily complicated, far from classical logics like Monadic Second-Order Logic or First-Order Logic, and probably also blatantly contrived to the needs. Therefore, we take a different approach: we try to explain the existence of polynomial kernels parameterized by the vertex cover number using new graph-theoretic characteristics.

In this paper, we provide three theorems with general conditions capturing a wide variety of known kernelization results about parameterizations by vertex cover. It has been observed before that reduction rules that identify irrelevant vertices by marking a polynomial number of vertices for each constant-sized subset of the vertex cover, lead to a polynomial kernel for several problems~\cite{BodlaenderJK12c,JansenK11b}. Our first contribution here is to uncover a characteristic of graph problems that explains their amenability to such reduction strategies, and to provide theorems using this characteristic. Roughly speaking, the problem of finding a minimum-size set of vertices that hits all induced subgraphs belonging to some family~$\Pi$ has a polynomial kernel parameterized by vertex cover, if membership in~$\Pi$ is invariant under changing the presence of all but a constant number of (non)edges incident with each vertex (and some technical conditions are met). The problem of finding the largest induced subgraph belonging to~$\Pi$, or of finding a partition of the vertex set into a constant number of sets that each induce~$\Pi$-free subgraphs, have polynomial kernels parameterized by vertex cover under similar conditions. Our general theorems not only capture a wide variety of known results, they also imply results that were not known before. For example, as a corollary of our theorems we establish that the \FMinorFreeDeletion deletion problem (see \sectref{section:compendium} for definitions) has a polynomial kernel for every fixed~$\F$, when parameterized by the size of a vertex cover; it is noteworthy that the degree of the polynomial bounding the kernel size depends only on the maximum degree of graphs in $\F$, and not on their sizes. Our third general theorem, dealing with graph partitioning problems, can be considered as a significant generalization of the polynomial kernel for \qcoloring parameterized by vertex cover~\cite{JansenK11b} since coloring a graph is equivalent to partitioning its vertex set into independent sets. We show that many different graph partitioning problems, such as \PartitionIntoForests~\cite[GT14]{GareyJ79} and \PartitionIntoPlanarGraphs, have polynomial kernels parameterized by vertex cover. Although several partitioning problems were already listed by Garey and Johnson~\cite{GareyJ79}, little was previously known about the their kernelization complexity. Our theorems show that in many cases, effective preprocessing is possible for instances of such problems that have small vertex covers.

After studying the kernelization complexity of vertex-deletion problems, largest induced subgraph problems, and partitioning problems, we turn to two basic graph properties: containing some graph as an induced subgraph or as a minor. It is known that testing for a clique as a subgraph (when the   size of the clique is part of the input) does not admit a polynomial kernel parameterized by vertex cover unless \containment~\cite{BodlaenderJK11}. This is why we find the following result surprising: testing for a clique as a minor admits a polynomial kernel under the chosen parameterization. Driven by our desire to obtain a better understanding of the kernelization complexity of graph problems parameterized by vertex cover, we investigate induced subgraph testing and minor testing for other classes of graphs such as cycles, paths, matchings and stars. It turns out that the kernelization complexity of induced subgraph testing and minor testing is exactly opposite for all these classes. For example, testing for a star minor does not have a polynomial kernel due to its equivalence to \ConnectedDominatingSet~\cite{DomLS09}, but we provide a polynomial kernel for testing the existence of an induced star subgraph by using a guessing step to reduce it to cases that are covered by our general theorems.

The paper is organized as follows. We start by giving preliminaries on parameterized complexity and graph theory in \sectref{section:preliminaries}. We also supply the definitions for the problems that we apply our general theorems to. In \sectref{section:metatheorems} we describe a general reduction scheme, study its properties and use it to derive sufficient conditions for vertex-deletion problems, largest induced subgraph problems, and partitioning problems, to admit polynomial kernels parameterized by vertex cover. In \sectref{section:orderTesting} we investigate the kernelization complexity of induced subgraph versus minor testing for various graph families. A succinct overview of our results is given in \tablesref{table:DeletionToPiTable}, \ref{table:LargestInducedPiTable}, \ref{table:partitionPiTable}, and \ref{table:orderTesting} (pages~\pageref{table:DeletionToPiTable}, \pageref{table:LargestInducedPiTable}, \pageref{table:partitionPiTable}, and \pageref{table:orderTesting}, respectively).

\section{Preliminaries} \label{section:preliminaries}
\subsection{Parameterized Complexity and Kernels}
A parameterized problem~$Q$ is a subset of~$\Sigma^* \times \mathbb{N}$, the second component being the \emph{parameter} which expresses some structural measure of the input. A parameterized problem is (strongly uniformly) \emph{fixed-parameter tractable} if there exists an algorithm to decide whether $(x,k) \in Q$ in time~$f(k)|x|^{\Oh(1)}$ where~$f$ is a computable function. We refer to the textbooks~\cite{DowneyF99,FlumG06,Niedermeier06} for more background on parameterized complexity.

A \emph{kernelization algorithm} (or \emph{kernel}) for a parameterized problem~$Q$ is a polynomial-time algorithm which transforms an instance~$(x,k)$ into an equivalent instance~$(x', k')$ such that~$|x'|, k' \leq f(k)$ for some computable function~$f$, which is the \emph{size} of the kernel. If~$f \in k^{\Oh(1)}$ then this is a \emph{polynomial kernel} (cf.~\cite{GuoN07a,Bodlaender09}).

To prove kernelization lower bounds we frequently use the framework of cross-composition~\cite{BodlaenderJK11}, which builds on earlier work by Bodlaender et al.~\cite{BodlaenderDFH09}, and Fortnow and Santhanam~\cite{FortnowS11}.
\begin{definition}[Polynomial equivalence relation \cite{BodlaenderJK11}] \label{polyEquivalenceRelation}
An equivalence relation~\eqvr on $\Sigma^*$ is called a \emph{polynomial equivalence relation} if the following two conditions hold:
\begin{enumerate}
	\item There is an algorithm that given two strings~$x,y \in \Sigma^*$ decides whether~$x$ and~$y$ belong to the same equivalence class in~$(|x| + |y|)^{\Oh(1)}$ time.
	\item For any finite set~$S \subseteq \Sigma^*$ the equivalence relation~$\eqvr$ partitions the elements of~$S$ into at most~$(\max _{x \in S} |x|)^{\Oh(1)}$ classes.
\end{enumerate}
\end{definition}
\begin{definition}[Cross-composition \cite{BodlaenderJK11}] \label{crossComposition}
Let~$L \subseteq \Sigma^*$ be a set and let~$Q \subseteq \Sigma^* \times \mathbb{N}$ be a parameterized problem. We say that~$L$ \emph{cross-composes} into~$Q$ if there is a polynomial equivalence relation~$\eqvr$ and an algorithm which, given~$r$ strings~$x_1, x_2, \ldots, x_r$ belonging to the same equivalence class of~$\eqvr$, computes an instance~$(x^*,k^*) \in \Sigma^* \times \mathbb{N}$ in time polynomial in~$\sum _{i=1}^r |x_i|$ such that:
\begin{enumerate}
	\item~$(x^*, k^*) \in Q \Leftrightarrow x_i \in L$ for some~$1 \leq i \leq r$,
	\item~$k^*$ is bounded by a polynomial in~$\max _{i=1}^r |x_i|+\log r$.
\end{enumerate}
\end{definition}
\begin{theorem}[\cite{BodlaenderJK11}] \label{crossCompositionNoKernel}
If some set~$L \subseteq \Sigma^*$ is NP-hard under Karp reductions and~$L$ cross-composes into the parameterized problem~$Q$, then there is no polynomial kernel for~$Q$ unless \containment.
\end{theorem}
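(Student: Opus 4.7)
The plan is to derive a contradiction from the assumption that $Q$ admits a polynomial kernel. Concretely, I would combine such a kernel with the cross-composition to obtain a \emph{weak OR-distillation} of the \NPc language $L$ into an \NP language, which by the result of Fortnow and Santhanam~\cite{FortnowS11}, as subsequently generalized by Bodlaender, Downey, Fellows, and Hermelin~\cite{BodlaenderDFH09}, forces \containment\ and so contradicts the working hypothesis. A weak OR-distillation, in the sense needed here, is an algorithm that, given $r$ inputs of $L$ of length at most $s$, produces in polynomial time an instance of a fixed \NP\ language whose membership is equivalent to the disjunction of $L$-memberships of the inputs, and whose size is bounded by a polynomial in $s + \log r$.

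The construction proceeds in four steps. First, use the polynomial-time decision algorithm for $\eqvr$ to partition the $r$ inputs into equivalence classes; by \defref{polyEquivalenceRelation} there are at most $s^{\Oh(1)}$ classes. Second, apply the cross-composition algorithm within each class $C$ to obtain an instance $(x_C^{\ast}, k_C^{\ast})$ of $Q$ such that $(x_C^{\ast}, k_C^{\ast}) \in Q$ iff some member of $C$ lies in $L$, with $k_C^{\ast} \leq q(s + \log r)$ for some polynomial $q$ delivered by \defref{crossComposition}. Third, feed every $(x_C^{\ast}, k_C^{\ast})$ through the hypothetical polynomial kernel, which by assumption produces an equivalent instance of $Q$ of total size at most $p(k_C^{\ast}) \leq p(q(s + \log r))$ for some polynomial $p$. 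Fourth, concatenate the at most $s^{\Oh(1)}$ resulting kernelized outputs into a single string $y$; then $y$ is a \yes-instance of the \NP\ language $Q^{\mathrm{OR}}$ consisting of polynomially-long tuples of $Q$-instances of which at least one is a \yes-instance, and $|y| \leq s^{\Oh(1)} \cdot p(q(s + \log r))$, which is polynomial in $s + \log r$.

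This is precisely a weak OR-distillation from the \NP-hard language $L$ into $Q^{\mathrm{OR}} \in \NP$, which by the cited results contradicts the assumption \ncontainment. The chief delicate point, and the only step that requires genuine care, is the parameter bookkeeping: the argument hinges on the \emph{logarithmic} dependence of $k_C^{\ast}$ on $r$ built into \defref{crossComposition}, since this is exactly what allows the composition $p \circ q$ of the kernel-size polynomial with the cross-composition's guarantee to remain polynomial in $s$ after the $r$-dependent term is absorbed by the logarithm. Any stronger dependence on $r$ in the definition of cross-composition would break the pipeline and invalidate the reduction to the Fortnow--Santhanam distillation framework; conversely, once the bounds line up as above, the contradiction with \ncontainment\ is immediate and the theorem follows.
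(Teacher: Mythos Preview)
The paper does not actually prove this theorem: it is quoted verbatim from~\cite{BodlaenderJK11} with no accompanying argument, so there is no ``paper's own proof'' to compare against. Your sketch is the standard proof from that source---partition by~$\eqvr$, cross-compose within each class, kernelize, and bundle the results into an OR-instance to obtain a distillation that triggers the Fortnow--Santhanam collapse---and it is correct. The one point you gloss over slightly is that the classical Fortnow--Santhanam statement bounds the output in terms of~$s$ alone, whereas you end up with a bound polynomial in~$s + \log r$; the reason this still suffices (as established in~\cite{BodlaenderJK11}) is that the complementary-witness argument only ever invokes the distillation with~$r$ polynomial in~$2^s$, so~$\log r \in \Oh(s)$ anyway, but you might make that explicit rather than folding it into ``by the cited results.''
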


The set~$\{1, 2, \ldots, n\}$ is abbreviated as~$[n]$. If~$X$ is a finite set then~$\binom{X}{n}$ denotes the collection of all subsets of~$X$ which have size exactly~$n$. Similarly we use $\binom{X}{\leq n}$ for the subsets of size at most~$n$ (including~$\emptyset$). When defining cross-compositions we will use a unique $k$-bit binary representation of integers in the range~$[1 \ldots 2^k]$ by mapping the number~$2^k$ to string consisting of~$k$ zeros. We use the normal binary expansion for the smaller numbers.

\subsection{Graphs}
All graphs we consider are finite, simple, and undirected. An undirected graph~$G$ consists of a vertex set~$V(G)$ and a set of edges~$E(G) \subseteq \binom{V(G)}{2}$. A graph property~$\Pi$ is a (possibly infinite) set of graphs. A graph~$H$ is a \emph{subgraph} of graph~$G$, denoted~$H \subseteq G$, if~$V(H) \subseteq V(G)$ and~$E(H) \subseteq E(G)$. For~$X \subseteq V(G)$ the subgraph \emph{induced} by~$X$ is denoted by~$G[X]$. Its vertex set is~$X \cap V(G)$, and its edge set is~$E(G) \cap \binom{X}{2}$. For a vertex subset~$X$ we use~$G - X$ to denote the subgraph of~$G$ induced by~$V(G) \setminus X$. The disjoint union of~$t$ copies of a graph~$G$ is represented by~$t \cdot G$. We say that a graph~$G$ is \emph{vertex-minimal} with respect to~$\Pi$ if~$G \in \Pi$ and for all~$S \subsetneq V(G)$ the graph~$G[S]$ is not contained in~$\Pi$.

The \emph{open neighborhood} of vertex~$v$ in graph~$G$ is the set~$\{u \in V(G) \mid \{u,v\} \in E(G)\}$, and is denoted by~$N_G(v)$. The \emph{closed neighborhood} of~$v$ is~$N_G[v] := N_G(v) \cup \{v\}$. The notation extends naturally to sets of vertices~$S$. The open neighborhood is~$N_G(S) := \bigcup _{v \in S} N_G(v) \setminus S$, whereas the closed neighborhood is~$N_G[S] := \bigcup _{v \in S} N_G(v) \cup S$. The \emph{degree} of a vertex~$v$ in graph~$G$ is~$\deg_G(v) := |N_G(v)|$. The maximum degree of a vertex in~$G$ is denoted by~$\Delta(G)$. \emph{Contracting} an edge~$\{u,v\} \in E(G)$ in graph~$G$ results in the graph~$G'$ obtained from~$G$ by removing vertices~$u$ and~$v$ together with their incident edges, and adding a new vertex~$x$ with~$N_{G'}(x) := N_G(\{u,v\})$.

A (simple) \emph{path} in~$G$ is a sequence of distinct vertices~$(v_0, v_1, \ldots, v_k)$ such that~$\{v_{i-1}, v_i\} \in E(G)$ for~$i \in [k]$. The \emph{length} of the path is the number~$k$ of edges on it. The vertices~$v_0$ and~$v_k$ are the \emph{endpoints} of the path. A (simple) \emph{cycle} is a sequence of vertices~$(v_0, v_1, \ldots, v_k)$ for~$k \geq 3$ such that the elements~$\{v_1, \ldots, v_k\}$ are pairwise distinct and~$v_0 = v_k$, with~$\{v_{i-1}, v_i\} \in E(G)$ for~$i \in [k]$. The length of a cycle is the number of edges on it. A graph is \emph{Hamiltonian} if there is a cycle that meets all its vertices. An \emph{odd cycle} is a cycle of odd length. A \emph{chord} in a cycle is an edge between two vertices that are not successive on the cycle. A cycle is \emph{chordless} if it is of length at least $4$ and has no chords. A graph is \emph{chordal} if it does not contain any chordless cycles; it is \emph{bipartite} if it does not have an odd cycle. A graph is \emph{perfect} if for all its induced subgraphs the chromatic number equals the size of the largest clique. As conjectured a long time ago~\cite{Berge61}, and proved recently~\cite{ChudnowskyRST06}, a graph is perfect if and only if it does not contain any odd hole or odd anti-hole as an induced subgraph.

The complete graph (clique) on~$t$ vertices is denoted~$K_t$, whereas the complete bipartite graph (biclique) with partite sets of sizes~$s$ and~$t$ is denoted~$K_{s,t}$. The path graph on~$t$ vertices is~$P_t$, whereas the cycle graph on~$t$ vertices is~$C_t$. A graph~$G$ is \emph{empty} if~$E(G) = \emptyset$. A vertex~$v$ is \emph{simplicial} in graph~$G$ if~$N_G(v)$ is a clique. A \emph{minor model} of a graph~$H$ in a graph~$G$ is a mapping~$\phi$ from~$V(H)$ to subsets of~$V(G)$ (called \emph{branch sets}) which satisfies the following conditions: (a) $\phi(u) \cap \phi(v) = \emptyset$ for distinct~$u,v \in V(H)$, (b) $G[\phi(v)]$ is connected for~$v \in V(H)$, and (c) there is an edge between a vertex in $\phi(u)$ and a vertex in~$\phi(v)$ for all~$uv \in E(H)$. Graph~$H$ is a \emph{minor} of~$G$ if~$G$ has a minor model of~$H$. It is easy to see that this is equivalent to saying that~$H$ can be made from~$G$ by a (possibly empty) sequence of vertex deletions, edge deletions, and edge contractions.

A \emph{proper $q$-coloring} of a graph~$G$ is a function~$f \colon V(G) \to [q]$ such that adjacent vertices receive different colors. The \emph{chromatic number} of a graph is the smallest integer~$q$ for which it admits a proper $q$-coloring. An \emph{$H$-packing} in~$G$ is a set of vertex-disjoint subgraphs of~$G$, each of which is isomorphic to~$H$. An $H$-packing is \emph{perfect} if the subgraphs cover the entire vertex set. The minimum size of a vertex cover in a graph~$G$ is denoted by~$\vc(G)$. To understand the applications of our general kernelization theorems to concrete problems, we need graph-theoretic concepts such as planarity and treewidth. As we do not need their formal definitions, we refer the reader to the textbook by Diestel~\cite{Diestel10} for further details. The following proposition will be useful in several occasions when applying our general theorems to the \FMinorFreeDeletion problem.

\begin{proposition} \label{proposition:boundedDegreeMinorModel}
If~$G$ contains~$H$ as a minor, then there is a subgraph~$G^* \subseteq G$ containing an $H$-minor such that~$\Delta(G^*) \leq \Delta(H)$ and~$|V(G^*)| \leq |V(H)| + \vc(G^*) \cdot (\Delta(H) + 1)$.
\end{proposition}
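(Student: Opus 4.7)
The plan is to start from any minor model $\phi \colon V(H) \to 2^{V(G)}$ of $H$ in $G$ and to prune it down to the desired subgraph. First I would fix, for each edge $uv \in E(H)$, a single representative cross-edge $e_{uv} \in E(G)$ between $\phi(u)$ and $\phi(v)$, and let $P_v \subseteq \phi(v)$ denote the set of endpoints of these representatives that lie in $\phi(v)$. For every $v \in V(H)$ I then take $T_v$ to be an inclusion-minimal connected subgraph of $G[\phi(v)]$ containing $P_v$ (or an arbitrary single vertex of $\phi(v)$ if $P_v = \emptyset$); since any cycle edge may be removed while preserving connectivity, $T_v$ is necessarily a tree. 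Finally set $G^* := \bigcup_{v \in V(H)} T_v \cup \{e_{uv} : uv \in E(H)\}$; by construction, the map $v \mapsto V(T_v)$ is a minor model of $H$ in $G^*$.

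The key step, which I expect to be the main obstacle, is the degree bound. Consider $w \in V(T_v)$ and write $\deg_{G^*}(w) = d_w + j_w$, where $d_w$ is the tree-degree of $w$ in $T_v$ and $j_w$ counts cross-edges $e_{uv}$ incident to $w$. The crucial observation is that by inclusion-minimality, each of the $d_w$ components of $T_v - w$ must contain an element of $P_v \setminus \{w\}$: otherwise the entire component, together with its edge to $w$, could be pruned while still leaving every portal connected, contradicting minimality. Hence $d_w \leq |P_v \setminus \{w\}|$. If $w \notin P_v$ then $j_w = 0$ and $d_w \leq |P_v| \leq \deg_H(v) \leq \Delta(H)$. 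If $w \in P_v$, then each of the other $|P_v|-1$ portals accounts for at least one of the $\deg_H(v)$ cross-edges, so $j_w \leq \deg_H(v) - (|P_v| - 1)$, which combined with $d_w \leq |P_v| - 1$ yields $d_w + j_w \leq \deg_H(v) \leq \Delta(H)$.

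For the vertex-count bound, let $X^*$ be a minimum vertex cover of $G^*$ and write $I := V(G^*) \setminus X^*$, an independent set of size $|V(G^*)| - \vc(G^*)$. I would split $I$ into the isolated vertices $I_0$ of $G^*$ and the non-isolated ones $I_+$. Each $w \in I_0$ must come from some $T_v$ that is a singleton with $P_v = \emptyset$, which forces $v$ to be isolated in $H$, so $|I_0| \leq |V(H)|$. Every $w \in I_+$ sends at least one edge into $X^*$, hence $|I_+| \leq |E(G^*)| \leq \sum_{x \in X^*} \deg_{G^*}(x) \leq \vc(G^*) \cdot \Delta(H)$, where the last inequality uses the degree bound already established. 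Summing gives $|V(G^*)| \leq \vc(G^*) + |V(H)| + \vc(G^*) \cdot \Delta(H) = |V(H)| + \vc(G^*) \cdot (\Delta(H) + 1)$, which is the required estimate.
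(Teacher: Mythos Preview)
Your proposal is correct and follows essentially the same approach as the paper: fix one representative cross-edge per edge of $H$, take an inclusion-minimal spanning tree of the portals in each branch set, and let $G^*$ be the union. Your degree argument (splitting into the cases $w\in P_v$ and $w\notin P_v$ and counting components of $T_v-w$) and your vertex-count argument (bounding $|I_0|$ by the isolated vertices of $H$ and $|I_+|$ by $\sum_{x\in X^*}\deg_{G^*}(x)$) match the paper's reasoning almost line for line; the only cosmetic difference is that the paper phrases the tree-degree bound via ``$T_v$ has at least $\ell$ leaves other than $w$'' rather than via components of $T_v-w$.
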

\begin{proof}
Let~$G$ be a graph containing a model~$\phi$ of a graph $H$. We show how to find a subgraph~$G^*$ satisfying the claims.

First, for every edge $uv\in E(H)$ mark an arbitrary edge of~$G$ between $\phi(u)$ and~$\phi(v)$. Then, in each branch set~$\phi(v)$ for~$v \in V(H)$ mark the edges of any inclusion-minimal tree~$T_v$ in~$G$ that contains all the vertices incident with edges marked in the first step. Moreover, for each isolated vertex~$v \in V(H)$ mark an arbitrary vertex in~$\phi(v)$. Now obtain~$G^*$ from~$G$ by deleting unmarked edges, and deleting unmarked vertices which are not incident with a marked edge. It is easy to verify that restricting~$\phi$ to~$G^*$ gives an $H$-model in~$G^*$. To see that~$\Delta(G^*) \leq \Delta(H)$, consider a vertex~$v \in V(G^*)$ and partition the edges incident with it into two types: those which were marked to build a tree~$T_u$ in a branch set~$\phi(u)$ for some~$u \in V(H)$, and those which connect two different branch sets. Suppose~$v$ is incident with~$\ell$ edges of the tree~$T_u$. Then~$T_u$ has at least~$\ell$ leaves other than $v$, and all these leaves connect~$\phi(u)$ to different branch sets. Observe that each connection to a different branch set corresponds to a distinct neighbor of~$u$ in~$H$. As~$u$ has at most~$\Delta(H)$ neighbors in~$H$, there are at most~$\Delta(H)$ connections between the branch set~$\phi(u)$ and other branch sets. Since at least~$\ell$ connections are made by leaves of~$T_u$ unequal to~$v$, edges incident with~$v$ can make at most~$\Delta(H) - \ell$ connections to other branch sets. As this accounts for all edges incident with~$v$ in~$G^*$ it follows that~$\deg_{G^*}(v) \leq \ell + (\Delta(H) - \ell)$. As~$v$ was arbitrary this proves~$\Delta(G^*) \leq \Delta(H)$. It remains to prove that~$|V(G^*)|$ is suitably small.

Let~$X \subseteq V(G^*)$ be a minimum vertex cover of~$G^*$. All isolated vertices in~$G^*$ correspond to isolated vertices in~$H$, so there are at most~$|V(H)|$ of them. The remaining vertices of~$G^*$ which do not belong to~$X$, have at least one neighbor in~$X$ (as~$X$ is a vertex cover and the vertices are not isolated). Since each vertex in~$X$ has degree at most~$\Delta(H)$, the total number of vertices in~$G^*$ is at most~$|V(H)| + |X| + \Delta(H) \cdot |X| \leq |V(H)| + |X| (\Delta(H) + 1)$, which proves the claim.
\myqed
\end{proof}

The following fact will be useful at various points in our proofs.

\begin{proposition} \label{proposition:vcpathcycle}
If a graph $G$ contains $P_t$ (resp. $C_t$) as a subgraph, then $\vc(G)\geq \lfloor t/2 \rfloor$ (resp. $\vc(G)\geq \lceil t/2 \rceil$).
\end{proposition}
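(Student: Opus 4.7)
The plan is to reduce the claim to a direct computation of the vertex cover number of $P_t$ and $C_t$, exploiting the monotonicity of $\vc$ under taking subgraphs. Concretely, if $H$ is a subgraph of $G$ and $X$ is a minimum vertex cover of $G$, then $X \cap V(H)$ covers every edge of $H$, so $\vc(H) \leq \vc(G)$. Thus it suffices to prove $\vc(P_t) \geq \lfloor t/2 \rfloor$ and $\vc(C_t) \geq \lceil t/2 \rceil$.

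For $P_t$, with vertices $v_1,\ldots,v_t$ along the path, the edges $\{v_{2i-1},v_{2i}\}$ for $i \in [\lfloor t/2\rfloor]$ form a matching of size $\lfloor t/2 \rfloor$. Since any vertex cover must contain at least one endpoint of each matching edge, $\vc(P_t) \geq \lfloor t/2 \rfloor$, and the path bound follows.

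For $C_t$ the matching argument is tight only when $t$ is even, so I would instead argue via the complementary independent set. If $X$ is a vertex cover of $C_t$ then $V(C_t) \setminus X$ is an independent set in $C_t$, and the maximum independent set in a cycle on $t$ vertices has size exactly $\lfloor t/2 \rfloor$ (choose every other vertex around the cycle; no independent set can avoid two consecutive vertices on each of the $t$ edges without exceeding this bound, as can be seen by a straightforward counting argument or by induction on~$t$). Hence $|X| \geq t - \lfloor t/2 \rfloor = \lceil t/2 \rceil$, giving the cycle bound.

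The only slightly delicate point is the odd cycle case, where a matching-based lower bound is off by one; the independent-set argument handles this cleanly. Everything else is a routine application of the fact that vertex cover is monotone under the subgraph relation.
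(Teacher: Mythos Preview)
Your proof is correct and follows the same approach as the paper: monotonicity of the vertex cover number under subgraphs, combined with the exact values $\vc(P_t)=\lfloor t/2\rfloor$ and $\vc(C_t)=\lceil t/2\rceil$. The paper simply asserts these two values, whereas you supply the matching and independent-set arguments that justify them.
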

\begin{proof}
The claim follows from the observations that the vertex cover number of a subgraph of $G$ cannot be larger than the vertex cover number of $G$, and that a path and a cycle on $t$ vertices have vertex cover numbers $\lfloor t/2 \rfloor$ and $\lceil t/2 \rceil$, respectively.
\end{proof}

\subsection{Problem Definitions} \label{section:compendium}
For completeness we provide a definition for the problems that we apply our general theorems to. We define the problems in the order in which they appear in the summary tables.

\subsubsection{Vertex-Deletion Problems}
The vertex-deletion problems in \tableref{table:DeletionToPiTable} (page~\pageref{table:DeletionToPiTable}) are defined as follows.

\parproblemdef
{\VertexCoverByVC}
{A graph~$G$ with a vertex cover~$X$, and an integer~$k \geq 1$.}
{The size~$|X|$ of the vertex cover.}
{Does~$G$ have a vertex cover of size at most~$k$, i.e., is there a set~$S \subseteq V(G)$ of size at most~$k$ such that~$G - S$ is an empty graph?}

\noindent Note that in the preceding problem, the given vertex cover $X$ may be suboptimal. Hence this can be interpreted as asking for the existence of a smaller vertex cover, when given some approximation.

\parproblemdef
{\OddCycleTransversalByVC}
{A graph~$G$ with a vertex cover~$X$, and an integer~$k \geq 1$.}
{The size~$|X|$ of the vertex cover.}
{Is there a set~$S \subseteq V(G)$ of size at most~$k$ such that~$G - S$ is bipartite?}

\parproblemdef
{\ChordalDeletionByVC}
{A graph~$G$ with a vertex cover~$X$, and an integer~$k \geq 1$.}
{The size~$|X|$ of the vertex cover.}
{Is there a set~$S \subseteq V(G)$ of size at most~$k$ such that~$G - S$ does not have chordless cycles?}

For any finite set of graphs~$\F$ we define the following parameterized problem.
\parproblemdef
{\FMinorFreeDeletionByVC}
{A graph~$G$ with a vertex cover~$X$, and an integer~$k \geq 1$.}
{The size~$|X|$ of the vertex cover.}
{Is there a set~$S \subseteq V(G)$ of size at most~$k$ such that~$G - S$ does not contain any graph in~$\F$ as a minor?}

\parproblemdef
{\PlanarizationByVC}
{A graph~$G$ with a vertex cover~$X$, and an integer~$k \geq 1$.}
{The size~$|X|$ of the vertex cover.}
{Is there a set~$S \subseteq V(G)$ of size at most~$k$ such that~$G - S$ is planar?}

\parproblemdef
{\EtaTransversalByVC}
{A graph~$G$ with a vertex cover~$X$, and an integer~$k \geq 1$.}
{The size~$|X|$ of the vertex cover.}
{Is there a set~$S \subseteq V(G)$ of size at most~$k$ such that~$G - S$ has treewidth at most~$\eta$?}

\subsubsection{Subgraph Problems}
The subgraph testing problems in \tableref{table:LargestInducedPiTable} (page~\pageref{table:LargestInducedPiTable}) are defined as follows.

\parproblemdef
{\LongCycleByVC}
{A graph~$G$ with a vertex cover~$X$, and an integer~$k \geq 1$.}
{The size~$|X|$ of the vertex cover.}
{Does~$G$ contain a simple cycle on at least~$k$ vertices?}

The \LongPathByVC problem is defined analogously, by asking for a path on at least~$k$ vertices. For any graph~$H$, we define the following packing problem.
\parproblemdef
{\HPackingByVC}
{A graph~$G$ with a vertex cover~$X$, and an integer~$k \geq 1$.}
{The size~$|X|$ of the vertex cover.}
{Does~$G$ contain at least~$k$ vertex-disjoint subgraphs isomorphic to~$H$?}

\noindent Observe that the well-known \TrianglePacking problem is the special case of the previous problem where~$H := K_3$.

\subsubsection{Partitioning Problems}
The vertex partitioning problems in \tableref{table:partitionPiTable} (page~\pageref{table:partitionPiTable}) are mostly self-explanatory. To preserve space, we only give one example to illustrate the idea.

\parproblemdef
{\PartitionIntoForestsByVC}
{A graph~$G$ with a vertex cover~$X$.}
{The size~$|X|$ of the vertex cover.}
{Is there a partition of the vertex set into~$q$ sets~$S_1 \cup S_2 \cup \ldots \cup S_q$ such that for each~$i \in [q]$ the subgraph of~$G$ induced by~$S_i$ is a forest?}

The value of~$q$ is treated as a constant in the definition. The other partitioning problems in \tableref{table:partitionPiTable} (page~\pageref{table:partitionPiTable}) are defined in the natural way by changing the restriction on the subgraphs induced by the partite sets.

\section{General Kernelization Theorems} \label{section:metatheorems}

\subsection{Characterization by Few Adjacencies}
In this section we introduce a general reduction rule for problems parameterized by vertex cover, and show that the rule preserves the existence of certain kinds of induced subgraphs. The central concept is the following.

\begin{definition} \label{definition:fewAdjacencies}
A graph property~$\Pi$ is \emph{characterized by~$c_\Pi \in \mathbb{N}$ adjacencies} if for all graphs~$G \in \Pi$, for every vertex~$v \in V(G)$, there is a set~$D \subseteq V(G) \setminus \{v\}$ of size at most~$c_\Pi$ such that all graphs~$G'$ which are obtained from~$G$ by adding or removing edges between~$v$ and vertices in~$V(G) \setminus D$, are also contained in~$\Pi$.
\end{definition}

\noindent The following proposition shows that various graph properties are characterized by few adjacencies.

\begin{proposition} \label{proposition:classesCharacterizedByFew}
The following properties are characterized by constantly many adjacencies: (for any fixed finite set~$\F$, graph~$H$, or~$\ell \geq 4$, respectively)
\begin{enumerate}
	\item Having a Hamiltonian path (resp.\ cycle) ($c_\Pi = 2$).\label{characterization:PathOrCycle}
        \item Having an odd cycle ($c_\Pi = 2$).\label{characterization:OddCycle}
	\item Containing~$H \in \F$ as a minor ($c_\Pi = \max_{H \in \F} \Delta(H)$).\label{characterization:FMinor}
	\item Having a perfect $H$-packing ($c_\Pi = \Delta(H)$).\label{characterization:HPacking}
	\item Having a chordless cycle of length at least~$\ell$ ($c_\Pi = \ell - 1$).\label{characterization:chordlessCycle}
\end{enumerate}
\end{proposition}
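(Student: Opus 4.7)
The plan is to handle each of the five items separately using a common template: given $G \in \Pi$ and $v \in V(G)$, I would identify a concrete witnessing substructure for membership in $\Pi$, and let $D$ consist of the vertices incident to $v$ within that substructure (plus possibly a few extras). Toggling edges between $v$ and the remaining vertices then leaves the substructure intact, so the modified graph $G'$ still lies in $\Pi$.

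For items~(\ref{characterization:PathOrCycle}), (\ref{characterization:OddCycle}), and~(\ref{characterization:HPacking}), I would fix respectively a Hamiltonian path/cycle, an odd cycle, or a perfect $H$-packing, and let $D$ be the set of $v$'s neighbors within the chosen witness. The witness then uses no edge between $v$ and $V(G)\setminus D$, so it survives every allowed modification. The bound $|D|\leq 2$ in~(\ref{characterization:PathOrCycle}) and~(\ref{characterization:OddCycle}) comes from the fact that every vertex on a path or a cycle has at most two neighbors on it, and the bound $|D|\leq \Delta(H)$ in~(\ref{characterization:HPacking}) comes from the degree of $v$ inside its unique $H$-copy. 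For~(\ref{characterization:OddCycle}), if $v$ does not belong to any odd cycle of $G$, I would fix any odd cycle $C$ of $G$ and take $D:=\emptyset$, because $v$'s edges are irrelevant to~$C$.

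For item~(\ref{characterization:FMinor}), I would invoke \proposref{proposition:boundedDegreeMinorModel} on a graph $H \in \F$ that $G$ contains as a minor, obtaining a subgraph $G^* \subseteq G$ with $\Delta(G^*)\leq \Delta(H)$ that still contains $H$ as a minor. Setting $D := N_{G^*}(v)$ (or $D:=\emptyset$ if $v\notin V(G^*)$) gives $|D|\leq \Delta(H)\leq \max_{H'\in \F}\Delta(H')$. Every edge of $G^*$ incident to $v$ goes into $D$, so toggling edges between $v$ and $V(G)\setminus D$ touches no edge of $G^*$; hence $G^* \subseteq G'$ and the $H$-minor persists in~$G'$.

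The main obstacle is item~(\ref{characterization:chordlessCycle}), because toggling edges incident to a vertex $v$ lying on a chordless cycle $C$ can simultaneously remove an edge of $C$ and create chords in sub-cycles. The plan is to fix a chordless cycle $C$ of length at least $\ell$ in $G$; if $v\notin V(C)$, take $D:=\emptyset$, since $C$ is unaffected. Otherwise, write $C=(v=v_0,v_1,\ldots,v_{k-1},v_0)$ with $k\geq \ell$ and set
\[
D := \{v_1, v_2, \ldots, v_{\ell-2}, v_{k-1}\},
\]
which has size at most $\ell-1$ and protects both cycle-edges $\{v,v_1\}$ and $\{v,v_{k-1}\}$. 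In the modified graph $G'$, if no new edge from $v$ to $\{v_{\ell-1},\ldots,v_{k-2}\}$ has appeared, then $C$ itself survives as a chordless cycle of length at least $\ell$. Otherwise, let $j^*$ be the smallest index in $\{\ell-1,\ldots,k-2\}$ with $\{v,v_{j^*}\}\in E(G')$, and consider the shortened cycle $(v,v_1,\ldots,v_{j^*},v)$, which has length $j^*+1\geq \ell$. Its chordlessness follows because every candidate chord is either a non-$v$ edge inherited from $G$ (impossible, as $C$ was chordless) or an edge $\{v,v_i\}$ with $2\leq i<j^*$; such an edge is either untouched with $v_i\in D$, hence a non-edge as in $G$, or a modified-to-edge whose index is strictly smaller than $j^*$, contradicting the choice of $j^*$.
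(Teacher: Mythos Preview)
Your proposal is correct and follows essentially the same approach as the paper's proof: for each item you fix a witnessing substructure and let $D$ consist of $v$'s neighbors within it (augmented by a few extra cycle-vertices in item~(\ref{characterization:chordlessCycle})), exactly as the paper does. Your treatment of item~(\ref{characterization:chordlessCycle}) matches the paper's argument up to indexing, and your explicit handling of the case $v\notin C$ in item~(\ref{characterization:OddCycle}) is slightly more careful than the paper's one-line reduction to item~(\ref{characterization:PathOrCycle}).
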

\begin{proof}
We prove the claims one by one.

\condref{characterization:PathOrCycle} First consider the property of being Hamiltonian. Take a graph~$G$ with a Hamiltonian cycle~$C$, and consider an arbitrary vertex~$v$ in~$G$. Let~$D$ contain the predecessor and successor of~$v$ on the cycle. Then it is easy to see that changing the presence of edges between~$v$ and~$V(G) \setminus D$, preserves the cycle~$C$. Hence by \defref{definition:fewAdjacencies} this proves that the property of Hamiltonicity is characterized by two adjacencies. As the length of the cycle is not affected, the same proof goes for the property of having an odd cycle, i.e., the property \condref{characterization:OddCycle}. The proof for the property of having a Hamiltonian path is similar; for the endpoints we only have to preserve a single adjacency.

\condref{characterization:FMinor} Let~$\F$ be a finite set of graphs. Let~$G$ contain~$H' \in \F$ as a minor, and let~$v \in V(G)$ be an arbitrary vertex. We give a set~$D \subseteq V(G) \setminus \{v\}$ of size at most~$\max _{H \in \F} \Delta(H)$ such that changing the adjacencies between~$v$ and~$V(G) \setminus D$ preserves the fact that~$G$ has an $H'$-minor. By \proposref{proposition:boundedDegreeMinorModel} a subgraph~$G^*$ of~$G$ with maximum degree at most~$\Delta(H')$ exists, which has an $H'$-minor model~$\phi$. If~$v$ is not contained in graph~$G^*$, then changing the presence of edges incident with~$v$ preserves the minor model~$\phi$ in~$G$. If~$v$ is contained in~$G^*$, then pick~$D := N_{G^*}(v)$ which has size at most~$\Delta(H')$ by the degree bound of~$G^*$ guaranteed by the proposition. Changing adjacencies between~$v$ and~$V(G) \setminus D$ preserves the fact that~$G^*$ is a subgraph of~$G$, and therefore preserves the fact that~$G$ has~$H'$ as a minor; this implies membership in~$\Pi$.

\condref{characterization:HPacking} Fix a graph~$H$ and let~$G$ be a graph with a perfect $H$-packing. For an arbitrary vertex~$v \in V(G)$, consider a perfect $H$-packing in~$G$ and let~$G'$ be the subgraph in the packing which contains~$v$. Picking~$D := N_{G'}(v)$ it follows that~$|D| \leq \Delta(H)$. Changing adjacencies between~$v$ and~$V(G) \setminus D$ in~$G$ preserves the perfect $H$-packing we started from, as all edges incident with~$v$ needed to make the subgraph~$G'$ isomorphic to~$H$ are maintained. Hence the graph resulting from such modifications has a perfect $H$-packing and is contained in~$\Pi$.

\condref{characterization:chordlessCycle} Let~$G$ be a graph with a chordless cycle~$C$ of length at least~$\ell$, and let~$v$ be an arbitrary vertex. If~$v$ does not lie on~$C$ then changing the presence of edges incident with~$v$ preserves~$C$ and results in a graph with a chordless cycle of length at least~$\ell$. Suppose therefore that~$v$ lies on~$C$, and label the vertices on~$C$ as~$(v, v_2, \ldots, v_k)$ for some~$k \geq \ell$. Define~$D := \{v_2, \ldots, v_{\ell - 1}\} \cup \{v_k\}$, i.e.,~$D$ contains the predecessor of~$v$ and its~$\ell - 2$ successors. Now let~$G'$ be obtained from~$G$ by changing the adjacency between~$v$ and~$V(G) \setminus D$. We prove that~$G'$ has a chordless cycle of length at least~$\ell$. Let~$i$ be the smallest integer larger than two such that~$v$ is adjacent to~$v_i$ in~$G'$. As we explicitly preserved the edge from~$v$ to~$v_k$, this is well-defined. Because the vertices~$\{v_2, \ldots, v_{\ell-1}\}$ are contained in~$D$ we know that~$i > \ell-1$ because~$C$ is chordless. Since the only edges that were modified when moving from~$G$ to~$G'$ are incident with~$v$, it follows from the choice of~$i$ that~$(v, v_2, v_3, \ldots, v_i)$ is a chordless cycle in~$G'$ of length at least~$\ell$; this completes the proof.
\myqed
\end{proof}

We give some non-examples to aid the intuition. The properties of having chromatic number at least four, of being a cycle, or of \emph{not} being a perfect graph, cannot be characterized by a constant number of adjacencies. To see this for graphs of chromatic number at least four, consider an odd wheel with a rim of length~$t$: this is the graph built from an odd cycle~$C_t$ by adding a new vertex~$x$, the \emph{hub}, that is adjacent to all vertices of the cycle. As an odd cycle requires three colors in a proper coloring, the adjacency of the hub to all other vertices increases the chromatic number to four. Now observe that removing any edge between the hub and the cycle decreases the chromatic number to three, as the two endpoints of that edge can then share the same color. Hence any vertex set~$D$ that preserves the fact that the chromatic number is at least four, when changing adjacencies between~$x$ and vertices not in~$D$, must contain all vertices of the cycle. Consequently, such sets cannot have constant size: having chromatic number at least four is not characterized by a finite number of adjacencies. Similar constructions can be made for the properties of being a cycle, and for imperfectness. 

Before introducing the reduction rule that is based on characterizations by few adjacencies, we prove that the existence of such characterizations is closed under union and intersection.

\begin{proposition} \label{proposition:closure}
Let~$\Pi$ and~$\Pi'$ be graph properties characterized by~$c_{\Pi}$ and~$c_{\Pi'}$ adjacencies, respectively. The following holds:
\begin{enumerate}
	\item The property~$\Pi \cup \Pi'$ is characterized by~$\max (c_{\Pi}, c_{\Pi'})$ adjacencies. \label{closure:union}
	\item The property~$\Pi \cap \Pi'$ is characterized by~$c_{\Pi} + c_{\Pi'}$ adjacencies. \label{closure:intersection}
\end{enumerate}
\end{proposition}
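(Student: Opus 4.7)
The plan is to unfold \defref{definition:fewAdjacencies} for the hypotheses on $\Pi$ and $\Pi'$, and then, for an arbitrary graph $G$ in the combined property and arbitrary vertex $v \in V(G)$, construct a suitable witness set $D$ from the sets promised by the two hypotheses. For both parts, only the union or sum of these witness sets will be used, so the argument is purely set-theoretic once the definitions are laid out; no graph-theoretic structure beyond the definition itself is needed.

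For part \condref{closure:union}, I would take any $G \in \Pi \cup \Pi'$ and any $v \in V(G)$, and argue by cases. If $G \in \Pi$, then by the hypothesis on $\Pi$ there is a set $D \subseteq V(G) \setminus \{v\}$ with $|D| \leq c_\Pi \leq \max(c_\Pi, c_{\Pi'})$ such that any modification of the adjacencies between $v$ and $V(G) \setminus D$ yields a graph still in $\Pi$, hence in $\Pi \cup \Pi'$. The case $G \in \Pi'$ is symmetric, giving a witness of size at most $c_{\Pi'} \leq \max(c_\Pi, c_{\Pi'})$. This shows $\Pi \cup \Pi'$ is characterized by $\max(c_\Pi, c_{\Pi'})$ adjacencies.

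For part \condref{closure:intersection}, given $G \in \Pi \cap \Pi'$ and $v \in V(G)$, I would apply the hypothesis separately to $\Pi$ and $\Pi'$ to obtain sets $D_\Pi, D_{\Pi'} \subseteq V(G) \setminus \{v\}$ of sizes at most $c_\Pi$ and $c_{\Pi'}$, respectively, each preserving membership in the respective property under arbitrary edge modifications incident with $v$ outside that set. Then I would set $D := D_\Pi \cup D_{\Pi'}$, which has size at most $c_\Pi + c_{\Pi'}$. The key observation is that $V(G) \setminus D \subseteq V(G) \setminus D_\Pi$ and $V(G) \setminus D \subseteq V(G) \setminus D_{\Pi'}$, so any modification of adjacencies between $v$ and $V(G) \setminus D$ is in particular a modification of the kind tolerated by both witness sets. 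Hence the resulting graph lies in $\Pi$ and in $\Pi'$, i.e.\ in $\Pi \cap \Pi'$.

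There is no substantial obstacle here; the proof is essentially a bookkeeping exercise with the quantifiers of \defref{definition:fewAdjacencies}. The only mild subtlety is making explicit, in the intersection case, the set-containment $V(G) \setminus D \subseteq V(G) \setminus D_\Pi$ (and analogously for $\Pi'$), which is what lets the same modification simultaneously fall under both hypotheses. I would keep the proof to a short paragraph per part.
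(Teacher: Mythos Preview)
Your proposal is correct and matches the paper's proof essentially line for line: the union case is handled by a case split on which of~$\Pi$ or~$\Pi'$ contains~$G$, and the intersection case by taking the union of the two witness sets and observing that modifications outside this union are tolerated by both. There is nothing to add.
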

\begin{proof}
We prove the two items separately.

\condref{closure:union} Let~$G$ be a graph in~$\Pi \cup \Pi'$, and let~$v$ be an arbitrary vertex in~$G$. We have to find a set~$D$ of size at most~$\max (c_{\Pi}, c_{\Pi'})$ that satisfies the conditions of \defref{definition:fewAdjacencies} with respect to~$v$. If~$G \in \Pi$ then the characterization of~$\Pi$ by~$c_{\Pi}$ adjacencies guarantees the existence of a set~$D \subseteq V(G) \setminus \{v\}$ of size at most~$c_{\Pi}$ such that changing adjacencies between~$v$ and~$V(G) \setminus D$ preserves membership in~$\Pi$, and hence in the union~$\Pi \cup \Pi'$. If~$G \in \Pi'$ we similarly find a set of size at most~$c_{\Pi'}$ that preserves membership in~$\Pi'$ and therefore in the union. In either case we find a set of size at most~$\max(c_{\Pi}, c_{\Pi'})$ that satisfies the conditions of \defref{definition:fewAdjacencies}, establishing the characterization of~$\Pi \cup \Pi'$.

\condref{closure:intersection} Let~$G$ be a graph in~$\Pi \cap \Pi'$, and let~$v$ be an arbitrary vertex in~$G$. Let~$D \subseteq V(G) \setminus \{v\}$ be a set of size at most~$c_\Pi$ that preserves membership in~$\Pi$, and let~$D' \subseteq V(G) \setminus \{v\}$ be a set of size at most~$c_{\Pi'}$ preserving membership in~$\Pi'$. Now consider~$D^* := D \cup D'$. Changing adjacencies between~$v$ and~$V(G) \setminus D^*$ preserves membership in~$\Pi$ (since~$D^*$ contains~$D$), and preserves membership in~$\Pi'$ (as~$D^*$ contains~$D'$). Hence the set~$D^*$ of size at most~$c_\Pi + c_{\Pi'}$ preserves membership in the intersection~$\Pi \cap \Pi'$, which proves the claim.
\myqed
\end{proof}

The closure property of \proposref{proposition:closure} can be used to quickly establish that a graph class is characterized by a constant number of adjacencies. Note that for a graph class~$\Pi$ that is characterized by few adjacencies, it may be impossible to characterize its complement~$\overline{\Pi}$ in this way. As a concrete example, consider the graphs~$\Pi$ with at least one edge: these are characterized by one adjacency, but it is easy to see that the graphs~$\overline{\Pi}$ without any edges may need arbitrarily many adjacencies to characterize. Also observe that any finite graph property~$\Pi$ is trivially characterized by~$\max _{G \in \Pi} |V(G)| - 1$ adjacencies (for~$G \in \Pi$ and~$v \in V(G)$, choose~$D$ as~$V(G) \setminus \{v\}$). This will be useful to verify the preconditions to the general kernelization theorems.

The single reduction rule that we use to derive our general kernelization theorems, is the \Reduce procedure presented as \algref{algorithm:reduce}. Its utility for kernelization stems from the fact that it efficiently shrinks a graph to a size bounded polynomially in the cardinality of the given vertex cover~$X$.

\begin{algorithm}[t]
\caption{\Reduce$(\mbox{Graph } G, \mbox{vertex cover } X \subseteq V(G), \ell \in \mathbb{N}, c_\Pi \in \mathbb{N})$}
\begin{algorithmic}
\FOREACH{$Y \in \binom{X}{\leq c_\Pi}$ and partition of~$Y$ into~$Y^+ \cup Y^-$}
		\STATE let~$Z$ be the vertices in~$V(G) \setminus X$ adjacent to all of~$Y^+$ and to none of~$Y^-$
		\STATE mark~$\ell$ arbitrary vertices from~$Z$ (if~$|Z| < \ell$ then mark all of them)
\ENDFOR
\STATE delete from~$G$ all unmarked vertices that are not contained in~$X$
\end{algorithmic}
\label{algorithm:reduce}
\end{algorithm}

\begin{observation} \label{observation:reductionEffect}
For every fixed constant~$c_\Pi$, \Reduce$(G, X, \ell, c_\Pi)$ runs in polynomial time and results in a graph on~$\Oh(|X| + \ell \cdot 2^{c_\Pi} \cdot |\binom{X}{\leq c_\Pi}|) = \Oh(|X| + \ell \cdot |X|^{c_\Pi})$ vertices.
\end{observation}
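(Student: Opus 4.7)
The plan is to analyze both the running time and the output size directly from the pseudocode of \Reduce, treating $c_\Pi$ as a constant throughout.

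First I would bound the number of iterations of the outer loop. The subsets $Y \in \binom{X}{\leq c_\Pi}$ number at most $\binom{|X|}{\leq c_\Pi} = \Oh(|X|^{c_\Pi})$ when $c_\Pi$ is a fixed constant, and each such $Y$ admits at most $2^{|Y|} \leq 2^{c_\Pi}$ partitions into $(Y^+, Y^-)$. Thus the loop body is executed $\Oh(2^{c_\Pi} \cdot |X|^{c_\Pi})$ times, which is polynomial in $|X|$. Each execution of the body requires (i) computing the set $Z$ of vertices in $V(G) \setminus X$ that are adjacent to all of $Y^+$ and none of $Y^-$, and (ii) marking $\ell$ vertices of $Z$; both steps take time polynomial in $|V(G)|$, so the overall running time is polynomial.

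For the size bound, observe that the vertices retained in the final graph are exactly (a) the vertices of $X$, of which there are $|X|$, and (b) the vertices in $V(G) \setminus X$ that got marked in at least one iteration. In each of the $\Oh(2^{c_\Pi} \cdot |\binom{X}{\leq c_\Pi}|)$ iterations at most $\ell$ vertices outside $X$ are marked, so the number of marked non-cover vertices is at most $\ell \cdot 2^{c_\Pi} \cdot |\binom{X}{\leq c_\Pi}|$. Adding the $|X|$ cover vertices yields the first expression in the stated bound.

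Finally, since $c_\Pi$ is treated as a constant, both $2^{c_\Pi}$ and the hidden constants absorbed into $\binom{|X|}{\leq c_\Pi} = \Oh(|X|^{c_\Pi})$ are constants, which simplifies the bound to $\Oh(|X| + \ell \cdot |X|^{c_\Pi})$ as claimed. There is no real obstacle here: the statement is essentially a counting observation about the loop structure, and the only subtlety is recalling that the constant~$c_\Pi$ is allowed to hide both the exponential factor $2^{c_\Pi}$ and the constant in $\Oh(|X|^{c_\Pi})$ when passing to the final asymptotic expression.
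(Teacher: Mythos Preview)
Your proposal is correct and follows exactly the natural counting argument one would make here; the paper itself states this as an observation without proof, so there is nothing further to compare against.
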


The soundness of the \Reduce procedure for many types of kernelization comes from the following lemma. It shows that for graph properties~$\Pi$ that are characterized by few adjacencies, an application of \Reduce with parameter~$\ell = s + p$ preserves the existence of induced~$\Pi$ subgraphs of size up to~$p$ that avoid any set of size at most~$s$.

\begin{lemma} \label{lemma:reductionPreservesPi}
Let~$\Pi$ be characterized by~$c_\Pi$ adjacencies, and let~$G$ be a graph with vertex cover~$X$. If~$G[P] \in \Pi$ for some~$P \subseteq V(G) \setminus S$ and~$S \subseteq V(G)$, then for any~$\ell \geq |S| + |P|$ the graph~$G'$ resulting from \Reduce$(G, X, \ell, c_\Pi)$ contains~$P' \subseteq V(G') \setminus S$ such that~$G'[P'] \in \Pi$ and~$|P'| = |P|$.
\end{lemma}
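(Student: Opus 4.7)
My plan is to argue by induction on $|R|$, where $R := P \setminus V(G')$ counts the vertices of $P$ destroyed by \Reduce. The base case $R = \emptyset$ is immediate: take $P' := P$, which lies entirely in $V(G')$, satisfies $P' \cap S = \emptyset$ by assumption on $P$, and gives $G'[P'] = G[P] \in \Pi$. For the inductive step, pick any $v \in R$ and observe that $v \notin X$, since \Reduce never removes vertices of the given vertex cover.

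To replace $v$, apply \defref{definition:fewAdjacencies} to $G[P] \in \Pi$ and the vertex $v$: this yields a witness set $D \subseteq P \setminus \{v\}$ with $|D| \leq c_\Pi$ such that toggling any edges between $v$ and $P \setminus (D \cup \{v\})$ preserves membership in $\Pi$. Set $Y := D \cap X$, $Y^+ := Y \cap N_G(v)$, and $Y^- := Y \setminus Y^+$; then $v$ belongs to the set $Z$ considered by \Reduce for the pair $(Y^+, Y^-)$. Because $v$ was not marked (otherwise it would have survived), necessarily $|Z| > \ell$ and exactly $\ell$ vertices of $Z$ got marked. Since $|S \cup (P \setminus \{v\})| \leq |S| + |P| - 1 \leq \ell - 1$, some marked vertex $v' \in Z$ avoids $S \cup (P \setminus \{v\})$. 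Defining $P^* := (P \setminus \{v\}) \cup \{v'\}$, one checks that $|P^*| = |P|$, $P^* \cap S = \emptyset$, and $P^* \setminus V(G')$ has one fewer vertex than $R$, so the induction hypothesis with the same $\ell$ (which still satisfies $\ell \geq |S| + |P^*|$) will close the argument once we know $G[P^*] \in \Pi$.

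The crux is therefore to verify $G[P^*] \in \Pi$. My plan here is to introduce an auxiliary graph $\hat{H}$ on vertex set $P$ identical to $G[P]$ except that $v$'s adjacencies to $P \setminus (D \cup \{v\})$ are rewritten to mirror $v'$'s adjacencies in $G$; the characterization then gives $\hat{H} \in \Pi$. I would then show that the bijection $\phi \colon P \to P^*$ sending $v \mapsto v'$ and fixing $P \setminus \{v\}$ pointwise is a graph isomorphism from $\hat{H}$ to $G[P^*]$. The nontrivial check concerns edges incident with $v$: on $Y = D \cap X$, vertices $v$ and $v'$ have identical neighborhoods because both lie in $Z$; on $D \setminus X$ both are non-neighbors since $V(G) \setminus X$ is independent; on $P \setminus (D \cup \{v\})$ the edges of $\hat{H}$ at $v$ were defined precisely to match those of $v'$. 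This isomorphism step is the main subtlety, since the characterization is phrased as in-place edge toggling inside one fixed induced subgraph whereas what we need is to swap $v$ for a genuinely different vertex $v'$ of $G$; the bridge is that non-cover vertices are mutually independent, so adjacency to $D \setminus X$ is automatic for any two such vertices, and only adjacency to $D \cap X$ has to be matched, which is exactly the selection rule encoded by \Reduce via the subsets $(Y^+, Y^-)$.
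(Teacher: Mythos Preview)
Your proof is correct and follows essentially the same approach as the paper's own proof. The paper orders the deleted vertices of~$P$ as~$p_1,\ldots,p_t$ and explicitly builds a sequence~$P_0=P,P_1,\ldots,P_t$ by swapping each~$p_i$ for a marked vertex~$p_i'$ with the same neighborhood in~$D\cap X$, whereas you phrase the same swap-one-vertex argument as an induction on~$|P\setminus V(G')|$; the replacement step and the use of the $c_\Pi$-adjacency characterization are identical, and your explicit construction of the auxiliary graph~$\hat H$ and the isomorphism~$\phi$ spells out in more detail exactly what the paper compresses into the sentence ``a graph isomorphic to~$G[P_i]$ can be obtained from~$G[P_{i-1}]$ by changing adjacencies between~$p_i$ and vertices of~$P_{i-1}\setminus D$.''
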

\begin{proof}
Assume the conditions in the lemma statement hold, and let~$R \subseteq V(G)$ be the vertices that are removed by the reduction procedure, i.e.,~$R := V(G) \setminus V(G')$. Let $p_1, p_2, \ldots, p_t$ be an arbitrary ordering of $P \cap R$. We inductively create a sequence of sets $P_0, P_1, \ldots, P_t$ with~$P = P_0$ such that (a) $G[P_i] \in \Pi$, (b) $|P_i|=|P|$, (c) $P_i \cap S = \emptyset$, and (d) $P_i \cap R =\{p_{i+1},p_{i+2},\ldots,p_t\}$ for every $i \in \{0,1,\ldots,t\}$. Note that~$P$ satisfies the constraints imposed on~$P_0$, while existence of~$P_t$ proves the lemma. Hence, we only need to show how to construct~$P_i$ from~$P_{i-1}$ for~$i \in [t]$.

Consider graph $G[P_{i-1}]$ and vertex $p_i \in P_{i-1}$. As $G[P_{i-1}]\in \Pi$, \defref{definition:fewAdjacencies} ensures that there exists a set~$D$ of at most~$c_\Pi$ vertices of~$P_{i-1}$ such that arbitrarily changing adjacencies between~$p_i$ and vertices of~$P_{i-1}\setminus D$ in~$G[P_{i-1}]$ preserves membership in~$\Pi$. Let~$D^+ := N_G(p_i) \cap D$ and $D^- := D \setminus D^+$. Since vertex~$p_i$ is contained in~$R$ and was removed by the reduction process, it follows from the deletion procedure that~$p_i \not \in X$ and therefore that~$D^+ \subseteq N_G(p_i) \subseteq X$ since~$X$ is a vertex cover of~$G$. Let~$D^-_X := D^- \cap X$. Observe that~$p_i$ was a candidate for marking for the partition~$(D^+,D^-_X)$ of~$D \cap X$, but as~$p_i \in R$ it was not marked. Hence, there exist~$\ell \geq |S| + |P|$ marked vertices in~$V(G) \setminus X$ adjacent to all of~$D^+$ and none of~$D^-_X$. As~$|P_{i-1}|=|P|$ and~$p_i$ is not marked, we can find a vertex~$p_i' \in V(G) \setminus X$ that does not belong to~$P_{i-1}$ or~$S$, is marked, and has the same neighborhood in~$D \cap X$ as~$p_i$. Since~$X$ is a vertex cover, both~$p_i$ and~$p'_i$ have all their neighbors in~$X$. As~$p'_i$ is not adjacent to any member of~$D^-_X$, it is not adjacent to~$D^-$. Take~$P_i := (P_{i-1}\cup\{p_i'\})\setminus\{p_i\}$. Note that $|P_i|=|P_{i-1}|=|P|$ and $P_i\cap R=\{p_{i+1},p_{i+2},\ldots,p_t\}$. Moreover, a graph isomorphic to $G[P_i]$ can be obtained from $G[P_{i-1}]$ by changing adjacencies between~$p_i$ and vertices of~$P_{i-1}\setminus D$. The only adjacencies that need to be changed are between~$p_i$ and~$N_G(p_i) \triangle N_G(p_i')\subseteq X$ ($\triangle$ denotes symmetric difference), but this set is disjoint with~$D$ and hence the changes preserve membership in~$\Pi$. As~$P_i$ satisfies all induction claims, this completes the proof.
\myqed
\end{proof}

\subsection{Kernelization for Vertex-Deletion Problems}

Let $\Pi$ be a graph property. 
We present a general theorem providing polynomial kernels for vertex-deletion problems of the following form. 
\parproblemdef
{\DeletionToPiFreeVC}
{A graph~$G$ with a vertex cover~$X$, and an integer~$k \geq 1$.}
{The size~$|X|$ of the vertex cover.}
{Is there a set~$S \subseteq V(G)$ of size at most~$k$ such that~$G - S$ does not contain a graph in~$\Pi$ as an induced subgraph?}

\noindent Observe that~$\Pi$ need not be finite or decidable. The condition that a vertex cover is given along with the input is present for technical reasons; to apply the data reduction schemes presented in this paper, one may simply compute a 2-approximate vertex cover and use that as~$X$.

\begin{theorem} \label{theorem:deletionToPiVC}
If~$\Pi$ is a graph property such that:
\begin{enumerate}[(i)]
	\item $\Pi$ is characterized by~$c_\Pi$ adjacencies, \label{property:deletionToPiFree:adjacencies}
	\item every graph in~$\Pi$ contains at least one edge, and  \label{property:deletionToPiFree:oneEdge}
	\item there is a non-decreasing polynomial~$p \colon \mathbb{N} \to \mathbb{N}$ such that all graphs~$G$ that are vertex-minimal with respect to~$\Pi$ satisfy~$|V(G)| \leq p(\vc(G))$, \label{property:deletionToPiFree:vcbound}
\end{enumerate}
then \DeletionToPiFreeVC has a kernel with~$\Oh((x + p(x)) x^{c_\Pi})$ vertices, where~$x := |X|$.
\end{theorem}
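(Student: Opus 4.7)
The plan is to apply \Reduce with an appropriately chosen threshold and then use the three hypotheses to prove the equivalence of the reduced and original instances. First I would dispose of the trivial case $k \geq |X|$: since $X$ is a vertex cover, $G - X$ is edgeless, and by hypothesis~\condref{property:deletionToPiFree:oneEdge} every graph in $\Pi$ has at least one edge, so $X$ itself is a feasible deletion set and the instance is a yes-instance. In this case output a fixed constant-size yes-instance. So assume from now on $k < |X| =: x$, set $\ell := k + p(x)$, and run $G' \gets \Reduce(G, X, \ell, c_\Pi)$; since \Reduce only deletes vertices outside $X$, the set $X$ is still a vertex cover of $G'$. By \obsref{observation:reductionEffect} the graph $G'$ has $\Oh(x + \ell \cdot x^{c_\Pi}) = \Oh((x + p(x)) \cdot x^{c_\Pi})$ vertices, matching the claimed bound, and the procedure runs in polynomial time.

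It remains to show that $(G, X, k)$ and $(G', X, k)$ have the same answer. The forward direction is immediate: if $S \subseteq V(G)$ of size at most $k$ satisfies that $G - S$ contains no induced subgraph from $\Pi$, then $S \cap V(G')$ works for $G'$, because $G' - (S \cap V(G'))$ is an induced subgraph of $G - S$ and the property of being $\Pi$-induced-subgraph-free is inherited by induced subgraphs.

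The nontrivial converse is the main technical obstacle. Suppose $S' \subseteq V(G')$ with $|S'| \leq k$ is a solution for $G'$, but $G - S'$ nevertheless contains some $P \subseteq V(G) \setminus S'$ with $G[P] \in \Pi$. Since $\Pi$ may contain arbitrarily large graphs, \lemmaref{lemma:reductionPreservesPi} cannot be applied to $P$ directly. The key step is to pass to a bounded-size witness: choose $P^* \subseteq P$ inclusion-minimal with $G[P^*] \in \Pi$. Then $G[P^*]$ is vertex-minimal w.r.t.\ $\Pi$, and since $X \cap P^*$ is a vertex cover of $G[P^*]$ and $p$ is non-decreasing, hypothesis~\condref{property:deletionToPiFree:vcbound} gives $|P^*| \leq p(\vc(G[P^*])) \leq p(|X \cap P^*|) \leq p(x)$. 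Now $|S'| + |P^*| \leq k + p(x) = \ell$ and $P^* \cap S' = \emptyset$, so \lemmaref{lemma:reductionPreservesPi} (which relies on hypothesis~\condref{property:deletionToPiFree:adjacencies}) produces $P^{**} \subseteq V(G') \setminus S'$ with $G'[P^{**}] \in \Pi$, contradicting the assumption that $S'$ was a solution for $G'$. This proves the required equivalence.

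The conceptual heart of the argument, and the reason all three hypotheses appear, lies in the converse direction: \condref{property:deletionToPiFree:vcbound} lets us replace an arbitrary large $\Pi$-witness by a polynomially bounded one (controlled by the global vertex cover), \condref{property:deletionToPiFree:adjacencies} is what powers \lemmaref{lemma:reductionPreservesPi} so that the shrunken witness can be relocated into $V(G')$, and \condref{property:deletionToPiFree:oneEdge} takes care of the degenerate regime $k \geq |X|$ where the reduction otherwise has nothing to grab onto.
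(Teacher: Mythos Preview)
Your proof is correct and follows essentially the same route as the paper: dispose of $k \geq |X|$ using \condref{property:deletionToPiFree:oneEdge}, run \Reduce with threshold $\ell = k + p(|X|)$, and for the nontrivial direction pass to a vertex-minimal $\Pi$-witness bounded via \condref{property:deletionToPiFree:vcbound} before invoking \lemmaref{lemma:reductionPreservesPi}. The only cosmetic difference is that you pick $P^*$ as a minimal subset of a given $P$ while the paper directly picks a minimal $P$, which is the same thing.
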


\noindent Before proving the theorem, we briefly discuss its preconditions. Let us first show the necessity of \condref{property:deletionToPiFree:oneEdge} by considering the property~$\Pi$ only consisting of the two-vertex graph without an edge. Then a graph~$G$ is a clique if and only if it does not contain the graph in~$\Pi$ as an induced subgraph, and hence a graph~$G$ has a clique of size at least~$k$ if and only if we can delete at most~$|V(G)| - k$ vertices from~$G$ to make it induced-$\Pi$-free. Observe that~$\Pi$ is characterized by a single adjacency and trivially satisfies \condref{property:deletionToPiFree:vcbound} for~$p(n) = 2$. But \Clique parameterized by vertex cover does not admit a polynomial kernel unless \containment~\cite{BodlaenderJK11}, which explains why \condref{property:deletionToPiFree:oneEdge} is necessary.

To justify \condref{property:deletionToPiFree:adjacencies}, consider the class~$\Pi$ containing the odd holes and odd anti-holes (induced cycles of odd length at least five, and their edge-complements). It is easy to verify that this~$\Pi$ satisfies conditions~\condref{property:deletionToPiFree:oneEdge} and~\condref{property:deletionToPiFree:vcbound}. Now observe that~$G$ has vertex-deletion distance at most~$k$ to property~$\Pi$ if and only if~$G$ can be made perfect by~$k$ vertex deletions, and that the kernelization complexity of \PerfectDeletion parameterized by vertex cover is still open. 

The third condition demands that the size of vertex-minimal graphs in~$\Pi$ is bounded polynomially in their vertex cover number. The condition is needed to make the proof go through. Observe that the restriction to a \emph{polynomial} function in the condition is crucial, as the existence of a (possibly exponential) function is trivial. For any graph property~$\Pi$, the existence of a function~$g \colon \mathbb{N} \to \mathbb{N}$ such that all graphs~$G \in \Pi$ have an induced subgraph~$G' \subseteq G$ contained in~$\Pi$ with~$|V(G')| \leq g(\vc(G'))$ is guaranteed by the fact that graphs of bounded vertex cover are well-quasi-ordered by the induced subgraph relation~\cite{FellowsHR12}.\footnote{Given~$\Pi$, let~$\Pi_n$ be the vertex-minimal graphs in~$\Pi$ with vertex cover number exactly~$n$. The well-quasi-ordering ensures that~$\Pi_n$ is finite; choose~$g(n) := \max _{G \in \Pi_n} |V(G)|$.}

Having justified the preconditions to our general theorem, we give its proof.

\begin{table}[t]
	\centering
{
\small % NOTE: Decreasing size.
\begin{tabular}{@{}lll@{}}
\toprule
Problem & Forbidden property~$\Pi$ & $c_\Pi$ \\ \midrule
\VertexCover & $\{K_2\}$ & $1$ \\
\OddCycleTransversal & Graphs containing an odd cycle & $2$ \\
\ChordalDeletion & Graphs with a chordless cycle & $3$ \\
\FMinorFreeDeletion & Graphs with an~$H \in \F$-minor & $\max_{H \in \F} \Delta(H)$ \\
\Planarization & Graphs with a~$K_5$ or~$K_{3,3}$ minor & $4$ \\
\EtaTransversal & Graphs of treewidth~$> \eta$ & $f(\eta)$ \\ \bottomrule
\end{tabular}
	\caption{Problems that admit polynomial kernels when parameterized by the size of a given vertex cover, by applying \thmref{theorem:deletionToPiVC}.}
\label{table:DeletionToPiTable}
}
\end{table}

\begin{proof}[Proof of \thmref{theorem:deletionToPiVC}]
Consider some input instance~$(G,X, k)$. Firstly, observe that if~$k\geq |X|$, then we clearly have a \yes-instance: removal of~$X$ results in an edgeless graph, which is guaranteed not to contain induced subgraphs from~$\Pi$ due to \propref{property:deletionToPiFree:oneEdge}. Therefore, we may assume that~$k<|X|$ as otherwise we output a trivial \yes-instance.

We let~$G'$ be the result of \Reduce$(G, X, k + p(|X|), c_\Pi)$ and return the instance~$(G', X, k)$, which gives the right running time and size bound by \obsref{observation:reductionEffect}. We need to prove that the output instance~$(G',X,k)$ is equivalent to the input instance~$(G,X,k)$. As $G'$ is an induced subgraph of~$G$, it follows that if~$G-S$ does not contain any graph in~$\Pi$, then neither does~$G'-(S\cap V(G'))$. Therefore, if~$(G,X,k)$ is a \yes-instance, then so is~$(G',X,k)$. Assume then, that~$(G',X,k)$ is a \yes-instance and let~$S$ be a subset of vertices with $|S|\leq k$ such that $G'-S$ does not contain any induced subgraph from~$\Pi$. We claim that~$G-S$ does not contain such induced subgraphs either, i.e., that~$S$ is also a feasible solution for the instance~$(G,X,k)$.

Assume for the sake of contradiction that there is a set~$P \subseteq V(G) \setminus S$ such that~$G[P] \in \Pi$. Consider a \emph{minimal} such set~$P$,  which ensures by \propref{property:deletionToPiFree:vcbound} that~$|P| \leq p(\vc(G[P]))$. As~$P \cap X$ is a vertex cover of~$G[P]$, it follows that $|P| \leq p(|P\cap X|) \leq p(|X|)$. As we executed the reduction with parameter~$\ell = k + p(|X|)$, \lemmaref{lemma:reductionPreservesPi} guarantees the existence of a set~$P' \subseteq V(G') \setminus S$ such that~$G'[P'] \in \Pi$. But this shows that the graph~$G' - S$ contains an induced~$\Pi$ subgraph, contradicting the assumption that~$S$ is a solution for~$G'$ and thereby concluding the proof.
\myqed
\end{proof}

\begin{corollary} \label{cor:deletionToPiImplications}
All problems in \tableref{table:DeletionToPiTable} fit into the framework of \thmref{theorem:deletionToPiVC} and hence admit polynomial kernels parameterized by the size of a given vertex cover.
\end{corollary}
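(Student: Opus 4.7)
The plan is to work down the rows of \tableref{table:DeletionToPiTable} and verify the three preconditions of \thmref{theorem:deletionToPiVC} for each problem. Conditions~\condref{property:deletionToPiFree:adjacencies} and~\condref{property:deletionToPiFree:oneEdge} are essentially immediate; the main obstacle is condition~\condref{property:deletionToPiFree:vcbound}, which demands a polynomial bound on the size of vertex-minimal members of~$\Pi$ in terms of their vertex cover number.

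For condition~\condref{property:deletionToPiFree:adjacencies} I would invoke \proposref{proposition:classesCharacterizedByFew}: item~\condref{characterization:OddCycle} handles \OddCycleTransversalByVC, item~\condref{characterization:chordlessCycle} with $\ell = 4$ handles \ChordalDeletionByVC, and item~\condref{characterization:FMinor} handles \FMinorFreeDeletionByVC together with its special case \PlanarizationByVC. For \VertexCoverByVC the property $\Pi = \{K_2\}$ is finite, so $c_\Pi = 1$ by the trivial characterization noted after \proposref{proposition:closure}. To reduce \EtaTransversalByVC to the minor-containment case I would appeal to the Robertson--Seymour theorem: treewidth at most~$\eta$ is minor-closed, hence graphs of treewidth~$> \eta$ are exactly those containing some member of a finite obstruction set~$\F_\eta$ as a minor, and item~\condref{characterization:FMinor} then yields $c_\Pi = \max_{H \in \F_\eta}\Delta(H)$. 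Condition~\condref{property:deletionToPiFree:oneEdge} is immediate throughout: $K_2$, odd cycles, and chordless cycles of length at least~$4$ all contain an edge, while any graph admitting a graph with an edge as a minor must itself have at least one edge, since minor operations never create edges.

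For condition~\condref{property:deletionToPiFree:vcbound} in the cycle-based rows I would identify the vertex-minimal members explicitly. For \VertexCoverByVC the only vertex-minimal member is~$K_2$ itself. For \OddCycleTransversalByVC, a vertex-minimal $G \in \Pi$ must be an odd cycle: any vertex outside a shortest odd cycle of~$G$ could be deleted while preserving that cycle, and any chord would yield a strictly shorter odd cycle surviving the deletion of a suitable vertex. The analogous reasoning for \ChordalDeletionByVC forces vertex-minimal members to be precisely the chordless cycles of length at least~$4$. In each cycle-based case \proposref{proposition:vcpathcycle} gives $|V(G)| \leq 2\vc(G)$, so $p(n) = 2n$ suffices.

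The hardest step is condition~\condref{property:deletionToPiFree:vcbound} for the minor-based rows (\FMinorFreeDeletionByVC, \PlanarizationByVC, and \EtaTransversalByVC via~$\F_\eta$), where the key tool is \proposref{proposition:boundedDegreeMinorModel}. Given a vertex-minimal $G \in \Pi$ containing some $H \in \F$ as a minor, the proposition supplies a subgraph $G^* \subseteq G$ with an $H$-minor satisfying $|V(G^*)| \leq |V(H)| + \vc(G^*)(\Delta(H)+1)$. The induced subgraph $G[V(G^*)]$ contains $G^*$ as a subgraph and therefore inherits the $H$-minor, placing it in~$\Pi$; vertex-minimality of~$G$ then forces $V(G^*) = V(G)$. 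Since any vertex cover of~$G$ restricts to a vertex cover of the subgraph~$G^*$ on the same vertex set, $\vc(G^*) \leq \vc(G)$, which yields the linear bound $|V(G)| \leq \max_{H \in \F}|V(H)| + \vc(G)\bigl(1 + \max_{H \in \F}\Delta(H)\bigr)$, well defined because $\F$ (respectively $\F_\eta$) is finite. With all three preconditions of \thmref{theorem:deletionToPiVC} verified in every row of \tableref{table:DeletionToPiTable}, the corollary follows.
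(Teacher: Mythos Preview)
Your proposal is correct and follows essentially the same route as the paper: verify conditions~\condref{property:deletionToPiFree:adjacencies}--\condref{property:deletionToPiFree:vcbound} row by row, using \proposref{proposition:classesCharacterizedByFew} for the adjacency bounds, \proposref{proposition:vcpathcycle} for the cycle-based rows, \proposref{proposition:boundedDegreeMinorModel} for the minor-based rows, and the Robertson--Seymour obstruction set~$\F_\eta$ for \EtaTransversalByVC. Your handling of condition~\condref{property:deletionToPiFree:vcbound} in the minor case is in fact more explicit than the paper's: you spell out why vertex-minimality forces~$V(G^*)=V(G)$ and why~$\vc(G^*)\leq\vc(G)$, whereas the paper applies \proposref{proposition:boundedDegreeMinorModel} to the vertex-minimal graph in one line.

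There is one small gap worth patching. Your verification of condition~\condref{property:deletionToPiFree:oneEdge} for the minor-based rows assumes every~$H\in\F$ has an edge (``any graph admitting a graph with an edge as a minor must itself have at least one edge''). For \FMinorFreeDeletionByVC with arbitrary~$\F$ this need not hold; the paper disposes of the degenerate case by observing that if~$\F$ contains an edgeless graph then $\F$-minor-free graphs have bounded size and the problem is solvable in polynomial time. Likewise, for \EtaTransversalByVC you should note that edgeless graphs have treewidth zero and therefore cannot appear in the minimal obstruction set~$\F_\eta$. With these two sentences added, your argument is complete.
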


\begin{proof}
We consider the problems in the order of \tableref{table:DeletionToPiTable} and show how they fit into the framework.

\VertexCoverByVC. Observe that a graph~$G$ has a vertex cover of size~$\ell$ if and only there is a set~$S \subseteq V(G)$ of size~$\ell$ such that~$G - S$ is an independent set, or equivalently,~$G-S$ does not have~$K_2$ as an induced subgraph. So \VertexCoverByVC is equivalent to \DeletionToPiFreeVC for~$\Pi = \{K_2\}$. Since this~$\Pi$ contains only a single graph of degree one, it is easily seen to be characterized by the single adjacency of one vertex in~$K_2$ to its neighbor (\propref{property:deletionToPiFree:adjacencies}). Obviously all graphs in~$\Pi$ contain at least one edge (\propref{property:deletionToPiFree:oneEdge}), and since~$\Pi$ contains a single graph on two vertices, having a vertex cover of size one, the constant function~$p(n) := 2$ suffices for \propref{property:deletionToPiFree:vcbound}. Hence all preconditions for \thmref{theorem:deletionToPiVC} are satisfied and the problem has a kernel with~$\Oh(|X|^2)$ vertices.

\OddCycleTransversalByVC. A graph~$G$ is bipartite if and only if it does not contain a graph with an odd cycle as an induced subgraph. Hence by letting~$\Pi$ contain all graphs which contain an odd cycle (which is not the same as letting~$\Pi$ be the class of all odd cycles), \OddCycleTransversalByVC is equivalent to \DeletionToPiFreeVC. By \proposref{proposition:classesCharacterizedByFew}, this property~$\Pi$ is characterized by a constant number of adjacencies; the proof of the proposition shows that~$c_\Pi := 2$ suffices. Since all graphs with an odd cycle have at least one edge, the second condition is satisfied as well. For the last condition, consider a vertex-minimal graph~$G$ with an odd cycle; such a graph is Hamiltonian, so it has a cycle on $|V(G)|$ vertices as a subgraph. By Proposition~\ref{proposition:vcpathcycle} we have that $|V(G)| \leq 2\vc(G)$, which proves that~$p(n) := 2n$ suffices for the polynomial in \propref{property:deletionToPiFree:vcbound}. We obtain a kernel with~$\Oh(|X|^3)$ vertices.

\ChordalDeletionByVC. A graph~$G$ is chordal if all its cycles of length at least four have a chord; this can be stated equivalently as saying that it does not contain a graph with a chordless cycle as an induced subgraph. If we take~$\Pi$ to be the class of graphs which have a chordless cycle, we can express \ChordalDeletionByVC as an instantiation of \DeletionToPiFreeVC. The proof of \proposref{proposition:classesCharacterizedByFew} shows the property is characterized by three adjacencies. As all graphs with a chordless cycle contain an edge, the second property is satisfied. Similarly as before, a vertex-minimal graph with a chordless cycle is Hamiltonian and hence~$p(n) := 2n$ suffices for \propref{property:deletionToPiFree:vcbound}. The resulting kernel has~$\Oh(|X|^4)$ vertices.

\FMinorFreeDeletionByVC. If we let~$\Pi$ contain all graphs that contain a member of~$\F$ as a minor, then a graph is $\Pi$-induced-subgraph-free if and only if it is $\F$-minor-free. By \proposref{proposition:classesCharacterizedByFew} this class~$\Pi$ is characterized by~$c_\Pi := \max_{H \in \F} \Delta(H)$ adjacencies, so we satisfy \propref{property:deletionToPiFree:adjacencies}. If~$\F$ contains an empty graph, then \F-minor-free graphs have constant size and the problem is polynomial-time solvable; hence in interesting cases the graphs containing a minor from~$\F$ have at least one edge (\propref{property:deletionToPiFree:oneEdge}). Finally, consider a vertex-minimal graph~$G^*$ which contains a graph~$H \in \F$ as a minor.  By~\proposref{proposition:boundedDegreeMinorModel} we have~$|V(G^*)| \leq |V(H)| + \vc(G^*) \cdot (\Delta(H) + 1)$. As~$\F$ is fixed, the maximum degree and size of graphs in~$\F$ are constants which shows that \propref{property:deletionToPiFree:vcbound} is satisfied, resulting in a kernel with~$\Oh(|X|^{\Delta + 1})$ vertices for~$\Delta := \max_{H \in \F} \Delta(H)$.

\PlanarizationByVC. Since this problem is a special case of~\FMinorFreeDeletionByVC for~$\F := \{K_5, K_{3,3}\}$, and both forbidden minors are nonempty, the proof given above shows that this problem has a kernel with~$\Oh(|X|^5)$ vertices.

\EtaTransversalByVC. Recall that the \EtaTransversal problem asks for a vertex set whose removal results in a graph of treewidth at most~$\eta$. Since treewidth does not increase when taking a minor~\cite[Lemma 16]{Bodlaender98}, the class of graphs of treewidth at most~$\eta$ is closed under minors. By the famous results of Robertson and Seymour~\cite{RobertsonS04}, this implies that for each~$\eta$ there is a finite obstruction set~$\F_\eta$ such that~$G$ has treewidth at most~$\eta$ if and only if~$G$ avoids all graphs in~$\F_\eta$ as a minor. It is easy to see that the minimal obstruction sets~$\F_\eta$ do not contain empty graphs, as empty graphs have treewidth zero and cannot be obstructions to having treewidth~$\eta \geq 0$. Hence we may obtain a polynomial kernel for \EtaTransversalByVC by using the obstruction set~$\F_\eta$ in the more general \FMinorFreeDeletionByVC scheme. The kernel size is~$\Oh(|X|^{\Delta  + 1})$ where~$\Delta := \max _{H \in \F_\eta} \Delta(H)$.
\myqed
\end{proof} 

Using \proposref{proposition:classesCharacterizedByFew} and \proposref{proposition:closure} it is easy to apply \thmref{theorem:deletionToPiVC} to many other vertex-deletion problems. For example, a graph is \emph{distance hereditary} if and only if it excludes the house, gem, domino and holes (chordless cycles of length at least five) as induced subgraphs~\cite[Theorem 10.1.1]{BrandstadtLS99}. (The house, gem and domino are fixed, constant-size graphs~\cite[Chapter 1]{BrandstadtLS99}.) Hence if we take~$\Pi$ to contain these constant graphs, together with the graphs that contain a chordless cycle of length at least five, then a graph is distance hereditary if and only if it is induced $\Pi$-free. Since~$\Pi$ is the union of a finite graph property~$\{\mathrm{house},\mathrm{gem},\mathrm{domino}\}$ with the graphs containing a chordless cycle of length at least five, and both are characterized by a constant number of adjacencies, it follows from \proposref{proposition:closure} that~$\Pi$ is characterized by a constant number of adjacencies. It is easy to verify that the other preconditions to \thmref{theorem:deletionToPiVC} are satisfied as well, which implies a polynomial kernel for \DistanceHereditaryDeletionVC. Using this recipe one can obtain polynomial kernels for a host of vertex-deletion problems, whose corresponding graph classes can be defined by combining the elements of \proposref{proposition:classesCharacterizedByFew} with a finite number of arbitrary forbidden induced subgraphs. We do not list all these possible applications here, but move on to our next general theorem.

\subsection{Kernelization for Largest Induced Subgraph Problems}
In this section we study the following class of problems, which is in some sense dual to the class considered previously. For a graph property $\Pi$, we define 

\parproblemdef
{\LargestInducedPiVC}
{A graph~$G$ with a vertex cover~$X$, and an integer~$k \geq 1$.}
{The size~$|X|$ of the vertex cover.}
{Is there a set~$P \subseteq V(G)$ of size at least~$k$ such that~$G[P] \in \Pi$?}

\noindent The following theorem gives sufficient conditions for the existence of polynomial kernels for such problems.

\begin{theorem} \label{theorem:largestInducedPiVC}
If~$\Pi$ is a graph property such that:
\begin{enumerate}[(i)]
	\item $\Pi$ is characterized by~$c_\Pi$ adjacencies, and \label{property:largestInducedPi:adjacencies}
	\item there is a non-decreasing polynomial~$p \colon \mathbb{N} \to \mathbb{N}$ such that all graphs~$G \in \Pi$ satisfy~$|V(G)| \leq p(\vc(G))$, \label{property:largestInducedPi:vcbound}
\end{enumerate}
then \LargestInducedPiVC has a kernel with~$\Oh(p(x) \cdot x^{c_\Pi})$ vertices, where~$x := |X|$.
\end{theorem}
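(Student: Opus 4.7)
The plan is to follow the same blueprint as for \thmref{theorem:deletionToPiVC}: apply the \Reduce procedure once with a suitable value of~$\ell$, and use \lemmaref{lemma:reductionPreservesPi} to argue soundness. The situation here is in fact simpler than for vertex-deletion problems, because the induced subgraph we must preserve need not avoid any prescribed solution set, so we can invoke \lemmaref{lemma:reductionPreservesPi} with $S = \emptyset$.

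First, I would use condition~\condref{property:largestInducedPi:vcbound} to bound, uniformly, the size of any candidate induced $\Pi$-subgraph. Concretely, for any~$P \subseteq V(G)$ with~$G[P] \in \Pi$, the set~$P \cap X$ is a vertex cover of~$G[P]$, so $|P| \leq p(\vc(G[P])) \leq p(|P \cap X|) \leq p(|X|)$, using monotonicity of~$p$. Consequently, if~$k > p(|X|)$ the input is a trivial \no-instance and I would output a fixed constant-size \no-instance. Otherwise I would run \Reduce$(G, X, \ell, c_\Pi)$ with~$\ell := p(|X|)$ to obtain~$G'$ and return~$(G', X, k)$. The size bound~$\Oh(p(x) \cdot x^{c_\Pi})$ then follows directly from \obsref{observation:reductionEffect}.

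For correctness, the backward direction is immediate: since \Reduce only deletes vertices outside of~$X$, $G'$ is an induced subgraph of~$G$, hence any witness~$P' \subseteq V(G')$ with~$|P'| \geq k$ and~$G'[P'] \in \Pi$ is equally a witness for the original instance. For the forward direction, suppose~$(G,X,k)$ is a \yes-instance and let~$P \subseteq V(G)$ be a witness of size at least~$k$ with~$G[P] \in \Pi$. By the size argument above we have~$|P| \leq p(|X|) = \ell$, so \lemmaref{lemma:reductionPreservesPi} applied with~$S = \emptyset$ produces a set~$P' \subseteq V(G')$ with~$|P'| = |P| \geq k$ and~$G'[P'] \in \Pi$, showing that~$(G',X,k)$ is also a \yes-instance.

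The only step that requires any care is choosing~$\ell$ correctly and verifying that condition~\condref{property:largestInducedPi:vcbound} really does bound the size of \emph{every} candidate solution~$P$ (rather than only vertex-minimal ones as in \thmref{theorem:deletionToPiVC}). This works precisely because here $G[P]$ itself lies in~$\Pi$, so the polynomial bound applies to $P$ directly via the vertex cover~$P \cap X$, and no minimality argument is needed. Everything else is a routine invocation of \lemmaref{lemma:reductionPreservesPi} and \obsref{observation:reductionEffect}.
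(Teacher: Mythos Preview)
Your proposal is correct and essentially identical to the paper's proof: both run \Reduce$(G,X,p(|X|),c_\Pi)$, invoke \obsref{observation:reductionEffect} for the size bound, and use \lemmaref{lemma:reductionPreservesPi} with~$S=\emptyset$ together with the bound~$|P|\leq p(|X|)$ (via the vertex cover~$P\cap X$) for soundness. The only difference is that you add an explicit check for~$k > p(|X|)$ and output a trivial \no-instance in that case; the paper omits this since the equivalence argument already covers it, but your addition is harmless.
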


There is a natural example showing the necessity of the first condition in \thmref{theorem:largestInducedPiVC}. If we take~$\Pi$ as the class of all cliques, then testing whether a graph~$G$ has an induced subgraph in~$\Pi$ on at least~$k$ vertices is equivalent to asking whether~$G$ has a clique of size at least~$k$. Since the vertex count of a complete graph exceeds its vertex cover number by exactly one, the class of cliques satisfies~\condref{property:largestInducedPi:vcbound}. The conditional superpolynomial kernel lower bound for \clique parameterized by vertex cover explains why \condref{property:largestInducedPi:adjacencies} is necessary; the class of cliques is not characterized by any constant number of adjacencies. 

The second condition of \thmref{theorem:largestInducedPiVC} is needed to ensure that the resulting problems have kernels at all. Observe that we do not require the set of graphs~$\Pi$ to be decidable. In the absence of the second condition, we could let~$\Pi$ contain all $i$-vertex graphs for which the $i$-th Turing machine halts on a blank tape. This class is trivially characterized by zero adjacencies, since membership in~$\Pi$ only depends on the number of vertices. If the \LargestInducedPiVC problem for this class~$\Pi$ would have a kernel, then we could decide the Halting problem as follows. To decide whether the $i$-th Turing machine halts, we create the edgeless graph~$G_i$ on~$i$ vertices with an empty vertex cover. By the definition of~$\Pi$, the $i$-th machine halts if and only if~$G_i$ has an induced~$\Pi$ subgraph on~$i$ vertices. Running the supposed kernelization on this instance would yield an equivalent, constant-size instance as the parameter value is zero. We could then decide the problem by looking up the answer in a table for constant-size instances hard-coded into the algorithm, thereby solving the Halting problem. The requirement that the size of the graphs in~$\Pi$ is bounded in terms of their vertex cover number, is therefore entirely natural. We need the dependence to be polynomial in order to obtain our polynomial kernel.

Having justified the preconditions, we present the proof of the theorem.

\begin{proof}[Proof of \thmref{theorem:largestInducedPiVC}]
The kernelization reduces an instance~$(G,X,k)$ by executing \Reduce$(G, X, p(|X|), c_\Pi)$ to obtain a graph~$G'$, and outputs the instance~$(G', X, k)$. By \obsref{observation:reductionEffect} this can be done in polynomial time and results in a graph whose size is appropriately bounded; it remains to prove that the two instances are equivalent. 

Since~$G'$ is an induced subgraph of~$G$, any solution contained in~$G'$ is also contained in~$G$: so if $(G',X,k)$ is a \yes-instance, then~$(G,X,k)$ is as well. Assume then that~$(G,X,k)$ is a \yes-instance and let~$P \subseteq V(G)$ be such that~$G[P]\in \Pi$ and $|P| \geq k$. Clearly,~$X\cap P$ is a vertex cover of~$G[P]$, so~$|P|\leq p(|X \cap P|)\leq p(|X|)$ by \propref{property:largestInducedPi:vcbound}. Since the reduction procedure is executed with a value~$\ell := p(|X|)$ and~$|P| \leq p(|X|)$, by applying \lemmaref{lemma:reductionPreservesPi} with an empty set for~$S$ we find that~$G'$ contains a set~$P'$ of the same size as~$P$ such that~$G'[P] \in \Pi$. This proves that~$(G', X, k)$ is a \yes-instance and shows the correctness of the kernelization.
\myqed
\end{proof}

\begin{table}[t]
	\centering
{
\small
\begin{tabular}{@{}lll@{}}
\toprule
Problem & Desired property~$\Pi$ & $c_\Pi$ \\ \midrule
\LongCycle & Graphs with a Hamiltonian cycle & $2$ \\
\LongPath & Graphs with a Hamiltonian path & $2$ \\
\HPacking & Graphs with a perfect~$H$-packing & $\Delta(H)$\\ \bottomrule
\end{tabular}
	\caption{Problems that admit polynomial kernels when parameterized by the size of a given vertex cover, by applying \thmref{theorem:largestInducedPiVC}.}
\label{table:LargestInducedPiTable}
}
\end{table}

\begin{corollary} \label{cor:largestInducedPiImplications}
All problems in \tableref{table:LargestInducedPiTable} fit into the framework of \thmref{theorem:largestInducedPiVC} and admit polynomial kernels parameterized by the size of a given vertex cover.
\end{corollary}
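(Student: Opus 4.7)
The plan is to verify, for each of the three problems listed in \tableref{table:LargestInducedPiTable}, that the associated graph property $\Pi$ meets both preconditions of \thmref{theorem:largestInducedPiVC}: characterization by a constant number of adjacencies, and a polynomial bound on $|V(G)|$ in terms of $\vc(G)$. The first condition will follow in each case directly from \proposref{proposition:classesCharacterizedByFew}, so the main work is to exhibit a suitable polynomial $p$ and invoke the theorem.

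For \LongCycle I would take $\Pi$ to be the graphs containing a Hamiltonian cycle; by \proposref{proposition:classesCharacterizedByFew}\condref{characterization:PathOrCycle} this is characterized by $c_\Pi = 2$ adjacencies, and any $G \in \Pi$ contains $C_{|V(G)|}$ as a subgraph, so \proposref{proposition:vcpathcycle} yields $|V(G)| \leq 2\vc(G)$, making $p(n) := 2n$ sufficient. An entirely analogous argument handles \LongPath, using Hamiltonian paths and the path bound of \proposref{proposition:vcpathcycle} to obtain $p(n) := 2n + 1$.

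For \HPacking I would take $\Pi$ to be the graphs admitting a perfect $H$-packing; \proposref{proposition:classesCharacterizedByFew}\condref{characterization:HPacking} then gives $c_\Pi = \Delta(H)$. The slight subtlety here is the bound on $|V(G)|$: if $H$ is edgeless the problem is polynomial-time solvable and can be handled separately, so we may assume $\vc(H) \geq 1$. For any $G \in \Pi$ with a perfect packing by $t$ copies of $H$, every vertex cover of $G$ must cover all edges of every copy, giving $\vc(G) \geq t \cdot \vc(H) \geq t$; hence $|V(G)| = t \cdot |V(H)| \leq |V(H)| \cdot \vc(G)$ and $p(n) := |V(H)| \cdot n$ works. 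Applying \thmref{theorem:largestInducedPiVC} in each of the three cases then yields the claimed polynomial kernels; the only nontrivial step is the case distinction for edgeless $H$ in \HPacking, which I expect to be the main (mild) obstacle.
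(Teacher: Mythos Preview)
Your proposal is correct and follows essentially the same approach as the paper: for each problem you pick the same property~$\Pi$, invoke \proposref{proposition:classesCharacterizedByFew} for condition~(i), and use \proposref{proposition:vcpathcycle} (resp.\ a counting argument over the packing) for condition~(ii). The only cosmetic difference is in the \HPacking bound: the paper argues directly that each copy of a nonempty~$H$ contains an edge and hence a vertex-cover vertex, giving~$t \leq \vc(G)$, whereas you route through the slightly stronger~$\vc(G) \geq t \cdot \vc(H)$; both yield~$p(n) = |V(H)| \cdot n$.
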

\begin{proof}
We consider the problems in the order of \tableref{table:LargestInducedPiTable} and show how they fit into the framework.

\LongCycleByVC. Observe that if~$G$ has a cycle on~$k$ vertices~$(v_1, \ldots, v_k)$ then the graph~$G[\{v_1, \ldots, v_k\}]$ is Hamiltonian. So~$G$ has a $k$-cycle if and only if~$G$ has an induced Hamiltonian subgraph on~$k$ vertices. Hence \LongCycleByVC is equivalent to \LargestInducedPiVC by letting~$\Pi$ be the class of Hamiltonian graphs. By \proposref{proposition:classesCharacterizedByFew} this class is characterized by two adjacencies. By Proposition~\ref{proposition:vcpathcycle}, for all Hamiltonian graphs~$G'$ it holds that~$|V(G')| \leq 2|\vc(G')|$. Hence \propref{property:largestInducedPi:vcbound} is satisfied as well and we obtain a kernel with~$\Oh(|X|^3)$ vertices. The proof for \LongPathByVC is analogous.

\HPackingByVC. A graph~$G$ admits an $H$-packing of~$k$ disjoint subgraphs, if and only if~$G$ has an induced subgraph on~$k \cdot |V(H)|$ vertices which admits a perfect $H$-packing. If~$H$ is an empty graph then the answer is trivial: there are~$k$ vertex-disjoint subgraphs isomorphic to~$H$ if and only if the vertex count is at least~$k \cdot |V(H)|$. We can therefore solve the case that~$H$ is an empty graph in polynomial time, and focus on the case that~$H$ is nonempty. Choosing~$\Pi$ as the graphs with a perfect $H$-packing allows us to model the packing problem as an instantiation of \LargestInducedPiVC, by scaling the target value~$k$ by a factor~$|V(H)|$. \proposref{proposition:classesCharacterizedByFew} shows that~$\Pi$ is characterized by~$\Delta(H)$ adjacencies. Let us now prove that the second condition is satisfied for this~$\Pi$, by utilizing the fact that we demand~$H$ to be nonempty. Consider a graph~$G$ with a perfect $H$-packing for a nonempty~$H$, and let~$X$ be a minimum vertex cover of~$G$. Each subgraph in the packing contains at least one edge, so each subgraph in the packing has size~$|V(H)|$ and contains a vertex from~$X$. Hence~$|V(G)| \leq |X| \cdot |V(H)|$, which proves that~$p(n) := n \cdot |V(H)|$ suffices for the polynomial. For fixed~$H$ this results in a kernel with~$\Oh(|X|^{\Delta(H)+1})$ vertices.
\myqed
\end{proof}

\subsection{Kernelization for Graph Partitioning Problems}
Having considered induced subgraph testing and vertex-deletion problems in the previous two sections, we now change our focus to partitioning problems. More concretely, we consider problems that ask for the existence of a partition of the vertex set into a constant number of partite sets such that each partite set induces a subgraph of a desired form. For a graph property $\Pi$, the parameterized problem we study is formally defined as follows.

\parproblemdef
{\PartitionPiVC}
{A graph~$G$ with vertex cover~$X \subseteq V(G)$.}
{The size~$|X|$ of the vertex cover.}
{Is there a partition of the vertex set into~$q$ sets~$S_1 \cup S_2 \cup \ldots \cup S_q$ such that for each~$i \in [q]$ the graph~$G[S_i]$ does not contain a graph in~$\Pi$ as an induced subgraph?}

\noindent Note that the value of~$q$ is treated as a constant in the above definition. To give an example of a problem that can be captured by this template, consider the \ThreeColoring problem which asks whether the graph admits a proper coloring with three colors. Such a coloring is  a partition of its vertex set into three independent sets. Observing that a vertex set  is independent if and only if it induces a subgraph  excluding~$K_2$ as an induced subgraph, we see that \ThreeColoring parameterized by vertex cover can be phrased as \PartitionThreeColoringVC. Further applications will be discussed after establishing a sufficient condition for polynomial kernelizability of the general problem.

The kernelization scheme once again uses the \Reduce routine as its single reduction rule. Before presenting the kernel, we derive a lemma that shows how an application of \Reduce affects instances of partitioning problems. In the following we say that a graph~$G$ can be partitioned into~$q$ disjoint~$\Pi$-free subgraphs if there is a partition of~$V(G)$ into~$S_1 \cup \ldots \cup S_q$ such that for all~$i \in [q]$ the graph~$G[S_i]$ does not contain a member of~$\Pi$ as an induced subgraph.

\begin{lemma} \label{lemma:reductionPreservesPartition}
Let~$\Pi$ be characterized by~$c_\Pi$ adjacencies, and let~$p \colon \mathbb{N} \to \mathbb{N}$ be a non-decreasing polynomial such that all graphs~$G^*$ that are vertex-minimal with respect to~$\Pi$ satisfy~$|V(G^*)| \leq p(\vc(G^*))$. Let~$G$ be a graph with vertex cover~$X$, and let~$G'$ be the graph resulting from \Reduce$(G, X, q \cdot p(|X|), q \cdot c_\Pi)$. If~$G'$ can be partitioned into~$q$ disjoint~$\Pi$-free subgraphs, then such a partition exists for~$G$ as well.
\end{lemma}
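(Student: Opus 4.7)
The plan is to extend an arbitrary $\Pi$-free $q$-partition $S_1' \cup \cdots \cup S_q' = V(G')$ of the reduced graph into a $\Pi$-free $q$-partition of $V(G)$ by iteratively assigning colors to the removed vertices $R := V(G) \setminus V(G')$, processed in an arbitrary order $r_1, \ldots, r_t$, while maintaining the invariant that after step $j$ each set $T_i^{(j)} := S_i' \cup \{r_k : k \leq j,\, c(r_k) = i\}$ induces a $\Pi$-free subgraph of $G$. The base case $j=0$ holds by assumption, so the whole proof reduces to showing that a safe color $c(r_j)$ always exists at step $j$.

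For the inductive step, I would argue by contradiction: suppose every color $i \in [q]$ is unsafe, so there is a vertex-minimal induced $\Pi$-subgraph $P_i \subseteq T_i^{(j-1)} \cup \{r_j\}$, which must contain $r_j$ (otherwise $G[P_i]$ would already sit inside the inductively $\Pi$-free set $T_i^{(j-1)}$). The vertex-minimality hypothesis, together with the observation that $P_i \cap X$ is a vertex cover of $G[P_i]$, yields $|P_i| \leq p(|P_i \cap X|) \leq p(|X|)$. Applying the $c_\Pi$-adjacency characterization (\defref{definition:fewAdjacencies}) to the vertex $r_j$ in $G[P_i]$ gives a set $D_i \subseteq P_i \setminus \{r_j\}$ with $|D_i| \leq c_\Pi$ outside of which adjacencies of $r_j$ may be freely toggled; since $r_j \notin X$ and $X$ is a vertex cover, $r_j$ has no neighbors outside $X$, so it is enough to work with $D_i^X := D_i \cap X$. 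Set $D^X := \bigcup_i D_i^X \subseteq X$, of size at most $q \cdot c_\Pi$.

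Because \Reduce was invoked with quota $\ell = q \cdot p(|X|)$ and subset size bound $q \cdot c_\Pi$, the iteration corresponding to $Y := D^X$ and $Y^+ := D^X \cap N_G(r_j)$ marked $q \cdot p(|X|)$ vertices $U \subseteq V(G') \setminus X$ that share the $D^X$-adjacency pattern of $r_j$ (the unmarked vertex $r_j$ itself witnesses that the quota was in fact filled). By pigeonhole some $i^*$ satisfies $|U \cap S_{i^*}'| \geq p(|X|) > |P_{i^*} \setminus \{r_j\}|$, giving a vertex $u \in (U \cap S_{i^*}') \setminus P_{i^*}$. Since $u$ and $r_j$ agree on adjacencies inside $D_{i^*}^X$ and neither has neighbors outside $X$, the set $P_{i^*}^* := (P_{i^*} \setminus \{r_j\}) \cup \{u\}$ induces a graph obtained from $G[P_{i^*}]$ by toggling only adjacencies of $r_j$ to vertices in $P_{i^*} \setminus (D_{i^*} \cup \{r_j\})$ (followed by renaming $r_j$ to $u$), and therefore still belongs to $\Pi$. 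Since $P_{i^*}^* \subseteq T_{i^*}^{(j-1)}$, this contradicts the invariant.

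The main obstacle is that one cannot simply bound $U$ globally: the substitute vertex $u$ must land in the specific color class $i^*$ whose witness $P_{i^*}$ one then wants to wreck. This is exactly why \Reduce is invoked with the particular quota $q \cdot p(|X|)$ (so pigeonhole across $q$ classes gives a class containing strictly more than $|P_{i^*}| - 1$ candidates) and the particular subset size $q \cdot c_\Pi$ (so that the union of all $q$ few-adjacency witnesses $D_i^X$ still fits within the bounded-size sets $Y$ enumerated by the reduction rule).
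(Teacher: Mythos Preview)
Your proposal is correct and follows essentially the same approach as the paper's proof: extend the partition of $G'$ to $G$ one removed vertex at a time, and for the contradiction step combine the $q$ few-adjacency witnesses $D_i$ into a single set $D^X \subseteq X$ of size at most $q \cdot c_\Pi$, then use the marking quota of \Reduce to find a substitute vertex with the same $D^X$-neighborhood as $r_j$. The only cosmetic difference is that the paper finds a marked vertex avoiding the union $\bigcup_i P_i$ (a union bound) and then looks at its color, whereas you first pigeonhole into a color class $i^*$ and then avoid just $P_{i^*}$; both counts go through with the same parameters.
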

\begin{proof}
Assume the conditions in the lemma statement hold, and let~$R \subseteq V(G)$ be the vertices that are removed by the reduction procedure, i.e.,~$R := V(G) \setminus V(G')$. Let~$r_1, r_2, \ldots, r_t$ be an arbitrary ordering of~$R$. Assume that~$\S = (S_1, S_2,\ldots,S_q)$ is a partition of~$V(G')$ such that for each~$i \in [q]$ the graph~$G'[S_i]$ does not contain an induced subgraph from~$\Pi$. We inductively create a sequence of set families $\S_0, \S_1, \ldots, \S_t$ with~$\S = \S_0$ such that $\S_i$ is a partition of~$V(G') \cup \{r_1, \ldots, r_i\}$ into~$q$ sets~$S_i^1, \ldots, S_i^q$, and for all~$j \in [q]$ the graph~$G[S_i^j]$ does not contain a graph in~$\Pi$ as an induced subgraph. Note that~$\S$ satisfies the constraints imposed on~$\S_0$, while existence of~$\S_t$ proves the lemma. Hence, we only need to show how to construct~$\S_i$ from~$\S_{i-1}$ for~$i \in [t]$.

To construct the partition~$\S_i$ out of the partition~$\S_{i-1}$ we will show that there is a partite set~$S_{i-1}^j$ to which vertex~$r_i$ can be added, such that~$G[S_{i-1}^j \cup \{r_i\}]$ does not contain a graph in~$\Pi$. The partition~$\S_i$ is then obtained by replacing~$S_{i-1}^j$ by~$S_{i-1}^j \cup \{r_i\}$ in partition~$\S_{i-1}$. Hence it remains to prove that a suitable partite set exists.

Assume for a contradiction that for all~$j \in [q]$, the graph~$G[S_{i-1}^j \cup \{r_i\}]$ contains an induced~$\Pi$ subgraph. For all~$j \in [q]$ let~$H_j \in \Pi$ be an induced subgraph of~$G[S_{i-1}^j \cup \{r_i\}]$ that is vertex-minimal with respect to~$\Pi$. By the induction hypothesis, each such subgraph in~$\Pi$ must contain~$r_i$.

Since~$\Pi$ is characterized by~$c_\Pi$ adjacencies, it follows that for each~$H_j$ there is a set~$D_j \subseteq V(H_j) \setminus \{r_i\}$ of size at most~$c_\Pi$ such that changing the adjacencies between~$r_i$ and~$V(H_j) \setminus D_j$ in~$H_j$ preserves membership in~$\Pi$. Now consider the union~$D := \bigcup _{j=1}^q D_j$, and let~$D_X := D \cap X$ be its intersection with~$X$.

By the choice of parameters to \Reduce and the fact that $r_i$ was not marked, we know that for the subset~$D_X$ of~$X$ of size at most~$q \cdot c_\Pi$ the procedure marked~$q \cdot p(|X|)$ vertices~$Z_{D_X} \subseteq V(G) \setminus X$ such that all~$z \in Z_{D_X}$ have the same neighborhood into~$D_X$ as~$r_i$, i.e., for which~$N_G(z) \cap D_X = N_G(r_i) \cap D_X$. These vertices~$Z_{D_X}$ were consequently preserved in~$G'$. We will show that there is a vertex~$z^* \in Z_{D_X}$ that is not contained in any forbidden graph~$H_j$ for~$j \in [q]$. To see this, observe first that~$r_i \not \in Z_{D_X}$ since~$r_i$ was removed from the graph by the reduction procedure whereas all vertices in~$Z_{D_X}$ were marked to survive in~$G'$. Since~$X$ is a vertex cover of~$G$, for each~$j \in [q]$ the intersection~$V(H_j) \cap X$ is a vertex cover of~$H_j$. The precondition to the lemma therefore implies that~$|V(H_j)| \leq p(\vc(H_j)) \leq p(|X|)$. The total number of vertices in the union of the graphs~$H_j$ is therefore at most~$q \cdot p(|X|)$. Since all these graphs contain~$r_i$, while~$r_i \not \in Z_{D_X}$, the fact that~$|Z_{D_X}| = q \cdot p(|X|)$ therefore implies that there is indeed a vertex~$z^* \in Z_{D_X}$ that is not contained in any graph~$H_j$ for~$j \in [q]$. 

Let~$j^*$ be the index of the partite set of~$\S_{i-1}$ that contains~$z^*$, such that~$z^* \in S_{i-1}^{j^*}$. We will use the characterization of~$\Pi$ by few adjacencies to show that~$r_i$ can be replaced by~$z^*$ in the forbidden graph~$H_{j^*}$ while preserving membership in~$\Pi$, thereby obtaining the contradiction that~$G[S_{i-1}^{j^*}]$ contains a graph in~$\Pi$. Since neither~$z^*$ nor~$r_i$ is contained in the vertex cover~$X$ by the definition of \Reduce --- it only marks and deletes vertices outside~$X$ --- it follows that~$N_G(z^*) \subseteq X$ and~$N_G(r_i) \subseteq X$. Hence~$N_G(z^*) \cap (D \setminus X) = N_G(r_i) \cap (D \setminus X) = \emptyset$. By choice of~$z^*$ we have that~$N_G(z^*) \cap D_X = N_G(r_i) \cap D_X$. Combining the last two statements shows that~$N_G(z^*) \cap D = N_G(r_i) \cap D$. Hence, starting from the graph~$H_{j^*}$, we can obtain the graph~$G[(V(H_{j^*}) \setminus \{r_i\}) \cup \{z^*\}]$ by changing the label of~$r_i$ to~$z^*$, and changing adjacencies between the resulting~$z^*$ and vertices outside the set~$D$. But since~$D$ contains the set~$D_{j^*}$, which preserves membership of~$H_{j^*}$ in~$\Pi$, this transformation preserves membership in~$\Pi$ and therefore~$G[(V(H_{j^*}) \setminus \{r_i\}) \cup \{z^*\}]$ is contained in~$\Pi$. But this graph is an induced subgraph of~$G[S_{i-1}^{j^*}]$, thereby proving that the partition~$\S_{i-1}$ that we started from is not valid since its $j^*$-th partite set induces a graph containing a member of~$\Pi$. It follows that when we start from a valid partition~$\S_{i-1}$, there is a partite set to which~$r_i$ can be added without creating forbidden subgraphs. This proves the lemma.
\myqed
\end{proof}

Armed with this lemma we state the general kernelization theorem for partitioning problems.

\begin{theorem} \label{theorem:partitioningVC}
If~$\Pi$ is a graph property such that:
\begin{enumerate}[(i)]
	\item $\Pi$ is characterized by~$c_\Pi$ adjacencies, and \label{property:partitioning:adjacencies}
	\item there is a non-decreasing polynomial~$p \colon \mathbb{N} \to \mathbb{N}$ such that all graphs~$G$ that are vertex-minimal with respect to~$\Pi$ satisfy~$|V(G)| \leq p(\vc(G))$, \label{property:partitioning:vcbound}
\end{enumerate}
then \PartitionPiVC has a kernel with~$\Oh(p(x) \cdot x^{q \cdot c_\Pi})$ vertices, where~$x := |X|$.
\end{theorem}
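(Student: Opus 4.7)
The plan is to follow exactly the same template used for \thmref{theorem:deletionToPiVC} and \thmref{theorem:largestInducedPiVC}, namely: invoke the \Reduce procedure once with parameters chosen to match the hypotheses of \lemmaref{lemma:reductionPreservesPartition}, and then use that lemma to establish the nontrivial direction of equivalence between the original and reduced instance.

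Concretely, given an input $(G,X)$ of \PartitionPiVC, I would run \Reduce$(G, X, q \cdot p(|X|), q \cdot c_\Pi)$ to produce a graph $G'$ and output $(G',X)$. Since $q$ and $c_\Pi$ are constants, \obsref{observation:reductionEffect} immediately yields the claimed size bound of $\Oh(|X| + q \cdot p(|X|) \cdot |X|^{q c_\Pi}) = \Oh(p(x) \cdot x^{q c_\Pi})$ vertices, and the procedure runs in polynomial time. So everything reduces to verifying that $(G,X)$ and $(G',X)$ are equivalent instances.

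For the easy direction, if $(G,X)$ admits a partition $S_1 \cup \ldots \cup S_q$ with each $G[S_i]$ induced-$\Pi$-free, then restricting to $V(G')$ gives a partition $S_1' \cup \ldots \cup S_q'$ where $S_i' := S_i \cap V(G')$. Because $G'$ is an induced subgraph of $G$, each $G'[S_i']$ is an induced subgraph of $G[S_i]$, hence also induced-$\Pi$-free; thus $(G',X)$ is a \yes-instance. For the reverse direction, suppose $(G',X)$ admits such a partition. This is precisely the hypothesis of \lemmaref{lemma:reductionPreservesPartition}, whose parameters $q \cdot p(|X|)$ and $q \cdot c_\Pi$ exactly match those used in our invocation of \Reduce. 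Applying that lemma directly yields a partition of $V(G)$ into $q$ sets each inducing a $\Pi$-free subgraph, so $(G,X)$ is a \yes-instance as well.

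Since \lemmaref{lemma:reductionPreservesPartition} is already established and carries all of the technical burden (choosing a reduced vertex, locating an unaffected marked substitute via the adjacency characterization, and arguing that swapping preserves membership in $\Pi$), there is no real obstacle beyond correctly lining up the reduction parameters with the lemma's preconditions. The proof is therefore essentially a clean, short assembly step; the main thing to double-check is that the hypothesis on vertex-minimal graphs $G$ with $|V(G)| \leq p(\vc(G))$ in the theorem statement is identical to the hypothesis stipulated in \lemmaref{lemma:reductionPreservesPartition}, which it is.
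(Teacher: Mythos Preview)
Your proposal is correct and follows essentially the same approach as the paper's own proof: invoke \Reduce with parameters~$\ell = q \cdot p(|X|)$ and~$q \cdot c_\Pi$, use \obsref{observation:reductionEffect} for the size bound, restrict a partition of~$G$ to~$V(G')$ for the forward direction, and appeal directly to \lemmaref{lemma:reductionPreservesPartition} for the reverse direction. There is nothing to add.
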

\begin{proof}
The kernelization reduces an instance~$(G,X)$ of \PartitionPiVC by executing \Reduce$(G, X, q \cdot p(|X|), q \cdot c_\Pi)$ to obtain a graph~$G'$, and outputs the instance~$(G', X)$. As before, \obsref{observation:reductionEffect} shows that the running time is polynomial for fixed~$q$, and that the output instance has the appropriate size. Note that we hide the constant factor~$q$ in the asymptotic notation. It remains to prove that the two instances are equivalent.

If~$S_1 \cup \ldots \cup S_q$ is a partition of~$V(G)$ such that~$G[S_i]$ contains no induced subgraph in~$\Pi$ for all~$i \in [q]$, then that partition can be safely restricted to the vertex set of~$G'$ to yield a solution to the output instance: since~$G'[S_i \cap V(G')]$ is an induced subgraph of~$G[S_i]$, the~$\Pi$-freeness of the latter implies that no set of the restricted partition induces a graph in~$\Pi$. Hence if the input is a \yes-instance, then the output instance is as well. The reverse direction is given by \lemmaref{lemma:reductionPreservesPartition}, which concludes the proof.
\myqed
\end{proof}

\begin{table}[t]
	\centering
{
\small
\begin{tabular}{@{}lll@{}}
\toprule
\textsc{Partition into~$q$} & Forbidden property~$\Pi$ & $c_\Pi$ \\ \midrule
\textsc{Independent Sets} & $\{K_2\}$ & $1$ \\
%\textsc{Cliques} & $2 \cdot K_1$ & $1$ \\
\textsc{Bipartite Graphs} & Graphs with an odd cycle & $2$ \\
\textsc{Chordal Graphs} & Graphs with a chordless cycle & $3$ \\
\textsc{\F-Minor-Free Graphs} & Graphs with an~$H \in \F$-minor & $\max_{H \in \F} \Delta(H)$ \\
\textsc{Planar Graphs} & Graphs with a~$K_5$ or~$K_{3,3}$ minor & $4$ \\
\textsc{Forests} & Graphs with a cycle & $2$ \\ \bottomrule
\end{tabular}
	\caption{Problems that admit polynomial kernels when parameterized by the size of a given vertex cover, by applying \thmref{theorem:partitioningVC}.}
\label{table:partitionPiTable}
}
\end{table}

The theorem has consequences for a multitude of graph partitioning problems; a sample is presented in \tableref{table:partitionPiTable}. Observe that countless other problems such as \PartitionIntoDistanceHereditary can be captured by the theorem, by using \proposref{proposition:closure} to find new graph properties characterized by few adjacencies.

\begin{corollary} \label{cor:partitionImplications}
All problems in \tableref{table:partitionPiTable} fit into the framework of \thmref{theorem:partitioningVC} and admit polynomial kernels parameterized by the size of a given vertex cover.
\end{corollary}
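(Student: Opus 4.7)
The plan is to mirror the structure of the proof of Corollary~\ref{cor:deletionToPiImplications}: for each row of Table~\ref{table:partitionPiTable}, identify the forbidden property~$\Pi$ used to specify the partition requirement, then verify the two preconditions of Theorem~\ref{theorem:partitioningVC}, namely the characterization by few adjacencies and the polynomial bound on vertex-minimal members in terms of their vertex cover number. Observe that a partition into~$q$ sets each inducing (forest / bipartite / chordal / $\mathcal{F}$-minor-free / planar / independent) subgraphs is exactly a partition into~$q$ sets each inducing graphs that avoid a suitable~$\Pi$ as an induced subgraph, so the scheme of \PartitionPiVC applies directly.

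First I would dispatch the five cases whose forbidden properties also appear in Table~\ref{table:DeletionToPiTable}. For Partition into Independent Sets we take~$\Pi = \{K_2\}$: this finite class is trivially characterized by one adjacency and~$|V(K_2)| = 2 \leq 2\vc(K_2)$. For Partition into Bipartite Graphs, Chordal Graphs, $\mathcal{F}$-Minor-Free Graphs, and Planar Graphs, the required forbidden property and the corresponding constant~$c_\Pi$ are furnished verbatim by Proposition~\ref{proposition:classesCharacterizedByFew}, and the polynomial bound on vertex-minimal members is already argued in the proof of Corollary~\ref{cor:deletionToPiImplications} (vertex-minimal graphs containing an odd cycle or a chordless cycle are Hamiltonian, so Proposition~\ref{proposition:vcpathcycle} gives~$p(n) = 2n$; for the minor cases Proposition~\ref{proposition:boundedDegreeMinorModel} yields~$p(n) = |V(H)| + (\Delta(H)+1)\cdot n$ for the largest~$H \in \mathcal{F}$). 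Plugging these into Theorem~\ref{theorem:partitioningVC} yields kernels with~$\Oh(|X|^{q \cdot c_\Pi + 1})$ vertices.

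The only genuinely new entry is Partition into Forests. Here I take~$\Pi$ to be the class of graphs that contain a cycle, so that a~$\Pi$-free induced subgraph is precisely a forest. To verify precondition~\condref{property:partitioning:adjacencies}, I adapt the argument for Hamiltonicity in the proof of Proposition~\ref{proposition:classesCharacterizedByFew}: given~$G \in \Pi$ with witnessing cycle~$C$ and a vertex~$v \in V(G)$, if~$v \notin V(C)$ then~$D := \emptyset$ works, and otherwise letting~$D$ be the two neighbors of~$v$ along~$C$ ensures that arbitrarily toggling edges from~$v$ to~$V(G) \setminus D$ preserves~$C$ and hence membership in~$\Pi$; thus~$c_\Pi = 2$. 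For precondition~\condref{property:partitioning:vcbound}, a vertex-minimal graph in~$\Pi$ is a chordless Hamiltonian cycle~$C_n$, so Proposition~\ref{proposition:vcpathcycle} gives~$|V(G)| \leq 2\vc(G)$ and~$p(n) := 2n$ suffices. Theorem~\ref{theorem:partitioningVC} then provides a kernel with~$\Oh(|X|^{2q+1})$ vertices.

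No step is technically difficult; the only thing to guard against is a subtle confusion in the Forest case between the property \emph{being a forest} (not characterized by few adjacencies, since adding a single edge to a large tree can create a cycle arbitrarily far from any fixed vertex set) and the property \emph{containing a cycle} (which is so characterized, by the argument above). Choosing the latter as~$\Pi$ is what makes the reduction go through, since \PartitionPiVC forbids~$\Pi$ on each partite set rather than requiring it.
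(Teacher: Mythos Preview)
Your proposal is correct and follows essentially the same approach as the paper: verify the two preconditions of Theorem~\ref{theorem:partitioningVC} for each row of Table~\ref{table:partitionPiTable} by reusing the work already done in Corollary~\ref{cor:deletionToPiImplications}. The only difference is in the handling of \textsc{Partition into $q$ Forests}: the paper dispatches it as the special case~$\F = \{K_3\}$ of the $\F$-Minor-Free row (since a graph is a forest if and only if it has no $K_3$-minor, and $\Delta(K_3)=2$), whereas you argue directly that ``containing a cycle'' is characterized by two adjacencies and that vertex-minimal members are the cycle graphs~$C_n$; both routes yield $c_\Pi = 2$ and $p(n)=2n$, hence the same kernel bound~$\Oh(|X|^{2q+1})$.
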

\begin{proof}
Since the graph properties~$\Pi$ needed to establish the claims in the table were also used in \corollaryref{cor:deletionToPiImplications}, and the preconditions for \thmref{theorem:deletionToPiVC} are stronger than the preconditions to the current theorem, the proofs given there also apply to this case. The table already lists the relevant choice of~$\Pi$ and the resulting~$c_{\Pi}$ needed to apply \thmref{theorem:partitioningVC}. For completeness we state the corresponding choice of polynomial~$p(n)$, and the resulting size bounds.

\textsc{Partition into~$q$ Independent Sets}. Since the forbidden family~$\Pi$ is finite and contains only a single graph on two vertices,~$c_\Pi = 1$ and~$p(n) = 2$ suffices. We obtain a kernel with~$\Oh(|X|^q)$ vertices, which may also be seen as a kernel for \qcoloring parameterized by vertex cover.

% \textsc{Partition into~$q$ Cliques.} As in the previous case, the forbidden family consists of a single two-vertex graph resulting in a kernel with~$\Oh(|X|^q)$ vertices.

\textsc{Partition into~$q$ Bipartite Graphs.} The graphs with an odd cycle are characterized by~$c_\Pi = 2$ adjacencies. The number of vertices in vertex-minimal graphs in this family is at most twice the vertex cover number, so~$p(n) = 2n$ suffices. The resulting kernel size is~$\Oh(|X|^{2q + 1})$ vertices.

\textsc{Partition into~$q$ Chordal Graphs.} The forbidden family is characterized by~$c_\Pi = 3$ adjacencies and the polynomial~$p(n) = 2n$ suffices, resulting in a kernel with~$\Oh(|X|^{3q+1})$ vertices.

\textsc{Partition into~$q$ \F-Minor-Free Graphs.} As shown in the proof of \corollaryref{cor:deletionToPiImplications} the forbidden family is characterized by~$c_\Pi = \max_{H \in \F} \Delta(H)$ adjacencies and the polynomial can be taken to be a linear function whose coefficient depends on~$\F$. We obtain a kernel with~$\Oh(|X|^{q \cdot \Delta+1})$ vertices, where~$\Delta = \max_{H \in \F} \Delta(H)$.

As the last two problems are special cases of the previous item (with~$\F = \{K_5, K_{3,3}\}$ resp.~$\F = \{K_3\}$), this directly shows that we obtain a kernel with $\Oh(|X|^{4q+1})$ and~$\Oh(|X|^{2q+1})$ vertices for the planar and forest partitioning problems, respectively.
\end{proof}

As mentioned in the introduction, \thmref{theorem:partitioningVC} can be considered a strong generalization of the kernel with~$\Oh(|X|^q)$ vertices for \qcoloring parameterized by vertex cover~\cite[Corollary 1]{JansenK11b}. Despite the generality of \thmref{theorem:partitioningVC}, the size of the \qcoloring kernel obtained through \thmref{theorem:partitioningVC} matches that of the \qcoloring kernel given earlier up to constant factors. In the same paper~\cite{JansenK11b} it is proven that for any~$q \geq 4$ and~$\varepsilon > 0$, \qcoloring parameterized by vertex cover does not have kernels of bitsize~$\Oh(|X|^{q - 1 - \varepsilon})$ unless \containment. This shows that in the kernel size bound of \thmref{theorem:partitioningVC}, the appearance of~$q$ in the exponent is unavoidable.

Other partitioning problems that were listed by Garey and Johnson include \PartitionIntoDominatingSets~\cite[GT3]{GareyJ79} (also known as \DomaticNumber), \PartitionIntoHamiltonianSubgraphs~\cite[GT13]{GareyJ79}, and \PartitionIntoPerfectMatchings~\cite[GT16]{GareyJ79}. These problems cannot be expressed in our framework. The last two have trivial polynomial-size kernels parameterized by vertex cover, as one may easily verify that the size of all \yes-instances is bounded polynomially in their vertex cover number. A polynomial kernel can therefore be obtained by simply rejecting instances that are too large. The problem \PartitionIntoDominatingSets may be interesting for further study.

\section{Subgraph Testing versus Minor Testing} \label{section:orderTesting}
Several important graph problems such as \Clique, \LongPath, and \InducedPath, can be stated in terms of testing for the existence of a certain graph~$H$ as an induced subgraph, or as a minor. Note that for these problems, the size of the graph whose containment in~$G$ is tested is part of the input: the problem is polynomial-time solvable for each constant size. We compared the kernelization complexity of induced subgraph- versus minor testing for various types of graphs, parameterized by vertex cover, and found the surprising outcome that the kernelization complexity is often opposite: one variant admits a polynomial kernel while the other does not, assuming \ncontainment. In Sections~\ref{subsection:test:clique}--\ref{subsection:test:matching} we discuss our findings separately for each type of graph whose containment is tested. A summary of our results is given in \tableref{table:orderTesting}. Considering the list of positive and negative results in the table, one might conjecture that testing for an induced $H$-subgraph with~$\vc(H) \in \Oh(1)$ admits a polynomial kernel. In Section~\ref{subsection:test:subgraph:constant:vc} we prove that this implies \containment, and is therefore unlikely. Similarly, the results in the table might lead one to conjecture that testing for any $H$-minor with~$|V(H)| \in \Oh(\vc(G))$ admits a polynomial kernel. However, we prove in Section~\ref{subsection:test:minor:small:vc} that this also implies \containment.

\begin{table}[t]
	\centering
		\begin{tabular}{@{}ll@{}lrl@{}lr@{}}
		\toprule
Graph~$H$ & & \multicolumn{2}{l}{Testing for induced~$H$} & & \multicolumn{2}{l}{Testing for $H$-minor} \\ \midrule
$K_t$ & $\neg$ & $\exists |X|^{\Oh(1)}$ kernel & \cite{BodlaenderJK11} & & $\exists |X|^{\Oh(1)}$ kernel & (Thm.~\ref{theorem:CliqueMinorKernel}) \\
$K_{1,t}$ & & $\exists |X|^{\Oh(1)}$ kernel & (Thm.~\ref{theorem:InducedStarKernel}) & $\neg$ & $\exists |X|^{\Oh(1)}$ kernel & \cite{DomLS09} \\
% Above positive news holds even for~$K_{\Oh(1), t}$ 
$K_{s,t}$ & $\neg$ & $\exists |X|^{\Oh(1)}$ kernel & (Thm.~\ref{theorem:InducedBiCliqueNoPoly}) & $\neg$ & $\exists |X|^{\Oh(1)}$ kernel & \cite{DomLS09} \\
$P_t$ & $\neg$ & $\exists |X|^{\Oh(1)}$ kernel & (Thm.~\ref{theorem:inducedPathByVCNoPoly}) & & $\exists |X|^{\Oh(1)}$ kernel & (Thm.~\ref{theorem:largestInducedPiVC}) \\ % Could also cite \cite{BodlaenderJK12c} for this one, but out of space.
$t \cdot K_2$ & $\neg$ & $\exists |X|^{\Oh(1)}$ kernel & (Thm.~\ref{theorem:inducedMatchingByVCNoPoly}) & & \multicolumn{2}{l}{P-time solvable}\\
	\bottomrule
	\end{tabular}
	\caption{Kernelization complexity of testing for induced~$H$ subgraphs versus testing for~$H$ as a minor, when the graph~$H$ is given as part of the input by specifying the index~$t$. The problems are parameterized by the size of a given vertex cover. Kernel lower bounds are under the assumption that \ncontainment.}
\label{table:orderTesting}
\end{table}

\subsection{Testing for Cliques} \label{subsection:test:clique}
The \Clique problem (i.e., testing for~$K_t$ as an induced subgraph) was one of the first problems known not to admit a polynomial kernel parameterized by the size of a given vertex cover~\cite[Theorem 11]{BodlaenderJK11}. Our main result of this section is a polynomial kernel for the related minor testing problem.

\parproblemdef
{\CliqueMinorTestVC}
{A graph~$G$ with a vertex cover~$X$, and an integer~$t \geq 1$.}
{The size~$|X|$ of the vertex cover.}
{Does~$G$ contain~$K_t$ as a minor?}

\noindent Our polynomial kernel uses reduction rules based on simplicial vertices, inspired by the recent work on kernels for \Treewidth~\cite{BodlaenderJK11b}.

\begin{theorem} \label{theorem:CliqueMinorKernel}
\CliqueMinorTestVC admits a kernel with $\Oh(|X|^4)$ vertices.
\end{theorem}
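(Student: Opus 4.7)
The plan is to first bound $t$ polynomially in $|X|$ so that we may treat $t$ as part of a small instance, and then apply marking-based reduction rules in the spirit of the simplicial-vertex reductions used in the \Treewidth kernel.

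First, I would observe that we may assume $t \leq |X|+1$. In any minor model $\phi\colon V(K_t)\to 2^{V(G)}$ each branch set is connected; since $V(G)\setminus X$ is independent, a branch set contained in $V(G)\setminus X$ must be a single vertex, and two such singleton branch sets cannot be adjacent in $G$. Hence at most one branch set is a singleton outside $X$, and the other $t-1$ branch sets must each meet $X$, giving $t-1 \leq |X|$. If $t > |X|+1$ we output a trivial \no-instance, so from now on $t \leq |X|+1$ is polynomial in the parameter.

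Second, I would classify each vertex $v \in V(G)\setminus X$ as \emph{simplicial} when $N_G(v)\subseteq X$ induces a clique in $G[X]$, and \emph{non-simplicial} otherwise. If any simplicial $v$ satisfies $|N_G(v)| \geq t-1$, then $\{v\}\cup N_G(v)$ already contains $K_t$ as a subgraph and we answer \yes; otherwise simplicial vertices have neighborhood of size at most $t-2 \leq |X|$ in $X$. The reduction then marks, for each subset $Y\subseteq X$ of size at most a fixed constant $c$ together with a partition of $Y$ into adjacent / non-adjacent demands (in the style of \Reduce from Section~\ref{section:metatheorems}), a $\poly(|X|)$-sized collection of vertices in $V(G)\setminus X$ realizing that adjacency pattern on $Y$; all unmarked vertices outside $X$ are deleted. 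Balancing the number of subsets against the quota per subset produces an $\Oh(|X|^4)$-vertex graph.

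Third, the safeness argument is a substitution argument: given a $K_t$-minor model in $G$ that uses a deleted vertex $v \in V(G)\setminus X$, identify the small set $D$ of ``relevant'' neighbors of $v$ in $X$ (its neighborhood inside the branch set containing $v$, together with one vertex per adjacent branch set), and observe that by the marking scheme there exists a retained vertex $v'\in V(G)\setminus X$ of the same adjacency signature on $D$ and not already used in the model. Replacing $v$ by $v'$ preserves connectivity of the branch set containing $v$ and all cross-branch-set adjacencies, yielding a $K_t$-minor model in the reduced graph; iterating gives the full correctness.

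The main obstacle will be finding the right granularity for the marking scheme: the signatures must be fine enough that a retained $v'$ can always play $v$'s structural role (bridging a pair of non-adjacent $X$-vertices in a branch set for non-simplicial $v$, or acting as an extra clique-vertex for simplicial $v$), yet coarse enough that the total number of marked vertices stays $\Oh(|X|^4)$. The interplay between the (at most one) singleton branch set in $V(G)\setminus X$, the requirement that branch sets be vertex-disjoint, and the bound $t \leq |X|+1$ is what controls the final polynomial degree, and getting these pieces to line up is the delicate part of the argument.
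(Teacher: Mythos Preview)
Your opening observation that~$t \leq |X|+1$ and the treatment of high-degree simplicial vertices match the paper exactly. The core of your reduction, however, has a genuine gap.

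You propose a \Reduce-style marking scheme over subsets~$Y \subseteq X$ of size at most a \emph{fixed constant}~$c$, followed by a substitution argument: if a deleted~$v \in V(G)\setminus X$ participates in a~$K_t$-model, replace it by a surviving~$v'$ with the same adjacency signature on the ``relevant'' set~$D$. But you never name~$c$, and your own description of~$D$ (``its neighborhood inside the branch set containing~$v$, together with one vertex per adjacent branch set'') shows why no constant will do: a vertex of~$K_t$ has degree~$t-1$, so~$D$ can have size up to~$t-1$, which may be as large as~$|X|$. Marking over size-$|X|$ subsets of~$X$ is exponential, and marking over constant-size subsets gives you no~$v'$ guaranteed to match~$v$ on all of~$D$. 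Your final paragraph essentially concedes that this ``granularity'' issue is unresolved; it is not a detail but the crux of the proof.

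The paper avoids substitution entirely and uses a different mechanism. Its Rule~1 \emph{adds} the edge~$vw$ inside~$X$ whenever more than~$(|X|+1)^2$ vertices of~$V(G)\setminus X$ are adjacent to both~$v$ and~$w$; safeness holds because, via \proposref{proposition:boundedDegreeMinorModel}, any~$K_t$-model touches at most~$(|X|+1)^2$ vertices, so one common neighbor~$y$ remains unused and can be absorbed into a branch set to simulate the new edge. Rule~3 then deletes every simplicial vertex of degree less than~$t-1$ outright: such a vertex is never essential, since its clique neighborhood already provides every connection it could offer. After these rules, every surviving vertex outside~$X$ is non-simplicial and hence witnesses some non-edge of~$G[X]$; with at most~$\binom{|X|}{2}$ non-edges and at most~$(|X|+1)^2$ witnesses each, the~$\Oh(|X|^4)$ bound follows. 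The key difference from your plan is that the paper modifies~$G[X]$ rather than trying to preserve per-vertex replaceability, which sidesteps the unbounded-degree obstacle you ran into.
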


The remainder of this section is devoted to the proof of the theorem. Firstly, observe that if a graph has a clique~$K_t$ as a minor, then its vertex cover number is at least~$t-1$: taking a minor does not increase the vertex cover number, and~$\vc(K_t) = t-1$. Therefore, we assume that $t\leq |X|+1$, as otherwise we may output a trivial \no-instance. Our algorithm is based on three reduction rules. In the following, we assume that the reduction rules are exhaustively applied in their given order.

\begin{redrule} \label{rule:CliqueMinorFilling}
If there are distinct vertices~$v,w \in X$ such that $vw\notin E(G)$ and there are more than $(|X|+1)^2$ vertices in $V(G)\setminus X$ adjacent both to~$v$ and~$w$, then add the edge $vw$. Output the resulting instance $(G',X,t)$.
%For every distinct pair~$v,w\in X$ such that $vw\notin E(G)$, if there are more than $(|X|+1)^2$ vertices in $V(G)\setminus X$ adjacent both to $v$ and $w$, then add the edge $vw$. Output the resulting instance $(G',X,t)$.
\end{redrule}

\begin{lemma}
\ruleref{rule:CliqueMinorFilling} is safe.
\end{lemma}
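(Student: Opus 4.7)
The forward direction is immediate: since $G$ is a subgraph of $G'$, any $K_t$-minor model in $G$ transfers verbatim to $G'$. The substance of the lemma lies in the reverse direction: given a $K_t$-minor model $\phi$ in $G'$, I will produce one in $G$. Observe at the outset that we may assume $t \leq |X|+1$, since $\vc(K_t) = t-1$ and taking minors does not increase vertex cover number; larger $t$ gives a trivial \no{}-instance and the rule changes nothing.

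If the realization of $\phi$ does not use the newly added edge $vw$ (i.e.\ either $v,w$ lie in a common branch set, or the edge witnessing the $K_t$-edge between $\phi(a)\ni v$ and $\phi(b)\ni w$ is some other edge of $G'$), then $\phi$ is already a valid model in $G$ and we are done. Otherwise $v \in \phi(a)$, $w \in \phi(b)$ with $a\neq b$, and $vw$ is the unique inter-branch-set edge realizing $ab$. The plan is to reroute this connection through a common neighbour of $v$ and $w$: if I can find $z \in Z := (N_G(v)\cap N_G(w))\setminus X$ that is not used by $\phi$, then setting $\phi'(a) := \phi(a)\cup\{z\}$ preserves connectivity (via $vz\in E(G)$) and witnesses the edge $ab$ in $G$ via $zw \in E(G)$; all other branch-set edges were already present in $G$ since the only change was adding $vw$, so $\phi'$ is a valid minor model in $G$.

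The heart of the proof is thus a pigeonhole argument: I will show $|Z| > (|X|+1)^2$ forces the existence of such a $z \notin \bigcup_c \phi(c)$. To this end, choose $\phi$ minimizing $\sum_c |\phi(c)|$. By minimality each $\phi(c)$ induces a tree in $G$ whose every leaf must be a ``connection vertex'' to some other branch set (else the leaf could be pruned); since $\phi(c)$ needs only $t-1$ connection vertices, the tree has at most $t-1$ leaves. Because $V(G)\setminus X$ is independent, internal non-cover vertices of the tree have all their neighbours in $X$; a short double-counting of tree-edge endpoints then yields $|\phi(c)| \leq 2|\phi(c)\cap X| + (t-2)$. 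Summing over $c$ using $\sum_c |\phi(c)\cap X| \leq |X|$ and $t \leq |X|+1$ gives
\[
\textstyle\sum_c |\phi(c)| \;\leq\; 2|X| + t(t-2) \;\leq\; |X|^2 + 2|X| - 1 \;<\; (|X|+1)^2 \;<\; |Z|,
\]
so some $z \in Z$ avoids all branch sets.

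The main obstacle is the counting step bounding $|\phi(c)|$ in terms of $|\phi(c)\cap X|$ and the number of leaves; it must combine minimality of $\phi$ (to control leaves and rule out redundant internal vertices) with the vertex-cover structure (to control non-cover internal vertices). Once this structural bound is in place, the rest of the argument (reduction to $t\leq |X|+1$, handling the trivial case where $vw$ is unused, and the rerouting via $z$) is routine verification.
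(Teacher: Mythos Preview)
Your counting argument is correct and gives a nice self-contained alternative to the paper's approach. The paper instead invokes Proposition~\ref{proposition:boundedDegreeMinorModel} (which, applied to~$K_t$, directly yields a model on at most $t + \vc(G')\cdot t \leq (|X|+1)^2$ vertices), so it avoids the tree--leaf--vertex-cover double count you perform. Your route is more elementary in that it does not need the general proposition, at the cost of a slightly more delicate pigeonhole step; both arrive at the same numerical bound.

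There is one genuine gap, however. You write that if $v,w$ lie in a common branch set then ``$\phi$ is already a valid model in $G$''. This is not true: the edge $vw$ may be needed for the \emph{connectivity} of $G'[\phi(c)]$, so that $G[\phi(c)]$ is disconnected. (Minimality of the vertex count does nothing to prevent this.) The fix is exactly the rerouting you already describe for the two-branch-set case: pick $z\in Z$ outside all branch sets and set $\phi'(c):=\phi(c)\cup\{z\}$; since $z$ is adjacent in $G$ to both $v$ and $w$, this restores connectivity of the branch set in $G$. The paper handles both sub-cases uniformly in precisely this way (it explicitly allows $u_1=u_2$). A second, cosmetic slip: your spanning tree lives in $G'[\phi(c)]$, not in $G[\phi(c)]$; this does not affect the counting because $V(G')\setminus X$ is still independent in $G'$.
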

\begin{proof}
Let~$G'$ be obtained from~$G$ by applying the reduction rule to~$v$ and~$w$. As~$G$ is a subgraph of~$G'$, any clique minor in~$G$ is also contained in~$G'$. Therefore we need to argue that if~$G'$ admits a~$K_t$ minor, then~$G$ admits one as well.

Assume that~$G'$ has a~$K_t$ minor, and let~$G^*$ be a subgraph of~$G'$ containing a~$K_t$ minor model~$\phi$ such that~$|V(G^*)| \leq |V(K_t)| + \vc(G') \cdot (\Delta(K_t) + 1) = t + \vc(G') \cdot t$, whose existence is guaranteed by \proposref{proposition:boundedDegreeMinorModel}. As~$\vc(G') \leq |X|$ it follows that $|\bigcup _{v \in K_t} \phi(v)| \leq t + |X| \cdot t$. Since~$t \leq |X| + 1$ the number of vertices involved in the minor model is at most~$(|X|+1)^2$. Hence by the precondition to the reduction rule, there is a vertex~$y$ adjacent to both~$v$ and~$w$ which is not used in the minor model.

Observe that if~$\phi$ avoids one of~$v$ and~$w$, it is also a clique model in~$G$. Assume then that $v\in \phi(u_1)$ and $w\in \phi(u_2)$; it may happen that~$u_1=u_2$. Now we can transform~$\phi$ into a clique minor model~$\phi'$ in~$G$, by adding~$y$ to $\phi(u_1)$: contraction of the edge~$vy$ in this branch set creates the edge~$vw$ that was missing in~$G$.
\myqed
\end{proof}

Note that exhaustive application of this rule already bounds the number of vertices in $V(G)\setminus X$ that are not simplicial. The next two rules take care of the simplicial vertices.

\begin{redrule} \label{rule:CliqueMinorLargeDegree}
If there exists a simplicial vertex~$s\in V(G)\setminus X$ such that $\deg(s)\geq t-1$, output a trivial \yes-instance.
\end{redrule}

Correctness of \ruleref{rule:CliqueMinorLargeDegree} is obvious, as~$s$ together with its neighborhood already forms a~$K_t$. The following rule is more involved.

\begin{redrule} \label{rule:CliqueMinorSmallDegree}
If there exists a simplicial vertex~$s\in V(G)\setminus X$ such that $\deg(s)<t-1$, delete it. Output the resulting instance~$(G',X,t)$.
\end{redrule}

\begin{lemma} \label{lemma:cliqueMinorSmallDegreeSafe}
\ruleref{rule:CliqueMinorSmallDegree} is safe.
\end{lemma}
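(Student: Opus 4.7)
The plan is to establish safety in both directions of the equivalence between $(G,X,t)$ and $(G',X,t)$ where $G' = G - s$. The easy direction is immediate: since $G'$ is a subgraph of $G$, any $K_t$-minor of $G'$ is also a $K_t$-minor of $G$. So the substantive task is to show that if $G$ has a $K_t$-minor, then $G - s$ still has one. I will do this by taking an arbitrary $K_t$-minor model $\phi$ of $G$ and transforming it into a minor model that avoids $s$.

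The argument splits into cases depending on how $s$ interacts with $\phi$. If $s \notin \bigcup_{v \in V(K_t)} \phi(v)$, the same model works in $G'$. Otherwise, $s \in \phi(u)$ for some $u \in V(K_t)$. The key subcase to rule out is $\phi(u) = \{s\}$: since $u$ has degree $t-1$ in $K_t$, there must be edges in $G$ from $s$ to $t-1$ distinct branch sets, forcing $\deg_G(s) \geq t-1$ and contradicting the precondition of the rule.

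In the remaining subcase, $\phi(u)$ contains at least one vertex besides $s$. I claim the modified model $\phi'$ defined by $\phi'(u) := \phi(u) \setminus \{s\}$ and $\phi'(v) := \phi(v)$ for $v \neq u$ is still a valid $K_t$-minor model in $G'$. Here is where simpliciality does the work. For connectivity of $G[\phi(u)\setminus\{s\}]$: any path in $G[\phi(u)]$ that uses $s$ enters and leaves through two neighbors of $s$, which lie in the clique $N_G(s)$ and are therefore adjacent, so the path can be shortcut to avoid $s$. For the adjacency requirement between $\phi'(u)$ and each $\phi'(u')$ with $uu' \in E(K_t)$: the only problem is if the sole edge from $\phi(u)$ to $\phi(u')$ was some edge $sb$ with $b \in \phi(u')$. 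Since $\phi(u)$ is connected and contains more than just $s$, there exists a vertex $a' \in \phi(u) \setminus \{s\}$ with $a' \in N_G(s)$. Both $a'$ and $b$ lie in the clique $N_G(s)$, so $a'b \in E(G)$ provides the required replacement edge, which survives in $G'$ since $a',b \neq s$.

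The main (though mild) obstacle is just keeping the bookkeeping of the two simpliciality applications clean — once for the connectivity shortcut inside $\phi(u)$, and once for producing the replacement inter-branch edge. Both uses hinge on the single fact that $N_G(s)$ is a clique, which is exactly the hypothesis making $s$ simplicial, while the degree bound $\deg_G(s) < t-1$ is only needed to eliminate the singleton case $\phi(u)=\{s\}$. With these two ingredients the transformation $\phi \mapsto \phi'$ is immediate, completing the nontrivial direction and establishing safety of \ruleref{rule:CliqueMinorSmallDegree}.
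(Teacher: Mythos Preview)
Your proof is correct and follows essentially the same approach as the paper: both argue the nontrivial direction by removing $s$ from its branch set and using simpliciality of $s$ to repair connectivity and inter-branch adjacencies. Your version simply spells out the two uses of the clique property (the connectivity shortcut and the replacement edge) more explicitly than the paper, which compresses both into the single remark that ``all the connections that were introduced by~$s$ are already present in the clique~$N_G(s)$.''
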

\begin{proof}
As~$G'$ is a subgraph of~$G$, any clique minor in~$G'$ also exists in~$G$. Therefore, we need to argue that if $G$ admits a $K_t$ minor, then~$G'$ does as well.

Let~$\phi$ be a clique minor model in~$G$. If~$s$ does not belong to any branch set~$\phi(v)$ for~$v \in K_t$, then~$\phi$ is also a clique minor in~$G'$ and we are done. Assume then that~$s\in \phi(v)$. Observe that~$\phi(v)$ has to contain at least one vertex from~$N_G(s)$, as otherwise we would have that $\phi(v) = \{s\}$ and this~$\phi(v)$ would be able to touch at most~$t-2$ other branch sets. Obtain~$\phi'$ from~$\phi$ by removing~$s$ from~$\phi(v)$ and observe that~$\phi'$ is a~$K_t$ model in~$G'$: all the connections that were introduced by~$s$ are already present in the clique~$N_G(s)$.
\myqed
\end{proof}

The running time of the kernelization algorithm is polynomial, as the presented reduction rules can only add edges inside~$X$ and remove vertices from~$V(G)\setminus X$. Exhaustive application of the reduction rules results in an instance with at most~$(|X|+1)^4$ vertices.

\begin{lemma} \label{lemma:cliqueMinorKernelSizeBound}
If Reduction Rules~\ref{rule:CliqueMinorFilling}--\ref{rule:CliqueMinorSmallDegree} are not applicable, then $|V(G)|\leq (|X|+1)^4$.
\end{lemma}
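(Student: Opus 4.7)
The plan is to partition $V(G)$ into $X$ and $V(G) \setminus X$, bound each part, and combine. Since $X$ is a vertex cover, every vertex $s \in V(G) \setminus X$ satisfies $N_G(s) \subseteq X$. The key structural observation I would establish first is that after exhaustive application of \ruleref{rule:CliqueMinorLargeDegree} and \ruleref{rule:CliqueMinorSmallDegree}, no vertex in $V(G) \setminus X$ is simplicial: every such vertex $s$ admits two non-adjacent vertices $v_s, w_s \in N_G(s) \subseteq X$.

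Next I would associate each $s \in V(G) \setminus X$ with some non-adjacent pair $\{v_s,w_s\}$ in $X$, and count by pairs. For any fixed non-adjacent pair $\{v,w\} \subseteq X$, non-applicability of \ruleref{rule:CliqueMinorFilling} forces the set of vertices in $V(G)\setminus X$ adjacent to both $v$ and $w$ to have size at most $(|X|+1)^2$. Summing over all non-adjacent pairs in $X$, of which there are at most $\binom{|X|}{2}$, yields
\[
|V(G)\setminus X| \;\leq\; \binom{|X|}{2} \cdot (|X|+1)^2 \;\leq\; \tfrac{1}{2}(|X|+1)^4.
\]
Adding back the $|X|$ vertices of $X$ gives $|V(G)| \leq |X| + \tfrac{1}{2}(|X|+1)^4 \leq (|X|+1)^4$, which is the desired bound.

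There is no real obstacle here — the argument is a short counting proof once the two structural ingredients (no simplicial vertices outside $X$, and bounded common neighborhoods for non-adjacent pairs in $X$) are in place. The only point worth verifying carefully is that a non-simplicial $s \in V(G)\setminus X$ indeed yields a non-adjacent pair \emph{inside $X$}, which uses both that $X$ is a vertex cover (so $N_G(s) \subseteq X$) and the definition of simpliciality. The final arithmetic comparison $|X| + \tfrac{1}{2}(|X|+1)^4 \leq (|X|+1)^4$ holds trivially for all $|X| \geq 1$, so no boundary cases require separate treatment.
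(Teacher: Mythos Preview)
Your proof is correct and follows exactly the paper's own argument: use that Rules~\ref{rule:CliqueMinorLargeDegree} and~\ref{rule:CliqueMinorSmallDegree} eliminate simplicial vertices outside~$X$, assign each remaining vertex of $V(G)\setminus X$ to a non-edge of~$X$ in its neighborhood, and bound each such bucket by $(|X|+1)^2$ via non-applicability of Rule~\ref{rule:CliqueMinorFilling}. The only difference is cosmetic---you spell out the intermediate inequality $\binom{|X|}{2}(|X|+1)^2 \leq \tfrac{1}{2}(|X|+1)^4$ and the final arithmetic check, which the paper leaves implicit.
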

\begin{proof}
After exhausting Reduction Rules \ref{rule:CliqueMinorLargeDegree} and \ref{rule:CliqueMinorSmallDegree}, there are no simplicial vertices in $V(G)\setminus X$. As \ruleref{rule:CliqueMinorFilling} is not applicable, for each of the at most $\binom{|X|}{2}$ non-edges in $X$ there are at most $(|X|+1)^2$ vertices of $V(G)\setminus X$ adjacent to both endpoints. As every vertex of~$V(G)\setminus X$ is adjacent to the endpoints of some non-edge, $|V(G')| \leq |X| + \binom{|X|}{2} \cdot (|X|+1)^2 \leq (|X|+1)^4$.
\myqed
\end{proof}

This concludes the proof of \thmref{theorem:CliqueMinorKernel}. Let us briefly consider the possibility of extending this result to other graph classes than cliques. \ruleref{rule:CliqueMinorFilling} can be generalized to the setting of testing for any graph of bounded independence number as a minor; cliques are the special case of independence number one. If the graph to be tested has independence number at most~$\alpha$, then we may add an edge between distinct nonadjacent vertices~$v,w$ in~$X$ if there are more than~$(|X|+\alpha)^2$ vertices in~$V(G) \setminus X$ that are adjacent to both~$v$ and~$w$. This rule allows the number of nonsimplicial vertices in the graph to be bounded by a polynomial in the vertex cover size. \ruleref{rule:CliqueMinorLargeDegree} also goes through in the general case; if~$G$ has a simplicial vertex of degree at least~$t-1$, then it has a $t$-clique, and therefore contains all graphs on at most~$t$ vertices as a minor. There seems to be no counterpart of \ruleref{rule:CliqueMinorSmallDegree} in the general case, though. The proof of \thmref{theorem:hminortestbyvch:nopoly} shows that the low-degree simplicial vertices are the hardest to get rid of, since no other types of vertices are needed in that kernelization lower bound construction.

\subsection{Testing for Bicliques}
We now consider the problem of testing for a biclique as an induced subgraph or as a minor. Observe first that if~$G$ is a connected graph on at least three vertices, then the following conditions are equivalent: graph~$G$ has (a) a spanning tree with~$t$ or more leaves, (b) a $K_{1,t}$ minor, (c) a connected dominating set of size at most~$|V(G)| - t$. Hence there is a trivial polynomial-parameter transformation~\cite{Bodlaender09} from \ConnectedDominatingSetVC to \StarMinorTestVC. Dom et al.~\cite[Theorem 5]{DomLS09} showed\footnote{The lower bound they give is for \DominatingSet parameterized by vertex cover, but a trivial transformation extends it to \ConnectedDominatingSet.} that the former problem does not admit polynomial kernels unless \containment. Using the fact that the classical versions of both problems are NP-complete, and the propagation of kernelization lower bounds by polynomial-parameter transformations~\cite[Theorem 8]{BodlaenderTY11}, this implies that \StarMinorTestVC does not admit a polynomial kernel unless \containment.

The situation is more diverse when testing for a biclique as an induced subgraph. If we fix a constant~$c$ and wish to test for a biclique~$K_{c,t}$ as induced subgraph, where~$t$ is part of the input, then this problem admits a polynomial kernel parameterized by vertex cover. The kernel is developed in \sectref{section:inducedstarkernel}. Our main insight is a polynomial-size compression which is obtained by guessing the model of the constant-size partite set within the vertex cover, reducing the problem to the OR of~$\binom{|X|}{c}$ instances of \IndependentSet parameterized by vertex cover. As \IndependentSet parameterized by vertex cover is equivalent to \VertexCover parameterized by the size of a given (suboptimal) vertex cover, each of these can be compressed to a size polynomial in~$|X|$ using \thmref{theorem:deletionToPiVC}. The NP-completeness transformation then results in an instance of the original problem of size~$\Oh(|X|^{\Oh(1)})$ which forms the kernel.

If the sizes of both partite sets are part of the input, then we can no longer obtain a polynomial kernel. In \sectref{section:inducedbycliquelowerbound} we give a cross-composition from \BipartiteBiclique to show that testing for an induced~$K_{s,t}$ subgraph, parameterized by vertex cover, does not admit a polynomial kernel unless \containment.

\subsubsection{Polynomial Kernel for Induced \texorpdfstring{$K_{c,t}$}{K(c,t)}-testing} \label{section:inducedstarkernel}
We give a polynomial kernel for the following problem.

\parproblemdef
{\ConstantBicliqueTest}
{A graph~$G$ with a vertex cover~$X$, and an integer~$t \geq 1$.}
{The size~$|X|$ of the vertex cover.}
{Does~$G$ contain~$K_{c,t}$ as an induced subgraph?}

\noindent Observe that~$c$ is treated as a constant, rather than a variable. The classical version \ConstantBicliqueTestClassical is NP-complete, which will be used in the main proof of this section.

% The following argument gives a straightforward way to proving NP-completeness for c=1, but doesn't seem to handle c>1 easily.
%The NP-completeness of \ConstantBicliqueTestClassical follows from a classic result of Yannakakis~\cite{Yannakakis78}. He proved that for any hereditary graph property~$\Pi$ that holds for infinitely many graphs, but not for all graphs, it is NP-complete to determine, given a graph~$G$ and integer~$t$, whether~$G$ contains a set of at least~$t$ vertices that induce a connected subgraph of~$G$ that belongs to~$\Pi$. Consider the hereditary graph class~$\Pi$ consisting of the graphs in which at most one vertex has degree more than one. This class contains infinitely many, but not all graphs. It is easy to see that a connected $t$-vertex graph in~$\Pi$ is isomorphic to~$K_{1, t-1}$. Hence \ConstantBicliqueTestClassical is covered by Yannakakis' result, implying that it is NP-complete. This will be used in the proof of the following theorem. % Q: How to get it for all c? How to characterize the graph class? Let Pi_c be the bipartite graphs in which at most c vertices have degree more than one. 

\begin{proposition} \label{proposition:constantBiCliqueTestNPC}
\ConstantBicliqueTestClassical is NP-complete for every constant nonnegative integer~$c$.
\end{proposition}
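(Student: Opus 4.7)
Membership in \NP is immediate: a certificate consists of the $c + t$ vertices forming the induced biclique, together with the bipartition into the two sides of sizes $c$ and $t$, which can be verified in polynomial time. For \NP-hardness, the plan is a polynomial-time reduction from \IndependentSet.

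For the reduction, first observe that when $c = 0$ the graph $K_{0,t}$ is simply $t$ isolated vertices, so asking for an induced $K_{0,t}$ in $G$ is literally \IndependentSet and there is nothing to prove. Assume henceforth that $c \geq 1$. If $k \leq c$ then, since $c$ is a fixed constant, we can decide \IndependentSet by brute force in polynomial time and output a trivially equivalent yes- or no-instance. In the remaining case $k > c$, construct a graph $G$ from the input graph $H$ of the \IndependentSet instance by adding $c$ new vertices $y_1, \ldots, y_c$, making each $y_i$ adjacent to every vertex of $V(H)$, and leaving $\{y_1, \ldots, y_c\}$ pairwise non-adjacent. Set $t := k$ and output $(G, t)$.

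The forward direction is immediate: any independent set $I \subseteq V(H)$ of size $k$ gives an induced $K_{c,k}$ on partite sets $\{y_1, \ldots, y_c\}$ and $I$, since within $I$ there are no edges, between $I$ and the $y_i$'s all edges are present, and among the $y_i$'s there are no edges. The backward direction is where I expect the only real content. Assume $G$ contains an induced $K_{c,t}$ with partite sets $A$ (size $c$) and $B$ (size $t$). The key observation is that, because $|B| = t > c$, no $y_i$ can lie in $B$: if $y_i \in B$, then the remaining $t - 1$ vertices of $B$ must all be non-neighbors of $y_i$ in $G$, but the only non-neighbors of $y_i$ in $G$ are $\{y_1, \ldots, y_c\} \setminus \{y_i\}$, forcing $|B| \leq c$, a contradiction. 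Hence $B \subseteq V(H)$, and since $B$ is an independent set in $G$ it is an independent set in $H$ of size $k$, as required.

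No further analysis of $A$ is needed; the argument hinges entirely on the fact that the larger partite side cannot absorb any of the apex vertices $y_i$. The main potential obstacle is keeping the edge cases ($c = 0$, small $k$) clean, but both are dispatched cheaply as above thanks to $c$ being a constant of the problem definition rather than part of the input.
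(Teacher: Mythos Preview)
Your proof is correct. Both you and the paper reduce from \IndependentSet by adjoining new vertices intended to populate the size-$c$ side of the biclique, but the constructions and the forcing arguments differ. The paper pads the instance with a large set~$A$ of isolated vertices and a large set~$B$ of vertices complete to~$A \cup V(G)$, asks for an induced $K_{c,\,k+2n+2c}$, and then uses a \emph{degree} argument to show that any vertex on the size-$c$ side must come from~$B$; the independent set is then recovered from the neighborhood of such a vertex after stripping away~$A$. Your construction is leaner: you add exactly~$c$ apex vertices complete to~$V(H)$, ask for~$K_{c,k}$ directly, and use a \emph{non-neighbor count} (each apex has only~$c-1$ non-neighbors) to force the size-$k$ side entirely into~$V(H)$. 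Your route avoids the extra padding and the shifted target value, at the modest cost of handling the case~$k \leq c$ separately; since~$c$ is a fixed constant this is harmless, exactly as you note.
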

\begin{proof}
If~$c=0$ then the problem is equivalent to the NP-complete \IndependentSet problem~\cite[GT 20]{GareyJ79}. For~$c \geq 1$ we show how to reduce an instance~$(G,k)$ of \IndependentSet, asking whether~$G$ has an independent set of size at least~$k$, to an equivalent instance of \ConstantBicliqueTestClassical, as follows. Let~$n$ be the number of vertices in~$G$. Form the graph~$G'$ by first adding~$2n+2c$ isolated vertices~$A$ to~$G$, and then adding~$2n + 2c$ independent vertices~$B$ which are adjacent to~$A \cup V(G)$. Then~$G'$ has an induced~$K_{c, k + 2n + 2c}$ subgraph if and only if~$G$ has an independent set of size~$k$. In one direction, it is easy to verify that the vertices of a size-$k$ independent set in~$G$, taken together with~$A \cup B$, induce a~$K_{c, k+2n+2c}$ subgraph in~$G'$. In the other direction, consider a vertex set~$S' \subseteq V(G')$ that induces a~$K_{c, k+2n+2c}$ subgraph. Let~$v \in V(G')$ correspond to a vertex in the size-$c$ side of the biclique, by the isomorphism. Then~$v$ has degree at least~$k + 2n + 2c$ in~$G'$, since that is the degree of vertices in the $c$-side of the biclique. Now observe that for any~$x \in V(G)$, we have~$N_{G'}(x) \subseteq V(G) \cup B$ so~$\deg_{G'}(x) \leq |V(G)| + |B| = n + c$. For~$y \in A$, we have~$N_{G'}(y) \subseteq B$ so~$\deg_{G'}(y) \leq |B| = c$. As~$V(G') = V(G) \cup A \cup B$ this implies that~$v \in B$. As a vertex in the size-$c$ side of~$K_{c, k + 2n + 2c}$ has an independent set of size~$k + 2n + 2c$ in its neighborhood, and~$v$ corresponds to such a vertex by the isomorphism, we find that~$N_{G'}(v)$ contains an independent set of size~$k+2n+2c$. By construction we have~$N_{G'}(v) \subseteq A \cup V(G)$. As~$|A| = 2n + 2c$, there is an independent set of size at least~$k$ in~$N_{G'}(v) \setminus A = V(G)$. Since this set is also independent in~$G$, this proves the equivalence of the two instances and completes the proof.
\myqed
\end{proof}

\noindent With this proposition we can prove the following theorem.

\begin{theorem} \label{theorem:InducedStarKernel}
\ConstantBicliqueTest admits a polynomial kernel for every constant~$c$.
\end{theorem}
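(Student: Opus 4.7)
My strategy is guess-and-compress: enumerate every possible location of the $c$-partite set of the target $K_{c,t}$ inside the vertex cover $X$, convert each guess into an \IndependentSet subinstance that fits \thmref{theorem:deletionToPiVC}, compress all of them, and glue the polynomially-many compressed instances back into a single \ConstantBicliqueTest instance using the NP-completeness established in \proposref{proposition:constantBiCliqueTestNPC}.

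After dispensing with the easy regime $t \leq c$ (constant-size biclique, solvable in polynomial time via $|V(G)|^{\Oh(c)}$ brute force and replaceable by a trivial \yes/\no instance), I would exploit the following structural observation. For any induced $K_{c,t}$ in $G$ with sides $A$ (of size $c$) and $B$ (of size $t$), if $A$ contains a vertex $u \in V(G) \setminus X$ then $B \subseteq N_G(u) \subseteq X$, because $V(G) \setminus X$ is independent. Hence whenever $t > |X|$ the $c$-side must satisfy $A \subseteq X$, and we can afford to enumerate it.

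For each of the $\binom{|X|}{c}$ candidate sets $A \subseteq X$ (checking first that $G[A]$ is edgeless, skipping otherwise), I form the subinstance $\mathcal{I}_A$ consisting of the graph $G[T(A)]$, where $T(A) := \{ v \in V(G) \setminus A : A \subseteq N_G(v) \}$, together with the target independent-set size $t$. A $K_{c,t}$ with $c$-side exactly $A$ exists in $G$ iff $G[T(A)]$ contains an independent set of size $t$, and since $T(A) \cap X$ is a vertex cover of $G[T(A)]$ of size at most $|X|$, each $\mathcal{I}_A$ is (by complementation) an instance of \VertexCoverByVC. Applying the kernel for \VertexCoverByVC from \thmref{theorem:deletionToPiVC} (see \tableref{table:DeletionToPiTable}) compresses $\mathcal{I}_A$ to size $|X|^{\Oh(1)}$. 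The disjunction of the $|X|^{\Oh(c)}$ resulting compressed instances thus has total bitsize $|X|^{\Oh(1)}$; since this OR lies in NP and \ConstantBicliqueTestClassical is NP-complete, a standard polynomial-time Karp reduction converts it into a single equivalent \ConstantBicliqueTest instance of size $|X|^{\Oh(1)}$, which becomes the kernel after being equipped with a trivial vertex cover (e.g., its entire vertex set).

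The main obstacle is the residual regime $t \leq |X|$, in which an induced $K_{c,t}$ can legitimately have its $c$-side straddle $X$ and $V(G) \setminus X$; the naive symmetric idea of guessing the $t$-side inside $X$ fails because $\binom{|X|}{t}$ is not polynomial. My plan here is to branch only on $A_X := A \cap X$ (there are $|X|^{\Oh(c)}$ choices), and for each guess to reduce the residual task of locating the $c - |A_X|$ remaining $c$-side vertices in $V(G) \setminus X$ together with the $t$-side in $\{v \in X \setminus A_X : A_X \subseteq N_G(v)\}$ to a \ConstantBicliqueTest instance of the same form but with strictly smaller $c$, applied to the subgraph induced on these candidate sets; iterating at most $c$ times bottoms out at a pure \IndependentSet subproblem that is again compressible via \thmref{theorem:deletionToPiVC}. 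Verifying soundness of this recursive branching and showing that the total number of leaf subinstances stays polynomial in $|X|$ is what I expect to be the most delicate step.
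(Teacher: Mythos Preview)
Your treatment of the case $t > |X|$ is essentially the same as the paper's: guess the $c$-side inside $X$, reduce each guess to \IndependentSetByVC / \VertexCoverByVC, kernelize those via \thmref{theorem:deletionToPiVC}, and OR them back into one \ConstantBicliqueTest instance through the NP-completeness of \proposref{proposition:constantBiCliqueTestNPC}. That part is fine.

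The gap is in the regime $c < t \leq |X|$. Your recursion branches on $A_X := A \cap X$ and claims to pass to a \ConstantBicliqueTest subinstance with ``strictly smaller $c$''. But the branch $A_X = \emptyset$ gives $c' = c - 0 = c$, so there is no decrease. This branch cannot be dismissed: when $t \leq |X|$ the $c$-side may lie entirely in the independent set $V(G)\setminus X$ (forcing the $t$-side into $X$), and that is a perfectly legitimate induced $K_{c,t}$. Worse, the problem reappears immediately even for $|A_X|\geq 1$: in the residual instance the remaining $c' = c - |A_X|$ vertices of the $c$-side are \emph{required} to lie in $V(G)\setminus X$, hence outside the new vertex cover $J \subseteq X$; the next-level branch on $A'_{X'} = A' \cap J$ is then forced to be empty and you are stuck again. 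So the recursion does not terminate after the first non-trivial step, and the ``at most $c$ iterations'' claim fails.

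The paper sidesteps this entirely with a direct argument that you are missing. Before any branching, it applies one safe reduction rule: delete every vertex $v$ whose neighborhood does not contain an independent set of size $c$ (checkable in $|N_G(v)|^{\Oh(c)}$ time; such a $v$ cannot sit in either side of an induced $K_{c,t}$ once $t>c$). After this rule, every vertex of $V(G)\setminus X$ has an independent $c$-subset of $X$ in its neighborhood. A pigeonhole over the $\binom{|X|}{c}$ possible such subsets now gives: if $|V(G)\setminus X| \geq t\binom{|X|}{c}$, some $t$ vertices of $V(G)\setminus X$ share the same independent $c$-set and we output a trivial \yes-instance; otherwise, still assuming $t\leq|X|$, the whole graph already has at most $|X| + |X|\cdot\binom{|X|}{c} = \Oh(|X|^{c+1})$ vertices and is itself the kernel. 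No recursion is needed in this regime. Only when $t > |X|$ does the paper switch to the enumeration-and-compress scheme you described.
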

\begin{proof}
We may assume that~$t>c$, as otherwise~$K_{c,t}$ is a graph of constant size and we can solve the problem in polynomial time via brute-force. Let~$(G,X,t)$ be the input instance. We provide a polynomial-time algorithm that returns either:
\begin{enumerate}[(i)]
\item one instance of \ConstantBicliqueTest with~$\Oh(|X|^{c+1})$ vertices that is equivalent to~$(G,X,t)$, or \label{case:smallt}
\item at most~$\binom{|X|}{c}$ instances of \IndependentSet, each with~$\Oh(|X|^2)$ vertices, such that~$(G,X,t)$ is a \yes-instance if and only if at least one of them is a \yes-instance.\label{case:larget}
\end{enumerate}
The result of this algorithm gives a polynomial kernel in the following way. In Case~\condref{case:smallt} we can simply output the obtained instance of \ConstantBicliqueTest as the result of the kernelization. For Case~\condref{case:larget} we transform the OR of the \IndependentSet instances into a single instance of \ConstantBicliqueTest of size polynomial in~$|X|$; the result of this transformation is then used as the kernel output. For the transformation we use the intermediate classical problem \OrIndependentSet: ``Given a series of instances of \IndependentSet, is the answer to at least one \yes''? This problem is contained in NP as a nondeterministic Turing machine may simply guess an instance number and a solution, and then verify whether it is correct. We transform the sequence of parameterized \IndependentSet instances into a single instance of \OrIndependentSet of total bitsize polynomial in~$|X|$, by appending all the instances and writing their parameter values in unary. As there are~$\binom{|X|}{c}$ instances, each with $\Oh(|X|^2)$ vertices, this results in a classical instance of \OrIndependentSet of bitsize polynomial in~$|X|$. As \OrIndependentSet is contained in NP and \ConstantBicliqueTestClassical is NP-complete, we may transform this \OrIndependentSet instance in polynomial time to an \ConstantBicliqueTestClassical instance, incurring only a polynomial blowup in instance size. As the \ConstantBicliqueTestClassical instance at this point has size polynomial in~$|X|$, we may simply use the entire graph as the vertex cover~$X'$ to make an instance of \ConstantBicliqueTest, of size and parameter bounded by a polynomial in~$|X|$; this forms the output of the kernelization procedure.

Hence, we are left with presenting the algorithm achieving goal~\condref{case:smallt} or~\condref{case:larget} in polynomial time. In the following, whenever we assume that~$(G,X,t)$ is a \yes-instance, we fix some induced~$K_{c,t}$ subgraph of~$G$ and denote its bipartition by~$(A,B)$, where~$|A|=c$ and~$|B|=t$. First, we exhaustively apply the following reduction rule. For every vertex~$v\in V(G)$ we check whether its neighborhood contains an independent set of size~$c$. If this is not the case, we may safely delete this vertex as it cannot be contained in any induced~$K_{c,t}$; note that in this step we use the assumption that~$t>c$ to verify that the vertex cannot be in part~$A$ of the solution, either. This check can be done in polynomial time by iterating through all the subsets of~$N_G(v)$ of size~$c$. From now on we may assume that each vertex of the graph has an independent set of size~$c$ in its neighborhood.

Observe that if~$|V(G)\setminus X|\geq t\cdot \binom{|X|}{c}$, then~$(G,X,t)$ is a \yes-instance, as some~$t$ vertices of~$V(G)\setminus X$ are adjacent to the same independent set of size~$c$ in~$X$. In this case we output a trivial \yes-instance. Moreover, if this is not the case but~$t\leq |X|$, then~$|V(G)|\leq |X|+|X|\cdot \binom{|X|}{c}=\Oh(|X|^{c+1})$ and we may output the graph obtained so far as the kernel in Case~\condref{case:smallt}.

We are left with the case that~$t>|X|$. Note that if~$(G,X,t)$ is a \yes-instance, then part~$B$ has to contain at least one vertex from~$V(G)\setminus X$, which means that~$A\subseteq X$. For each subset~$A'\subseteq X$ of size~$c$ that induces an independent set, we construct an instance~$(G_{A'},X_{A'},t)$ of \IndependentSetByVC, by taking~$G_{A'}=G[\bigcap_{v\in A'} N_G(v)]$ and~$X_{A'}=X\cap V(G_{A'})$. Observe that if~$(G,X,t)$ has a solution with~$A=A'$, then $(G_{A'},X_{A'},t)$ is a \yes-instance as the corresponding part~$B$ is contained in~$\bigcap_{v\in A'} N_G(v)$. On the other hand, if~$G_{A'}$ contains an independent set~$B'$ of size~$t$, then~$A'\cup B'$ induces a~$K_{c,t}$ in~$G$. Therefore,~$(G,X,t)$ is a \yes-instance if and only if then at least one of the instances~$(G_{A'},X_{A'},t)$ is a \yes-instance. Observing that \IndependentSetByVC is equivalent to \VertexCoverByVC (by going to the dual target value~$k' := n - k$, while keeping the parameter~$|X|$ the same) we can apply the kernelization algorithm for \VertexCoverByVC from \thmref{theorem:deletionToPiVC} to every instance~$(G_{A'},X_{A'},t)$. Transforming the result back into \IndependentSet instances, we thus obtain a sequence of instances of \IndependentSet with~$\Oh(|X|^2)$ vertices each, that can be returned in Case~\condref{case:larget}.
\myqed
\end{proof}

The guessing steps used in the kernelization above are reminiscent of a Turing kernel. We are effectively creating a compression (in the language of Harnik and Naor~\cite{HarnikN10}) for \ConstantBicliqueTest by reducing it to the OR of a sequence of~$\poly(|X|)$ \IndependentSet instances of size~$\poly(|X|)$. The connection to Turing kernelization is further explored in the conclusion.

\subsubsection{Kernel Lower Bound for Induced \texorpdfstring{$K_{s,t}$}{K(s,t)}-testing} \label{section:inducedbycliquelowerbound}
In this section we prove that the requirement that~$c$ is kept fixed in the definition of \ConstantBicliqueTest is essential for obtaining a polynomial kernel. We consider the variant where the sizes of both partite sets are part of the input, and establish a lower bound. The problem we study is formally defined as follows.

\parproblemdef
{\VariableBicliqueTest}
{A graph~$G$ with vertex cover~$X \subseteq V(G)$ and integers~$s,t \geq 1$.}
{The size~$|X|$ of the vertex cover.}
{Does~$G$ contain~$K_{s,t}$ as an induced subgraph?}

\noindent The crucial difference with \ConstantBicliqueTest is that the value~$s$ is part of the input, rather than a constant. We base our cross-composition on the balanced biclique problem in bipartite graphs.

\problemdef
{\BipartiteBiclique}
{A bipartite graph~$G$ with partite sets~$A \cup B$, and an integer~$k \geq 1$.}
{Are there subsets~$S \subseteq A$ and~$T \subseteq B$ such that~$G[S \cup T]$ is a biclique, and~$|S| = |T| = k$?}

\noindent The problem is known to be NP-complete~\cite[GT24]{GareyJ79} and thus suitable for a cross-composition.

\begin{theorem} \label{theorem:InducedBiCliqueNoPoly}
\VariableBicliqueTest does not admit a polynomial kernel unless \containment.
\end{theorem}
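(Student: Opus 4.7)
The plan is to apply cross-composition from \BipartiteBiclique, which is NP-complete by~\cite[GT24]{GareyJ79}. Define the polynomial equivalence relation $\eqvr$ so that two instances $(G,A,B,k)$ and $(G',A',B',k')$ are equivalent iff $|A|=|A'|$, $|B|=|B'|$, and $k=k'$ (throwing malformed strings into a single junk class). This gives polynomially many classes, so we may assume the $r$ input instances share common parameters $n_A, n_B, k$. We may also discard degenerate classes where $k\geq \min(n_A,n_B)$, since \BipartiteBiclique is solvable in polynomial time there.

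Given $r$ such instances $(G_i, A_i, B_i, k)$ with $i\in[r]$, the plan is to build a single graph $G^*$ that shares one partite side (to keep the vertex cover small) while separating the other (to distinguish inputs), and uses $O(\log r)$ selector vertices to encode the choice of instance. Concretely, I would take a single shared copy $B^*$ of $B$ of size $n_B$, take $r$ disjoint copies $A_1,\ldots,A_r$ of $A$, and add edges between $A_i$ and $B^*$ according to the edge set of $G_i$. On top of this I would add $p=O(\log r)$ selector vertices attached to the $A_i$'s via a \emph{balanced} binary-style encoding: each index $i\in[r]$ is assigned a $p_0$-element subset of $[p]$ (with $p_0$ and $p$ chosen so that $\binom{p}{p_0}\geq r$), and $s_\ell$ is made adjacent to all of $A_i$ exactly when $\ell$ belongs to the encoding of $i$. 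Since $G^*$ is essentially bipartite with the shared $B^*\cup\{\text{selectors}\}$ forming a vertex cover, its vertex-cover number is $\Oh(n_B + \log r)$, which is polynomial in $\max_i|x_i|+\log r$ as required by \defref{crossComposition}.

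The target question asked about $G^*$ is whether it contains $K_{s^*,t^*}$ as an induced subgraph, where $s^*$ and $t^*$ are chosen so that $(s^*,t^*)=(k, k+p_0)$ (or a close variant, with extra shared universal ``padding'' vertices added on both sides if needed to align sizes). The forward direction is routine: a $K_{k,k}$ biclique $(S,T)$ in some $G_i$ lifts to the induced biclique $(S\subseteq A_i,\; T\cup\{s_\ell : \ell\in\text{enc}(i)\})$ in $G^*$, which has the right sizes and all required adjacencies by construction. For the reverse direction, since $G^*$ is bipartite between $A$-copies on one side and $B^*\cup\text{selectors}$ on the other, any induced $K_{s^*,t^*}$ must put one side on each part; the balanced encoding then guarantees that the selectors picked in the biclique force the $A$-side vertices to lie in one particular $A_{i^*}$, because in a balanced scheme every two distinct indices differ on their encoding sets. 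This yields a $K_{k,k}$ biclique in $G_{i^*}$.

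The main obstacle, and the reason padding may be needed, is ensuring that an induced $K_{s^*,t^*}$ in $G^*$ is \emph{forced} to contain a full complement of $p_0$ selectors rather than replacing them by additional $B^*$ vertices: if the $B^*$ side is too large the biclique can avoid selectors altogether and then span several $A_i$'s, breaking the ``single-instance'' property. My strategy for this step is to pad each instance (uniformly across the class, so the composition still applies) by adding shared universal vertices to both sides, thereby fixing the biclique target so the only way to attain size $t^*$ is to include exactly $p_0$ selectors. Once the correctness of the construction is established, \thmref{crossCompositionNoKernel} gives the kernel lower bound, and the theorem follows.
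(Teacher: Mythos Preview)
Your overall framework matches the paper's: cross-compose from \BipartiteBiclique, share the $B$-side among all inputs, keep the $A_i$'s disjoint, and attach $\Oh(\log r)$ selector gadgets to the $A_i$'s so that $X^* = B^* \cup \{\text{selectors}\}$ is a vertex cover of size $\Oh(n + \log r)$. The forward direction works exactly as you describe.

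The gap is precisely where you locate it, and your proposed fix does not close it. Adding ``shared universal vertices to both sides'' does not break the symmetry between selector vertices and $B^*$-vertices: a vertex universal to the $A$-side can be freely swapped into the large side of the induced biclique in place of a selector, so the biclique can still avoid all selectors and draw its $A$-side from several $A_i$'s. Your balanced-encoding scheme only does work once at least one selector is forced in, which is exactly what you have not established. As stated, the reverse direction fails.

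The paper's resolution of this forcing problem is the crux of the proof, and it is more than padding. First, each bit selector for position $j$ is not a single vertex but a full biclique $K_{n+1,n+1}$ with parts $(P_j,Q_j)$; vertices of $P_j$ are made adjacent to all of $A_i$ when bit $j$ of $i$ is~$1$, and $Q_j$ when it is~$0$. Second, a fresh independent set $D$ of size $|X^*|+1$ is added, adjacent to all of $B^*$ and to all bit-selector vertices but to \emph{nothing else}. The target is $K_{s,t}$ with $s = k + (n+1)\log r$ and $t = k + |D|$. Now $t > |X^*|$ forces the $t$-side of any induced biclique to contain a vertex outside $X^*$, which in turn forces the entire $s$-side into $X^*$. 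An independent set inside $X^*$ meets each $K_{n+1,n+1}$ bit selector in at most one of its two parts; if it missed some bit selector entirely it would have size at most $(n+1)(\log r - 1) + |B^*| < s$. Hence it hits every bit selector, singling out a unique index $i^*$, and the $t$-side is then confined to $A_{i^*}\cup D$, yielding at least $k$ vertices in $A_{i^*}$. This two-stage cardinality argument --- blow up the selectors and add $D$ to push one side outside $X^*$ --- is the missing idea in your sketch.
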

\begin{proof}
We prove that \BipartiteBiclique cross-com\-poses into \VariableBicliqueTest, which suffices to establish the claim by \thmref{crossCompositionNoKernel}.
Define a polynomial equivalence relation \eqvr as follows. Two strings in~$\Sigma^*$ are equivalent if (a) they both encode malformed instances, or (b) they encode valid instances~$(G_1, A_1, B_1, k_1)$ and~$(G_2, A_2, B_2, k_2)$ of \BipartiteBiclique such that $|A_1|=|A_2|$, $|B_1|=|B_2|$ and $k_1=k_2$. This relation \eqvr partitions a set of instances on at most~$n$ vertices each into~$\Oh(n^3)$ equivalence classes, and is therefore a polynomial equivalence relation.

We compose instances which are equivalent under \eqvr. So the input consists of~$r$ instances~$(G_1, A_1, B_1, k), \ldots, (G_r, A_r, B_r, k)$ of \BipartiteBiclique which all agree on the number of vertices in each partite set, and on the value of~$k$. By duplicating some instances we may assume without loss of generality that~$r$ is a power of two. Let~$n := |B_1| = \ldots = |B_r|$ and~$m := |A_1| = \ldots = |A_r|$. For~$i \in [r]$ label the vertices in~$B_i$ as~$b_{i,1}, \ldots, b_{i,n}$. We build a graph~$G^*$ with vertex cover~$X^*$ as follows.
\begin{itemize}
\item Initialize~$G^*$ as the disjoint union of the input graphs~$G_1, \ldots, G_r$.
\item For each~$j \in [n]$, identify the vertices~$b_{1,j}, \ldots, b_{r,j}$ into a single vertex~$b^*_j$. Let~$B^* := \{b^*_1, \ldots, b^*_n\}$ contain the resulting vertices, and observe that at this stage in the construction~$G^*[A_i \cup B^*]$ is isomorphic to~$G_i$ for~$i \in [r]$.
\item For~$j \in [\log r]$, add to~$G^*$ a biclique~$C_j$ isomorphic to~$K_{n+1,n+1}$, with partite sets denoted by~$P_j$ and~$Q_j$,~$|P_j|=|Q_j|=n+1$. The set of vertices corresponding to one value of~$j$ will be called the \emph{bit selector} of~$j$ as it will be used in valid solutions to select the bitvalue of the binary representation of the input instance corresponding to this solution.
\item For~$j \in [\log r]$, make the vertices~of~$P_j$ adjacent to the vertices of~$A_i$ if the $j$-th bit in the binary representation of number~$i$ is a one. Similarly, make the vertices~$Q_j$ adjacent to~$A_i$ if the $j$-th bit of~$i$ is a zero.
\item Add a set~$D$ of~$(n+1)(1 + 2 \log r)$ vertices, adjacent to all the vertices of~$B^*$ and all the vertices of all the bit selectors.
\item Let~$X^*$ contain the vertices of~$B^*$ and all the vertices of all the bit selectors. Observe that~$|X^*| = n + 2(n+1) \log r = |D| - 1$ and that~$G^* - X^*$ is an independent set containing all the sets~$A_i$ and the set~$D$; hence~$X^*$ is a vertex cover of~$G^*$ whose size is suitably bounded for a cross-composition.
\end{itemize}
The construction is completed by setting~$s:=k+(n+1)\log r$ and~$t:=k+(n+1)(1 + 2 \log r) = k + |D|$. We now prove the completeness and soundness of the composition via two claims.

\begin{claim}
If for some~$i \in [r]$ the instance~$(G_i, A_i, B_i, k)$ is a \yes-instance of \BipartiteBiclique, then~$(G^*, X^*, s, t)$ is a \yes-instance of \VariableBicliqueTest.
\end{claim}
\begin{claimproof}
Let~$S\subseteq A_i$ and~$T\subseteq B_i$ be such that~$|S|=|T|=k$ and~$G_i[S\cup T]$ is a biclique. Let~$T'$ be the image of~$T$ in the identifications, i.e.,~$T'=\{b^*_j \mid b_{i,j}\in T\}$. For~$j\in [\log r]$ define~$R_j:=P_j$ if the~$j$-th bit of binary encoding of~$i$ is equal to one, and define~$R_j:=Q_j$ otherwise. We claim that the set~$S \cup T'\cup D\cup \bigcup_{j\in [\log r]} R_j$ induces a biclique in~$G^*$, with~$T'\cup \bigcup_{j\in [\log r]} R_j$ as one partite set and~$S\cup D$ as the second. Indeed, observe that:
\begin{itemize}
\item $D,S,T',\bigcup_{j\in [\log r]} R_j$ are independent sets by the construction of~$G^*$;
\item there is no edge between~$D$ and~$S$;
\item there is no edge between~$\bigcup_{j\in [\log r]} R_j$ and~$T'$;
\item $D$ is adjacent to the whole set~$X^*$, so in particular to~$T'\cup \bigcup_{j\in [\log r]} R_j$;
\item as~$S\subseteq A_i$, by the construction of~$G^*$ we have that every vertex of~$S$ is adjacent to every vertex of~$R_j$, for all~$j\in [\log r]$;
\item all vertices in~$S$ are adjacent to all vertices of~$T'$, as~$G_i[S \cup T]$ is a biclique and~$G^*[A_i \cup B^*]$ is isomorphic to~$G_i$.
\end{itemize}
We conclude the proof by checking that~$|T'\cup \bigcup_{j\in [\log r]} R_j|=k+(n+1)\log r = s$ and~$|S\cup D|=k+|D|=t$.
\end{claimproof}

\begin{claim}
If~$(G^*, X^*, s, t)$ is a \yes-instance of \VariableBicliqueTest, then for some~$i \in [r]$ the instance~$(G_i, A_i, B_i, k)$ is a \yes-instance of \BipartiteBiclique.
\end{claim}
\begin{claimproof}
Assume that there exist sets~$S^*$ and~$T^*$,~$|S^*|=s$ and~$|T^*|=t$, such that~$G^*[S^*\cup T^*]$ is a biclique with~$S^*$ and~$T^*$ as partite sets.
As~$|T^*|=t\geq |D|>|X^*|$, the set~$T^*\setminus X^*$ is nonempty. This means that in~$T^*$ there is a vertex with the whole neighborhood entirely contained in~$X^*$, so~$S^*\subseteq X^*$. From every pair~$(P_j,Q_j)$, for~$j\in [\log r]$, the independent set~$S^*$ can have a nonempty intersection with at most one of them. Assume that for some~$j$ we have~$P_j\cap S^*= Q_j\cap S^*=\emptyset$. It follows that~$|S^*|\leq (n+1)\log r - (n+1) + |B^*|<(n+1)\log r +k = s$, which is a contradiction. Hence, for all~$j\in [\log r]$ the set~$S^*$ has a nonempty intersection with exactly one set of~$P_j$ and~$Q_j$. Moreover, observe that~$|S^*\cap \bigcup_{j\in [\log r]} (P_j\cup Q_j)|\leq (n+1)\log r$, so~$|S^*\cap B^*|\geq k$.

Define~$i$ as an integer with~$\log r$ binary digits, such that the~$j$-th bit is equal to one if~$P_j\cap S^*\neq \emptyset$ and is equal to zero if~$Q_j\cap S^*\neq \emptyset$. By the construction of~$G^*$, the set~$A_i$ is the only set from~$\{A_1,\ldots,A_r\}$ which contains vertices simultaneously adjacent to all vertices from~$S^*$ contained in bit selectors. There is no edge between~$B^*$ and bit selectors, so we infer that~$T^*\subseteq A_i\cup D$. As~$|T^*|=k+|D|$, we infer that~$|T^*\cap A_i|\geq k$.

Recall that~$G^*[B^*\cup A_i]$ is isomorphic to~$G_i$, hence~$(T^*\cap A_i) \cup (S^*\cap B^*)$ induces a biclique in a graph isomorphic to~$G_i$. As~$|T^*\cap A_i|,|S^*\cap B^*|\geq k$, we infer that~$G_i$ is a \yes-instance of \BipartiteBiclique.
\myqed
\end{claimproof}
As this proves that the output instance acts as the logical OR of the inputs, it concludes the cross-composition and proves a kernel lower bound by \thmref{crossCompositionNoKernel}.
\myqed
\end{proof}

\subsection{Testing for Paths}
We turn our attention to testing for the containment of a path. Since a graph contains~$P_t$ as a minor if and only if it contains~$P_t$ as a subgraph, testing for a~$P_t$ minor is equivalent to the \LongPath problem and hence has a polynomial kernel parameterized by vertex cover, through \thmref{theorem:largestInducedPiVC}. The related induced subgraph testing problem, defined formally below, is however unlikely to admit a polynomial kernel. 

\parproblemdef
{\InducedPathByVC}
{A graph~$G$ with a vertex cover~$X$, and an integer~$k \geq 1$.}
{The size~$|X|$ of the vertex cover.}
{Is there a set~$S \subseteq V(G)$ of size at least~$k$ such that~$G[S]$ is a simple path?}
Using cross-composition, we start from the following classical problem.
\problemdef
{\HamSTpath}
{A graph~$G$ with distinct vertices~$s$ and~$t$.}
{Is there a Hamiltonian path from~$s$ to~$t$ in~$G$?}

Before we proceed to the formal description, let us shed some light on the intuition behind the proof. We cross-compose~$r$ instances of \HamSTpath into a single instance of \InducedPathByVC. The main idea behind the construction is to create an instance containing three paths~$P_A,P_B,P_C$ of consecutive degree-two vertices, such that any sufficiently long induced path traverses all these paths. The only connections between~$P_A$ and~$P_B$ can be made by visiting a vertex~$z_i$ outside the vertex cover; there is one such vertex~$z_i$ for each input instance. Hence, the connection between~$P_A$ and~$P_B$ selects an instance. The connection between~$P_B$ and~$P_C$ serves for checking that the selected instance can indeed be solved. We create a universal gadget in which the connection between~$P_B$ and~$P_C$ has to be realized. Using the inducedness requirement, we encode adjacency matrices of the input instances into the adjacencies between vertices~$z_i$ and the universal gadget: selection of some~$z_i$ ``carves out'' the $i$-th instance from the universal gadget by forbidding usage of vertices adjacent to~$z_i$. We now proceed to the formal description of the composition.

\begin{theorem} \label{theorem:inducedPathByVCNoPoly}
\InducedPathByVC does not admit a polynomial kernel unless \containment.
\end{theorem}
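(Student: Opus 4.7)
The plan is to establish the lower bound via cross-composition from the NP-complete problem \HamSTpath, following the construction sketched in \imgref{inducedPathPicture}. I declare two \HamSTpath instances to be \eqvr-equivalent when they share the same vertex count $n$ and the same distinguished source $s = v_1$ and sink $t = v_n$; this is a polynomial equivalence relation. Given $r$ equivalent inputs $(G_1,s,t),\ldots,(G_r,s,t)$, I build the output graph $G^*$ from: three \emph{detector paths} $P_A, P_B, P_C$ each on $L := n^3$ vertices with endpoints $x_X, y_X$ for $X \in \{A,B,C\}$; a set of \emph{selector vertices} $\{z_1,\ldots,z_r\}$, each made adjacent to both $y_A$ and $y_B$; and a \emph{universal edge gadget} consisting of vertex-nodes $v^*_1,\ldots,v^*_n$ and edge-nodes $e_{j,h}$ for every pair $\{j,h\} \in \binom{[n]}{2}$, with $e_{j,h}$ adjacent to $v^*_j$ and $v^*_h$. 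I further add the edges $x_B v^*_s$ and $x_C v^*_t$, and for every $i \in [r]$ and every \emph{non}-edge $\{j,h\}$ of $G_i$, the edge $z_i e_{j,h}$. Setting $X^* := V(G^*) \setminus \{z_1,\ldots,z_r\}$ and $k^* := 3L + 2n$ completes the output; note $|X^*| = 3L + n + \binom{n}{2} = \Oh(n^3)$ is polynomial in $n$ and independent of $r$, as cross-composition requires.

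For completeness, a Hamiltonian path $v_{\sigma(1)}\,v_{\sigma(2)}\cdots v_{\sigma(n)}$ from $s = v_{\sigma(1)}$ to $t = v_{\sigma(n)}$ in $G_i$ lifts to an induced path in $G^*$ that traverses all of $P_A$ from $x_A$ to $y_A$, then visits $z_i$, then traverses all of $P_B$ from $y_B$ to $x_B$, then walks through the gadget along $v^*_{\sigma(1)},\,e_{\sigma(1),\sigma(2)},\,v^*_{\sigma(2)},\,\ldots,\,v^*_{\sigma(n)}$, and finally traverses all of $P_C$ from $x_C$ to $y_C$. This has exactly $3L + 2n = k^*$ vertices, and it is induced because each selected $e_{\sigma(\ell),\sigma(\ell+1)}$ represents an edge of $G_i$ and is therefore not adjacent to $z_i$.

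The main obstacle is soundness: every induced path on at least $k^*$ vertices must encode a Hamiltonian $s$-$t$ path in some $G_i$. I would rely on four structural facts. (a) $x_A$ and $y_C$ are leaves of $G^*$ and can therefore only appear as path-endpoints. (b) Internal vertices of each $P_X$ have degree~$2$, so the intersection of an induced path with $P_X$ is a contiguous subpath, and \emph{partial} use of some $P_X$ (neither $0$ nor $L$ vertices) forces one path-endpoint into the interior of $P_X$. (c) Every pair $z_i,z_j$ shares both $y_A$ and $y_B$ as neighbors, so once $y_A$ or $y_B$ lies on the path at most one $z_i$ can appear without creating a chord; a short counting of path-neighbors of $z$-vertices further bounds the total number of $z$-vertices on any induced path by $\Oh(n^2)$. (d) The edge gadget is a subdivision of $K_n$, so any simple path inside it uses at most $n$ vertex-nodes, hence at most $2n-1$ gadget vertices. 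Facts (c) and (d) together cap the non-detector contribution of any induced path by $\Oh(n^2)$, and (a),(b) restrict the use of the detectors; a brief case analysis on the triple $(p_A,p_B,p_C)$ of detector-usage counts rules out every configuration except $p_A=p_B=p_C=L$, since skipping any $P_X$ entirely loses $L=n^3 \gg n^2$ vertices, while two or three partially used $P_X$'s would require more than the two available path-endpoints. In the surviving configuration, fact (c) forces a unique $z_i$ to play the role of the $y_A$-$y_B$ transit, and the only way to realize the remaining $2n$ vertices between $x_B$ and $x_C$ is a simple gadget subpath using all $n$ vertex-nodes $v^*_1,\ldots,v^*_n$, i.e., a Hamiltonian path from $v^*_s$ to $v^*_t$ in the subdivision of $K_n$. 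Induced-ness with the chosen $z_i$ then forbids every $e_{j,h}$ corresponding to a non-edge of $G_i$, so the gadget subpath encodes a Hamiltonian $s$-$t$ path in $G_i$.

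This yields the logical equivalence required by cross-composition, and since $|X^*| = \Oh(n^3)$ is polynomial in $\max_i |x_i|$ (and independent of $r$), \thmref{crossCompositionNoKernel} implies the claimed kernelization lower bound.
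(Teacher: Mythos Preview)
Your construction is identical to the paper's, and your soundness argument---bounding the non-detector contribution by $\Oh(n^2)$ via facts (c) and (d), then using a case analysis on $(p_A,p_B,p_C)$ together with the leaf/degree-two structure of the detector paths to force full traversal of all three---follows the same essential logic as the paper's proof. The paper organizes the soundness slightly differently (it bounds $|S^* \setminus (V(P_A)\cup V(P_B)\cup V(P_C))|$ in one stroke by observing that this set is a linear forest with at most seven components in a graph whose vertex cover has size $n+\binom{n}{2}$, then shows any $P_W$ with $|S^*\cap V(P_W)|\le 3$ makes $|S^*|<k^*$), but the construction, the key counting idea, and the conclusion are the same.
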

\begin{proof}
By \thmref{crossCompositionNoKernel} and the NP-completeness of \HamSTpath~\cite[GT 39]{GareyJ79}, it is sufficient to show that \HamSTpath cross-composes into \InducedPathByVC. We define a polynomial equivalence relation \eqvr as follows. We say that two strings in~$\Sigma^*$ are equivalent if (a) they both encode malformed instances, or (b) they encode valid instances~$(G_1, s_1, t_1)$ and~$(G_2, s_2, t_2)$ of \HamSTpath such that~$|V(G_1)| = |V(G_2)|$. This implies that \eqvr partitions a set of instances on at most~$n$ vertices each into~$\Oh(n)$ equivalence classes, and is therefore a polynomial equivalence relation.

We show how to compose a set of instances which are equivalent under \eqvr. So the input consists of~$r$ instances~$(G_1, s_1, t_1), \ldots, (G_r, s_r, t_r)$ of \HamSTpath such that~$|V(G_i)| = n$ for~$i \in [r]$. We may assume that~$n \geq 9$, since we can solve smaller instances in constant time, reducing to a constant-size \yes- or \no-instance. For~$i \in [r]$ label the vertices in~$V(G_i)$ as~$v_1, \ldots, v_n$ such that~$s_i = v_1$ and~$t_i = v_n$. We build a graph~$G^*$ with vertex cover~$X^*$ as follows.
\begin{enumerate}
	\item Add three simple paths~$P_A, P_B$ and~$P_C$ to~$G^*$, containing~$n^3$ vertices each. Let the endpoints of these paths be~$x_A, y_A, x_B, y_B$ and~$x_C, y_C$ respectively.
	\item For~$j \in [n]$ add a vertex~$v^*_j$ to~$G^*$.
	\item For~$\{j,h\} \in \binom{[n]}{2}$ add a vertex~$e_{j,h}$ to~$G^*$ and make it adjacent to~$v^*_j$ and~$v^*_h$.
	\item For~$i \in [r]$, do the following. Add a vertex~$z_i$ to~$G^*$. For all pairs~$\{j,h\} \in \binom{[n]}{2}$ such that~$v_jv_h \not \in E(G_i)$ add the edge~$z_ie_{j, h}$ to~$G^*$.
	\item Make~$y_A$ and~$y_B$ adjacent to all vertices~$z_i$ for~$i \in [r]$. 
	\item Make~$x_B$ adjacent to~$v^*_1$, and make~$x_C$ adjacent to~$v^*_n$. This concludes the construction of~$G^*$, which is illustrated in \imgref{inducedPathPicture}. 
\end{enumerate}
We define a set~$X^* := V(G^*) \setminus \{ z_i \mid i \in [r] \}$. Since we did not add any edges between the~$z$-vertices, they form an independent set and therefore~$X^*$ is a vertex cover of~$G^*$. It is easy to verify that the size of~$X^*$ is polynomial in~$n$, and therefore the size of the parameter~$|X^*|$ is suitably bounded for a cross-composition. We set~$k^* := 3 n^3 + 2n$. The construction can be performed in polynomial time, so it remains to prove that~$(G^*, X^*, k^*)$ is \yes if and only if one of the input instances is \yes. We first establish some properties of the constructed instance.

\inducedPathPicture

\begin{claim}
Let~$S^* \subseteq V(G)$ induce a simple path in~$G^*$, and let~$P^* := G^*[S^*]$.
\begin{enumerate}
	\item $|S^* \setminus (V(P_A) \cup V(P_B) \cup V(P_C))| < n^3 - 3$. \label{solutionOutsidePathsSmall}
	\item If there is a path~$P_W \in \{P_A, P_B, P_C\}$ such that~$|S^* \cap V(P_W)| \leq 3$ then~$|S^*| < k^*$. \label{notAllPathsUsedSmall}
\end{enumerate}
\end{claim}
\begin{claimproof}
Define $\hat{G^*} := G^* - (V(P_A) \cup V(P_B) \cup V(P_C))$. 

\condref{solutionOutsidePathsSmall} For each of the paths~$P_A, P_B, P_C$ there are at most two vertices on the path which have neighbors outside the path. Hence if we take the path~$P^*$, then deleting the vertices of~$V(P_A)$ from~$P^*$ splits the path into at most three pieces, increasing the number of connected components by at most two. This also holds for~$P_B$ and~$P_C$. Hence~$\hat{P^*} := P^* - (V(P_A) \cup V(P_B) \cup V(P_C))$ is an induced linear forest in~$\hat{G^*}$ containing no more than seven connected components, with~$|S^*| = |V(\hat{P}^*)|$. Each connected component of~$\hat{P^*}$ is an induced path in~$\hat{G^*}$. Since the set~$\hat{X^*} := \{ v^*_i \mid i \in [n] \} \cup \{ e_{j,h} \mid \{j,h\} \in \binom{[n]}{2} \}$ is a vertex cover for~$\hat{G^*}$ of size~$n + \binom{n}{2}$ it follows by Proposition~\ref{proposition:vcpathcycle} that each connected component of~$\hat{P^*}$ has at most~$2n + 2\binom{n}{2} + 1$ vertices. Since the number of connected components is at most seven, the number of vertices in~$\hat{P^*}$ is at most~$7 \cdot (2n + 2\binom{n}{2} + 1)$, which is less than~$n^3 - 3$ for~$n \geq 9$.

\condref{notAllPathsUsedSmall} Assume that~$S^*$ contains at most three vertices from~$P_A$; the other two cases will be completely analogous.
\begin{align*}
|S^*| =& |S^* \setminus (V(P_A) \cup V(P_B) \cup V(P_C))| + \\ & |S^* \cap V(P_A)| + |S^* \cap V(P_B)| + |S^* \cap V(P_C)| \\
<& (n^3 - 3) + |S^* \cap V(P_A)| + \\ & |S^* \cap V(P_B)| + |S^* \cap V(P_C)| & \mbox{By \condref{solutionOutsidePathsSmall}.} \\
\leq& (n^3 - 3) + 3 + |S^* \cap V(P_B)| + |S^* \cap V(P_C)| & \mbox{By assumption.} \\
\leq& (n^3 - 3) + 3 + n^3 + n^3 & \mbox{By definition of~$G^*$.} \\
\leq& k^*. & \mbox{By definition of~$k^*$.}
\end{align*}
Hence if there is one path among~$\{P_A, P_B, P_C\}$ such that~$S^*$ contains at most three vertices on it, then~$|S^*| < k^*$.
\end{claimproof}

We now prove that~$(G^*, X^*, k^*)$ indeed acts as the OR of the input instances. For the first direction, assume that~$G^*$ has a path on at least~$k^*$ vertices induced by the vertex set~$S^*$. Let~$P^* := G^*[S^*]$ be the path induced by~$S^*$. By \condref{notAllPathsUsedSmall} the set~$S^*$ contains at least three vertices on each of the paths~$P_A, P_B, P_C$. Since~$P_A$ and~$P_C$ each contain exactly one vertex which has neighbors outside the path, it is easy to see that~$S^* \cup V(P_A) \cup V(P_C)$ is also an induced path; hence we may assume without loss of generality that~$S^*$ contains all vertices of~$P_A$ and~$P_C$, which means that the endpoints of~$P^*$ must be the vertices~$x_A$ and~$y_C$ since they have degree one in~$G^*$. Since no endpoint of~$P^*$ can lie on~$P_B$, and~$S^*$ contains at least three vertices on~$P_B$, it follows that~$S^*$ must contain all vertices of~$P_B$ since the internal vertices on that path do not have neighbors outside the path. Hence~$V(P_A) \cup V(P_B) \cup V(P_C) \subseteq S^*$. Since the only neighbors of vertex~$y_A$ are the vertices~$z_i$ for~$i \in [r]$ and the single neighbor on the path~$P_A$, the path~$P^*$ must contain an edge~$y_Az_{i^*}$ for some~$i^* \in [r]$ since~$y_A$ must have two neighbors on the path. By construction of~$G^*$ we know that~$\{z_{i^*}, y_B\}$ is an edge in~$G^*$. This implies that if we traverse the path~$P^*$ starting from the endpoint~$x_A$ then we traverse~$P_A$, visit~$z_{i^*}$, and then go to~$y_B$. Since all vertices of~$P_B$ are in~$S^*$ the path then traverses~$P_B$ until it reaches~$x_B$. The unique neighbor~$v^*_1$ of~$x_B$ not on~$P_B$ must be the successor of~$x_B$ on the path~$P^*$. The path now visits some more vertices. Since all vertices of~$P_C$ are contained in~$S^*$, and~$x_C$ is the only vertex of~$P_C$ adjacent to vertices not on~$P_C$, the path~$P^*$ must finish by reaching~$x_C$ and traversing~$P_C$.

Let us now consider the subpath~$P_{i^*}$ of~$P^*$ which starts at the successor of~$x_B$ on the path, and stops with the predecessor of~$x_C$ on the path. The successor of~$x_B$ must be~$v^*_1$, and the predecessor of~$x_C$ must be~$v^*_n$, since vertices~$x_B$ and~$x_C$ have degree two in~$G^*$; hence~$P_{i^*}$ is an induced path from~$v^*_1$ to~$v^*_n$. Since there are~$2 n^3 + 1$ vertices on~$P^*$ before~$x_B$, and~$n^3$ vertices on~$P^*$ on the final part from~$x_C$ to the endpoint, the subpath~$P_{i^*}$ must contain at least~$k^* - 3 n^3 - 1 = 2n - 1$ vertices. Since the vertices~$y_A$ and~$y_B$ are contained in~$S^*$ and are adjacent to~$z_{i^*}$, the set~$S^*$ cannot contain any other vertices adjacent to~$z_{i^*}$ (otherwise these would induce an edge not on the path~$P^*$). This implies that in particular, $S^*$ cannot contain vertices~$e_{j,h}$ for which~$v_jv_h \not \in E(G_{i^*})$ since these were made adjacent to~$z_{i^*}$ in the construction. The set~$S^*$ cannot contain any vertices~$z_i$ for~$i \neq i^*$, since all such vertices are adjacent to~$y_A, y_B \in S^*$ and together with~$z_{i^*} \in S^*$ such a vertex~$z_i$ would induce a cycle. This shows that the subpath~$P_{i^*}$ can contain only vertices~$v^*_j$ for~$j \in [n]$, and vertices~$e_{j,h}$ for~$\{v_j, v_h\} \in E(G_{i^*})$. Since the edge vertices~$e_{j,h}$ are only adjacent to the vertices which form their endpoints, it now follows that the edge set~$\{ v_jv_h \mid e_{j,h} \in S^* \}$ is a path in~$G_{i^*}$ between~$v_1 = s_{i^*}$ and~$v_n = t_{i^*}$ containing~$n-1$ edges and~$n$ vertices, which implies that~$G_{i^*}$ has a Hamiltonian $v_1 - v_n$ path and proves that~$G_{i^*}$ is \yes.

For the reverse direction, assume that the set~$C_{i^*} \subseteq E(G_{i^*})$ are the edges on a Hamiltonian~$v_1 - v_n$ path in~$G_{i^*}$. Then it is straightforward to verify using the construction of~$G^*$ that~$S^* := V(P_A) \cup V(P_B) \cup V(P_C) \cup \{z_{i^*} \} \cup \{ v^*_i \mid i \in [n] \} \cup \{ e_{j,h} \mid v_jv_h \in C_{i^*} \}$ induces a simple path in~$G^*$ and has size~$k^*$. This concludes the proof.
\myqed
\end{proof}

\subsection{Testing for Matchings} \label{subsection:test:matching}
Matchings (i.e., disjoint unions of~$K_2$'s) are the last type of graphs whose containment testing we consider. It is not difficult to see that~$G$ has a~$t \cdot K_2$ minor if and only if~$G$ has a matching of size~$t$, and hence we can solve the minor-testing variant of this containment problem in polynomial time by simply computing a maximum matching. On the other hand, finding an induced matching is a classic NP-complete problem and we give evidence that it does not admit a polynomial kernel parameterized by vertex cover. In the next section we use a bit-selector strategy to cross-compose \BipartiteInducedMatching into our target problem, exploiting the inducedness requirement to allow the bit selector to isolate a solution corresponding to a single input instance.

\subsubsection{Kernelization Lower Bound for Induced Matching}
Recall that an \emph{induced matching} in a graph~$G$ is a matching~$Y \subseteq E(G)$ such that no edge in~$E(G) \setminus Y$ connects the endpoints of two edges of~$Y$, or equivalently, such that all connected components of the subgraph induced by the endpoints of~$Y$ are isomorphic to~$K_2$. The \emph{size} of an induced matching is measured in terms of the number of edges in it. The goal of this section is to prove a superpolynomial kernel lower bound for the following problem.
\parproblemdef
{\InducedMatchingByVC}
{A graph~$G$ with a vertex cover~$X$, and an integer~$k \geq 1$.}
{The size~$|X|$ of the vertex cover.}
{Is there an induced matching~$Y \subseteq E(G)$ in~$G$ of size at least~$k$?}
Using the technique of cross-composition, we start from the following related classical problem.
\problemdef
{\BipartiteInducedMatching}
{A bipartite graph~$G$ with partite sets~$A \cup B$, and an integer~$k \geq 1$.}
{Is there an induced matching~$Y \subseteq E(G)$ in~$G$ of size at least~$k$?}
The cross-composition embeds the OR of bipartite instances into a single instance of the parameterized problem with a small parameter value. The construction is based on a bit masking scheme that represents the indices of the~$r$ input instances by~$\log r$ bits, as in the proof of Theorem~\ref{theorem:InducedBiCliqueNoPoly}. We use repeated structures in the constructed graph to simulate heavy-weight edges.
\begin{theorem} \label{theorem:inducedMatchingByVCNoPoly}
\InducedMatchingByVC does not admit a polynomial kernel unless \containment.
\end{theorem}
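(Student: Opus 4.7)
The plan is to cross-compose \BipartiteInducedMatching (NP-complete by~\cite[GT24]{GareyJ79}) into \InducedMatchingByVC using a bit-masking scheme augmented with heavy-weight selector edges, in the spirit of \thmref{theorem:InducedBiCliqueNoPoly}; by \thmref{crossCompositionNoKernel} this will suffice. Take the polynomial equivalence relation \eqvr that groups well-formed instances with identical values $|A| = m$, $|B| = n$, and $k$. Given $r$ equivalent instances $(G_i, A_i, B_i, k)$, with $r$ padded to a power of two and with $k \leq n$ (otherwise every instance is trivially \no), the composition constructs $(G^*, X^*, k^*)$ as follows.

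Identify the $B$-parts into a single shared set $B^* = \{b_1,\ldots,b_n\}$, keep the $A$-parts vertex-disjoint (labelled $a_{i,j}$), and add the edge $\{a_{i,j}, b_h\}$ whenever $\{a_j, b_h\} \in E(G_i)$. For each bit $j \in [\log r]$, introduce a \emph{heavy selector}: two sides $P_j$ and $Q_j$, each consisting of $L := n-k+1$ vertex-disjoint edges $\{p_j^\ell, p_j'^\ell\}$ (resp.\ $\{q_j^\ell, q_j'^\ell\}$) for $\ell \in [L]$. Make every vertex of $P_j$ adjacent to every vertex of $Q_j$, so that any induced matching of $G^*$ can use matching edges from at most one of the two sides per selector. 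Finally, for each instance $i$ and bit $j$: if bit $j$ of $i$ is $0$, make every vertex of $P_j$ adjacent to every $a_{i,j'}$; if bit $j$ of $i$ is $1$, make every vertex of $Q_j$ adjacent to every $a_{i,j'}$. Let $X^* = B^* \cup \bigcup_{j \in [\log r]} (P_j \cup Q_j)$, a vertex cover of size $n + 4L\log r$, polynomial in $n + \log r$, and let $k^* = L\log r + k$.

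For the forward direction, given an induced matching $Y$ of size $k$ in some $G_{i^*}$, combine $Y$ with the full $P_j$-side of selector $j$ if bit $j$ of $i^*$ is $1$ and with the full $Q_j$-side otherwise; the result is an induced matching of size $L\log r + k = k^*$ in $G^*$, since the chosen side of each selector is by construction non-adjacent to $A_{i^*}$. For the reverse direction, let $M^*$ be an induced matching of size at least $k^*$ in $G^*$. The cross-adjacency between $P_j$ and $Q_j$ confines the selector-$j$ matching edges of $M^*$ to a single side, and since each pair $(p_j^\ell, p_j'^\ell)$ spans an edge of $G^*$ (and likewise $(q_j^\ell, q_j'^\ell)$), at most one matching edge can touch such a pair; hence each selector contributes at most $L$ matching edges, for a total of at most $L\log r$ selector edges. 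Every instance edge in $M^*$ uses a distinct vertex of $B^*$, so instance edges number at most $n$. If some selector $j_0$ contributed zero matching edges, then $|M^*| \leq L(\log r - 1) + n < L\log r + k = k^*$ precisely because $L > n - k$, a contradiction. Hence every selector is active and commits a bit value via its chosen side, jointly pinning down a unique index $i^*$; the remaining at least $k$ instance edges of $M^*$ must all lie in $A_{i^*} \cup B^*$ (the $A$-vertices of all other instances are blocked by some matched selector vertex), whose induced subgraph in $G^*$ is isomorphic to $G_{i^*}$, witnessing an induced matching of size at least $k$ in $G_{i^*}$.

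The principal difficulty is the calibration of the heavy selectors: the full bipartite cross-structure between $P_j$ and $Q_j$ together with the tight choice $L = n - k + 1$ are what simultaneously enforce one-sided use of every selector, rule out inflated contributions from selector-to-$A$ edges (which share a vertex with the corresponding side-edge of the same pair), and force every selector to be active by balancing the bipartite bound $t \leq n$ on instance edges against the shortfall $k^* - L(\log r - 1) = L + k$ that any dormant selector would leave.
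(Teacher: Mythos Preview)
Your proposal follows essentially the same strategy as the paper --- a cross-composition of \BipartiteInducedMatching into \InducedMatchingByVC by identifying the $B$-parts, keeping the $A$-parts disjoint, and adding bit-selector gadgets --- but with a different selector design. The paper uses, for each bit $j$, $n$ vertex-disjoint triangles $\{x_{s,j},y_{s,j},z_{s,j}\}$ where the $x$- and $y$-vertices encode the two bit values and $z_{s,j}$ provides a safe matching partner; you instead use two blocks of $L=n-k+1$ disjoint edges joined by a complete bipartite cross-adjacency. Both gadgets work, and your calibration $L=n-k+1$ is exactly what makes the dormant-selector count $L(\log r-1)+n<k^*$ go through.

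Two small repairs are needed. First, the NP-completeness reference~[GT24] is for \BipartiteBiclique, not induced matching; the correct source is Cameron~\cite{Cameron89}. Second, your sentence ``the cross-adjacency between $P_j$ and $Q_j$ confines the selector-$j$ matching edges of $M^*$ to a single side'' overlooks the possibility that a single matching edge runs \emph{across}, i.e., has the form $\{p,q\}$ with $p\in P_j$ and $q\in Q_j$. In that case both sides are touched and no bit value is committed at position $j$, so $i^*$ is not well-defined as stated. This is easy to dispatch: such a cross edge makes every $A$-vertex adjacent to a matched selector vertex (via $p$ if its $j$-th bit is $0$, via $q$ otherwise), hence there are no instance edges at all and $|M^*|\le 1+L(\log r-1)<L\log r + k = k^*$. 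With this case handled explicitly your argument is complete.
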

\begin{proof}
We prove that \BipartiteInducedMatching cross-composes into \InducedMatchingByVC, which suffices to establish the claim by \thmref{crossCompositionNoKernel} and the NP-completeness of the classical problem~\cite{Cameron89}. Define a polynomial equivalence relation \eqvr as follows. Two strings in~$\Sigma^*$ are equivalent if (a) they both encode malformed instances, or (b) they encode valid instances~$(G_1, A_1, B_1, k_1)$ and~$(G_2, A_2, B_2, k_2)$ of \BipartiteInducedMatching such that~$|A_1| = |A_2|$,~$|B_1| = |B_2|$ and~$k_1 = k_2$. This relation \eqvr partitions a set of instances on at most~$n$ vertices each into~$\Oh(n^3)$ equivalence classes, and is therefore a polynomial equivalence relation.

We compose instances which are equivalent under \eqvr. So the input consists of~$r$ instances~$(G_1, A_1, B_1, k), \ldots, (G_r, A_r, B_r, k)$ of \BipartiteInducedMatching which all agree on the number of vertices in each partite set, and on the value of~$k$. By duplicating some instances we may assume without loss of generality that~$r$ is a power of two. Let~$n := |B_1| = \ldots = |B_r|$. For~$i \in [r]$ label the vertices in~$B_i$ as~$b_{i,1}, \ldots, b_{i,n}$. We build a graph~$G^*$ with vertex cover~$X^*$ as follows.
\begin{itemize}
	\item Initialize~$G^*$ as the disjoint union of the input graphs~$G_1, \ldots, G_r$.
	\item For each~$j \in [n]$, identify the vertices~$b_{1,j}, \ldots, b_{r,j}$ into a single vertex~$b^*_j$. Let~$B^* := \{b^*_1, \ldots, b^*_n\}$ contain the resulting vertices, and observe that at this stage in the construction~$G^*[A_i \cup B^*]$ is isomorphic to~$G_i$ for~$i \in [r]$.
	\item For~$j \in [\log r]$, add vertices~$\{ x_{s, j}, y_{s,j}, z_{s,j} \mid s \in [n] \}$ to~$G^*$ and turn each triplet into a clique. As in the proof of \thmref{theorem:InducedBiCliqueNoPoly} the set of vertices corresponding to one value of~$j$ is the \emph{bit selector} of~$j$.
	\item For~$j \in [\log r]$, make the vertices~$\{x_{s,j} \mid s \in [n] \}$ adjacent to the vertices~$A_i$ if the $j$-th bit in the binary representation of number~$i$ is a one. Similarly, make the vertices~$\{y_{s,j} \mid s \in [n] \}$ adjacent to~$A_i$ if the $j$-th bit of $i$ is a zero. Let~$\{ x_{s,j}z_{s,j} \mid s \in [n] \}$ be the \emph{$x$-edges of position~$j$}, and let~$\{ y_{s,j}z_{s,j} \mid s \in [n] \}$ be the \emph{$y$-edges of position~$j$}.
	\item Let~$X^*$ contain the vertices of~$B^*$ and all the vertices of all the bit selectors. Observe that~$|X^*| = n + 3n \log r$ and that~$G^* - X^*$ is an independent set containing all the sets~$A_i$; hence~$X^*$ is a vertex cover of~$G^*$ whose size is suitably bounded for a cross-composition.
\end{itemize}
The construction is completed by setting~$k^* := k + n \log r$ and using the instance~$(G^*, X^*, k^*)$ as the output of the cross-composition. We will need the following structural claim.

\begin{claim}
$G^*$ has a maximum induced matching~$Y^* \subseteq E(G^*)$ such that for every~$j \in [\log r]$ and~$s \in [n]$, if~$Y^*$ contains an edge incident with the triple~$\{x_{s,j}, y_{s,j}, \linebreak[1] z_{s,j}\}$ then~$Y^*$ contains~$x_{s,j}z_{s,j}$ or~$y_{s,j}z_{s,j}$.
\end{claim}
\begin{claimproof}
Suppose~$Y^*$ is a maximum induced matching containing an edge incident with the triple~$\{x_{s,j}, y_{s,j}, z_{s,j}\}$ for some choice of~$j$ and~$s$, but the edge is neither $x_{s,j}z_{s,j}$ nor~$y_{s,j}z_{s,j}$. As the triple forms a clique in~$G^*$, by the induced property of $Y^*$ it follows that~$Y^*$ contains \emph{at most} one edge incident with it. Using the starting assumption we then find that~$Y^*$ contains \emph{exactly} one edge incident with the triple. Since vertex~$z_{s,j}$ is only adjacent to~$x_{s,j}$ and~$y_{s,j}$ we find that the edge~$e \in Y^*$ incident with the triple, is incident with at least one of the vertices~$x_{s,j}$ or~$y_{s,j}$. If~$e$ is incident with~$x_{s,j}$ then we may replace it by the edge~$x_{s,j}z_{s,j}$ to obtain another induced matching; $z_{s,j}$ was not matched before, and is not adjacent to any matched vertices except~$x_{s,j}$. Similarly we may replace~$e$ by~$\{y_{s,j}, z_{s,j}\}$ if~$e$ is incident with~$y_{s,j}$. As this replacement step can be performed independently for each triple, the claim follows.
\end{claimproof}

\begin{claim}
$G^*$ has a maximum induced matching~$Y^* \subseteq E(G^*)$ such that for every bit position~$j \in [\log r]$, either all the $x$-edges of position~$j$ are in~$Y^*$, or all the $y$-edges of position~$j$ are in~$Y^*$.
\end{claim}
\begin{claimproof}
Consider a maximum induced matching~$Y^*$ in~$G^*$, and assume there is some bit position~$j \in [\log r]$ for which the claim does not hold. By the previous claim we may assume that if~$Y^*$ contains an edge incident with a triple~$\{x_{s,j}, y_{s,j}, \linebreak[1] z_{s,j}\}$, then it is the $x$-edge ~$x_{s,j}z_{s,j}$ or the $y$-edge~$y_{s,j}z_{s,j}$.

If at least one $x$-edge (resp.\ $y$-edge) of position~$j$ is contained in~$Y^*$, then it is easy to verify that removing all edges incident with the vertices of bit selector~$j$ and adding all $x$-edges (resp.\ $y$-edges) for that bit selector results in an induced matching which is not smaller, and in which the status of edges for other bit selectors is not changed; this follows from the fact that the adjacencies of the respective vertices to the outside the bit selector are identical. So in the remainder it suffices to consider a bit position~$j \in [\log r]$ for which~$Y^*$ contains no edge incident with a vertex in the bit selector. We exhibit an induced matching which is at least as large as~$Y^*$ and which has the desired form.

Observe that the bit selector for position~$j$ contains~$n$ triples, each of which forms a clique. As an induced matching cannot contain two edges incident with the same clique, $Y^*$ contains at most one edge incident with each triple for each bit selector~$j' \neq j$, and by assumption it contains no edges incident with bit selector~$j$. Since the union of the sets~$A_i$ for~$i \in [r]$ forms an independent set in~$G^*$, all matching edges in~$Y^*$ have at least one endpoint in~$B^*$, or one endpoint in a bit selector. As~$B^*$ has exactly~$n$ vertices, this bounds the number of edges in~$Y^*$ by~$n + ((\log r) - 1) n = n \log r$. Now observe that the union of all the $x$-edges of the bit selectors forms an induced matching of size~$n \log r$, and has the desired form. As we assumed~$Y^*$ to be maximum, the described induced matching is also maximum which concludes the proof.
\end{claimproof}

To complete the cross-composition it remains to prove that the constructed instance acts as the OR of the inputs. For the first direction, assume that~$G^*$ has a maximum induced matching~$Y^* \subseteq E(G^*)$ of size at least~$k^*$. By the second claim we may assume that for each~$j \in [\log r]$, the matching~$Y^*$ contains all the $x$-edges or all the $y$-edges of position~$j$. Now consider the instance number~$i^*$ whose binary expansion has a zero (resp.\ one) in the $j$-th bit position if~$Y^*$ contains the $x$-edges (resp.\ $y$-edges) of bit selector~$j$. By definition of the adjacencies of the bit selectors it follows that for all instance numbers~$i' \in [r]$ with~$i' \neq i^*$, no vertex of~$A_{i'}$ is the endpoint of an edge in~$Y^*$. To see this, consider a bit position~$j \in [\log r]$ where the binary expansion of~$i'$ and~$i^*$ differ; the $x$-vertices (resp.\ $y$-vertices) of instance selector~$j$ are endpoint of edges in~$Y^*$ whose other endpoints are formed by the $z$-vertices. As the $x$-vertices (resp.\ $y$-vertices) are adjacent to~$A_{i'}$ by the choice of~$j$, inducedness of the matching shows that~$A_{i'}$ contains no endpoints of matching edges. Hence~$Y^*$ is also an induced matching, of the same size, in the graph obtained from~$G^*$ by removing the vertices~$A_{i'}$ for~$i' \neq i^*$. Each triple of an instance selector is a clique, and by assumption on the form of~$Y^*$ the matching contains the $x$-edge or the $y$-edge of the triple. Since an induced matching cannot contain two edges incident with the same clique, this shows that no edges between~$A_{i^*}$ and an instance selector can be contained in~$Y^*$. Therefore it follows that if we delete the vertices~$A_{i'}$ for~$i' \neq i^*$ together with the vertices of the instance selectors from~$G^*$, we are left with an induced submatching of size at least~$k' - (n \log r) = k$. But the resulting graph is~$G^*[B^* \cup A_{i^*}]$, and as observed in the construction of~$G^*$ it is isomorphic to~$G_{i^*}$, which proves that~$G_{i^*}$ contains an induced matching of size~$k$ and is a \yes-instance.

For the reverse direction, assume there is some index~$i^* \in [r]$ such that~$G_{i^*}$ has an induced matching~$Y$ of size~$k$. As~$G^*[B^* \cup A_{i^*}]$ is isomorphic to~$G_{i^*}$, this implies that the induced subgraph admits an induced matching of size~$k$. Now augment this into an induced matching in~$G^*$ by adding the $x$-edges of the bit selectors for positions~$j$ where the binary expansion of~$i^*$ has a zero, and the $y$-edges where the expansion has a one. Using the description of~$G^*$ it is easy to verify that the resulting set of edges is an induced matching, containing a total of~$k + n \log r$ edges. This proves that~$(G^*, X^*, k^*)$ is a \yes-instance.

As the construction can be carried out in polynomial time and embeds the OR of the input instances into a single instance of the target problem with parameter value~$|X^*| = n + 3 n \log r$, this concludes the proof of \thmref{theorem:inducedMatchingByVCNoPoly}.
\myqed
\end{proof}

\subsection{Lower Bounds for Generalized Problem Statements}
As discussed in the introduction of Section~\ref{section:orderTesting} there are two obvious ways to attempt to generalize the positive results of Table~\ref{table:orderTesting}. We show that these generalizations for the induced subgraph testing problem (Section~\ref{subsection:test:subgraph:constant:vc}) and the minor testing problem (Section~\ref{subsection:test:minor:small:vc}) fail to admit polynomial kernels, unless \containment.

\subsubsection{Finding Induced Subgraphs with Constant-size Vertex Covers} \label{subsection:test:subgraph:constant:vc}
In this section we show that even the problem of testing for the existence of an induced subgraph with a constant-size vertex cover, is unlikely to admit a polynomial kernel when parameterized by the size of a vertex cover for the host graph. We use the following family of graphs for our proof.

\begin{definition}
Let~$s,t \geq 0$ be integers, and construct a graph as follows. Create a clique~$C_1$ on five vertices, and a vertex-disjoint clique~$C_2$ on four vertices. Add two vertices~$z_1$ and~$z_2$ and the edge~$z_1z_2$. Made~$z_1$ adjacent to all members of~$C_1$, and make~$z_2$ adjacent to all members of~$C_2$. Add~$s$ isolated vertices and make them adjacent to~$z_1$. Add~$t$ isolated vertices and make them adjacent to~$z_2$. The resulting graph is~$\Psi_{s,t}$.
\end{definition}

\noindent Observe that all graphs~$\Psi_{s,t}$ have a vertex cover of size~$11$ consisting of~$C_1 \cup C_2 \cup \{z_1,z_2\}$. We shall prove that the following problem is unlikely to admit a polynomial kernel, and thereby that the induced subgraph testing problem can still be hard to kernelize when looking for graphs with constant-size vertex covers.

\parproblemdef
{\InducedPsiTestByVC}
{A graph~$G$ with a vertex cover~$X$, and integer~$s,t \geq 0$.}
{The size~$|X|$ of the vertex cover.}
{Does~$G$ contain~$\Psi_{s,t}$ as an induced subgraph?}

\noindent We prove a superpolynomial kernel lower bound for this parameterized problem using cross-composition. The following variant of \IndependentSet will be used as the source problem for the composition.

\problemdef{\pTwoSplitIS}
{A graph~$G$, an independent set~$Y$ in~$G$ such that each component of~$G - Y$ is isomorphic to~$P_2$, and an integer~$k$.}
{Does~$G$ have an independent set of size at least~$k$?}

\noindent Jansen et al.~\cite[Lemma 10]{JansenB11} proved that \pTwoSplitIS is NP-complete, and used it to prove a kernel lower bound for a weighted version of \VertexCover. By adapting their construction, we prove a lower bound for \InducedPsiTestByVC.

\begin{theorem} \label{theorem:inducedpsitestbyvc:nopoly}
\InducedPsiTestByVC does not admit a polynomial kernel unless \containment.
\end{theorem}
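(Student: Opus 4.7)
The plan is to prove the lower bound via cross-composition from the NP-complete problem \pTwoSplitIS into \InducedPsiTestByVC, invoking \thmref{crossCompositionNoKernel}. The polynomial equivalence relation \eqvr declares two valid instances $(G_1, Y_1, k_1)$ and $(G_2, Y_2, k_2)$ equivalent precisely when $|V(G_1)| = |V(G_2)|$, $|Y_1| = |Y_2|$, and $k_1 = k_2$; this yields $\mathcal{O}(N^3)$ classes on inputs of bitsize at most $N$. Given $r$ equivalent inputs $(G_i, Y_i, k)$ with common parameters $n = |V|$, $y = |Y|$, and $m = (n-y)/2$, I would first relabel each $G_i$ so that the independent set is a common set $Y = \{y_1, \ldots, y_y\}$ and the matching in $G_i - Y$ consists of fixed pairs $(a_j, b_j)$ for $j \in [m]$; the instances then differ only in the bipartite adjacencies between $Y$ and the matching endpoints.

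The construction of $G^*$ uses a single \emph{spine} gadget consisting of a clique $C_1^*$ of size $5$, a vertex-disjoint clique $C_2^*$ of size $4$, and vertices $z_1^*, z_2^*$ arranged exactly as in the definition of $\Psi_{s,t}$. The asymmetry between $|C_1^*| = 5$ and $|C_2^*| = 4$ is critical: it forces, via a rigidity argument, that in any induced $\Psi_{s,t}$-copy in $G^*$ the vertices $z_1^*$ and $z_2^*$ play the roles of $z_1$ and $z_2$ respectively and cannot swap. Attached to $z_1^*$ I would place $y$ \emph{candidate} leaves $\hat y_1, \ldots, \hat y_y$ representing copies of the common independent-set positions, and to $z_2^*$ I would place $2m$ \emph{endpoint} leaves $\hat a_1, \hat b_1, \ldots, \hat a_m, \hat b_m$ representing the matching endpoints. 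I would then install $\mathcal{O}(\log r)$ bit-selector vertices inside $X^*$ — mirroring the bit-selector strategies of Theorems~\ref{theorem:InducedBiCliqueNoPoly} and~\ref{theorem:inducedMatchingByVCNoPoly} — whose status in an induced $\Psi_{s,t}$-copy picks out an instance index $i^* \in [r]$. Finally, edges inside $G^*$ between candidate leaves, endpoint leaves, bit-selectors, and the spine are chosen so that the independence-from-the-spine and independence-from-each-other requirement on the leaves of $\Psi_{s,t}$ translates, for the chosen $i^*$, exactly to the requirement that the selected candidate leaves form an independent set in $G_{i^*}$, together with a consistent one-per-pair choice of endpoint leaves. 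Suitable values of $s$ and $t$ (essentially $s = k$ and $t = m$, possibly padded with universal extra leaves) make the counting match, and $X^*$ has size polynomial in $n + \log r$.

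Verification then follows the standard two-direction template. For soundness, I would argue that rigidity of the spine (the unique $K_6$ in $G^*$ together with the uniquely attached $K_5$ playing the roles of $C_1^* \cup \{z_1^*\}$ and $C_2^* \cup \{z_2^*\}$) pins down the image of the spine, that the bit-selector must consistently encode some $i^*$, and that the inducedness of the leaves on $z_1^*$ forces the corresponding candidates to project to an independent set of size at least $k$ in $G_{i^*}$. For completeness, any witness independent set in some $G_{i^*}$ lifts to an induced copy of $\Psi_{s,t}$ by combining the spine, the appropriate bit-selector pattern, and the selected candidate/endpoint leaves. I expect the main obstacle to be precisely the rigidity argument: ruling out alternative embeddings that mix leaves from different instances, or that press bit-selector vertices into service as part of $C_1^*$ or $C_2^*$, will require a careful analysis of the cliques in $G^*$, and the adjacencies between the bit-selectors and the spine will need to be engineered specifically to prevent such degenerate embeddings.
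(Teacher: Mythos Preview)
Your proposal has the right high-level shape (cross-compose \pTwoSplitIS, use the asymmetric cliques to pin down $z_1,z_2$), but there is a genuine gap: as described, your graph~$G^*$ contains no place where the $r$ instance-specific adjacency data actually live. You relabel so that the independent set $Y$ is common and the matching endpoints $a_j,b_j$ are common, and then attach \emph{one} copy of each as leaves of $z_1^*$ and $z_2^*$. But then $G^*$ has only $\Oh(n + \log r)$ vertices in total, and a single bipartite graph between the $\hat y$-leaves and the $\hat a,\hat b$-leaves cannot simultaneously encode $r$ different input graphs. Your bit-selector vertices are said to ``pick out an instance index'', but since a vertex outside an induced subgraph imposes no constraint on it, bit selectors that are not themselves leaves of the $\Psi$-copy have no mechanism to enable or disable edges. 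In short, the construction is missing the $\Theta(r)$-sized part of $G^*$ that stores the per-instance information outside the vertex cover.

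The paper resolves this differently from what you sketch. It keeps a \emph{separate} copy $Y_i$ for each instance (these $\bigcup_i Y_i$ are the only vertices outside~$X^*$), identifies the matching endpoints across instances into a single set $D^*$, and makes $z_1$ adjacent to $D^* \cup \bigcup_i Y_i$ while $z_2$ is adjacent to $\log r$ pairs $\{s^0_j,s^1_j\}$ of bit selectors. The target graph is $\Psi_{k,\log r}$: the size-$k$ side of the copy is the entire independent set of some $G_{i^*}$ (drawn from $Y_{i^*} \cup D^*$), and the size-$\log r$ side \emph{is} the bit-selector choice, one of $s^0_j,s^1_j$ per position. Because $s^0_j s^1_j$ is an edge and each is wired to the $Y_i$'s according to bit~$j$ of~$i$, the inducedness of the $\Psi$-copy forces that only vertices of a single $Y_{i^*}$ can appear on the $z_1$-side. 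So the two key ideas you are missing are: (i) the per-instance information sits in $r$ disjoint copies of $Y$ outside the vertex cover, not in a shared copy; and (ii) the bit selectors are the $z_2$-leaves themselves (hence $t=\log r$, not $t=m$), so that choosing them is forced by the $\Psi$-structure rather than being an external device.
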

\begin{proof}
By \thmref{crossCompositionNoKernel} and the NP-completeness of \pTwoSplitIS, it is sufficient to prove that \pTwoSplitIS cross-composes into \InducedPsiTestByVC. As in the cross-composition of Theorem~\ref{theorem:inducedMatchingByVCNoPoly}, we define a polynomial equivalence relation~\eqvr on instances of \pTwoSplitIS such that all malformed instances are equivalent. Two well-formed instances~$(G_1,Y_1,k_1)$ and~$(G_2,Y_2,k_2)$ are equivalent if~$k_1 = k_2$,~$|Y_1| = |Y_2|$ and~$|V(G_1)| = |V(G_2)|$. It is easy to verify that these choices satisfy Definition~\ref{polyEquivalenceRelation}.

We now give an algorithm that receives~$r$ instances of \pTwoSplitIS which are equivalent under \eqvr, and constructs an instance of \InducedPsiTestByVC with small parameter value that acts as the OR of the inputs. If the input instances are not well-formed, then we output a constant-sized \no-instance. From now on we may therefore assume that the input instances are~$(G_1, Y_1, k_1), \ldots, (G_r, Y_r, k_r)$ such that~$|V(G_1)| = \ldots = |V(G_r)| = n$, $|Y_1| = \ldots = |Y_r| = t$ and~$k_1 = \ldots = k_r = k$. As in the proof of Theorem~\ref{theorem:inducedMatchingByVCNoPoly} we may assume that~$r$ is a power of two. We construct an instance of \InducedPsiTestByVC as follows.

For each~$i \in [r]$, the graph~$G_i - Y_i$ contains~$n - t$ vertices and is a disjoint union of~$P_2$'s by the definition of \pTwoSplitIS. Let~$q=\frac{n-t}{2}$ be the number of~$P_2$'s in each graph~$G_i - Y_i$. For each~$i \in [r]$ label the vertices of the~$P_2$'s in~$G_i - Y_i$ by~$a_{i,1}, b_{i,1}, a_{i,2}, b_{i,2}, \ldots, a_{i,q}, b_{i,q}$ such that~$a_{i,j} b_{i,j}$ is an edge in~$G_i - Y_i$ for~$j \in [q]$; this implies that the only edges of~$G_i - Y_i$ are those between the $a$- and $b$-vertices with the same number. Construct a graph~$G^*$ as follows.
 
\begin{enumerate}
%	\item Initialize~$G'$ as the disjoint union~$\dot \bigcup _{i\in[r]} Y_i$ of the independent sets of the input instances.
%	\item For~$j \in [q]$ add vertices~$a'_j, b'_j$ and the edge~$a'_j b'_j$ to~$G'$. We use~$D$ to refer to the set~$\bigcup _{j \in [q]} \{a'_j,b'_j\}$. Connect these vertices to the other vertices as follows.
%	\begin{itemize}
%		\item For~$i \in [r]$, for each vertex~$v \in Y_i$ and for each~$j \in [q]$ make~$v$ adjacent to~$a'_j$ (resp.\ $b'_j$) if and only if~$v$ is adjacent in~$G_i$ to~$a_j$ (resp.\ $b_j$).
%	\end{itemize}
  \item Initialize~$G^*$ as the disjoint union of the input graphs~$G_1, \ldots, G_r$. This causes~$G^*$ to contain~$Y_i$ for all~$i \in [r]$.
  \item For each~$j \in [q]$, identify the vertices~$a_{1,j}, \ldots, a_{r,j}$ into a single vertex~$a^*_j$, and identify~$b_{1,j}, \ldots, b_{r,j}$ into a single vertex~$b^*_j$. Let~$D^* := \bigcup _{j \in [q]} \{a^*_j, b^*_j\}$. Observe that at this stage in the construction~$G^*[Y_i \cup D^*]$ is isomorphic to~$G_i$ for~$i \in [r]$.
	\item For~$j \in [\log r]$ add vertices~$s^0_j, s^1_j$ to~$G^*$, and add the edge~$s^0_j s^1_j$. Connect these to the remainder of the graph as follows.
	\begin{itemize}
		\item For~$i \in [r]$ and~$j \in [\log r]$, do the following. If the $j$-th bit of the binary expansion of number~$i$ is a zero, then make~$s^{0}_j$ adjacent to all vertices of~$Y_i$ that were added to~$G^*$ in the first step. If the bit is a one, then instead make~$s^1_j$ adjacent to~$Y_i$.
  \end{itemize}
\end{enumerate}

Before we continue the construction, let us observe that at this stage~$V(G^*)$ can be partitioned into three independent sets:~$\bigcup _{i\in[r]} Y_i$ is an independent set,~$(\bigcup _{j \in [q]} a^*_j) \cup (\bigcup _{j \in [\log r]} s^0_j)$ is an independent set, and the remainder~$(\bigcup _{j \in [q]} b^*_j) \cup (\bigcup _{j \in [\log r]} s^1_j)$ is an independent set. Hence~$G^*$ does not have a clique of size four or more at this point.

\begin{enumerate}[resume]
  \item Add a clique~$C_1$ on five vertices, and a clique~$C_2$ on four vertices, to~$G^*$.
  \item Add two vertices~$z_1, z_2$ and the edge~$z_1z_2$ to~$G^*$. Make~$z_1$ adjacent to~$C_1 \cup (\bigcup _{i \in [r]} Y_i) \cup D^*$, and make~$z_2$ adjacent to~$C_2 \cup (\bigcup _{j \in [\log r]} \{s^0_j, s^1_j\})$. This concludes the description of~$G^*$.
\end{enumerate}

Observe that as the edges between sets~$Y_i$ and~$D^*$ were not changed in these last steps, the final graph~$G^*[Y_i \cup D^*]$ is isomorphic to~$G_i$ for all~$i \in [r]$. Since~$G^*$ did not have  cliques of size four or more in its intermediate stage, it is easy to see that the unique maximum clique in~$G^*$ is~$C_1 \cup \{z_1\}$, consisting of six vertices. In the graph~$G^* - (C_1 \cup \{z_1\})$, the unique maximum clique is~$C_2 \cup \{z_2\}$ consisting of five vertices. We use this property of~$G^*$ in the proof of the following claim.

\begin{claim}
There is an index~$i \in [r]$ such that~$G_i$ has an independent set of size~$k$ if and only if~$G^*$ contains~$\Psi _{k, \log r}$ as an induced subgraph.
\end{claim}
\begin{claimproof}
($\Rightarrow$) Assume that~$G_{i^*}$ has an independent set of size~$k$ for~$i^* \in [r]$. Since~$G_{i^*}$ is isomorphic to~$G^*[Y_{i^*} \cup D^*]$, there is a size-$k$ independent set~$S^* \subseteq Y_{i^*} \cup D^*$ in~$G^*$. Consider the binary expansion of the number~$i^*$. Construct a vertex set~$B^*$ corresponding to this number as follows. For~$j \in [\log r]$, if the $j$-th bit of~$i^*$ is a one, then add~$s^0_j$ to~$B^*$. Otherwise add~$s^1_j$ to~$B^*$. We end up with a set~$B^*$ of size~$\log r$. Using the construction of~$G^*$ it is easy to see that~$B^*$ is independent in~$G^*$. Since we have picked the vertices corresponding exactly to the complement of the binary expansion of~$i^*$, there are no edges between~$S^*$ and~$B^*$. Now observe that by construction, ~$z_1$ is adjacent to all members of~$S^*$ but none of~$B^*$, whereas~$z_2$ is adjacent to all members of~$B^*$ but none of~$S^*$. Vertex~$z_1$ is adjacent to the five-clique~$C_1$, but no other vertices are adjacent to that clique, while~$z_2$ is the only vertex not in~$C_2$ that is adjacent to the four-clique~$C_2$. Since the edge~$z_1z_2$ is present,~$|S^*| = k$, and~$|B^*| = \log r$ it follows that~$G^*[S^* \cup B^* \cup C_1 \cup C_2 \cup \{z_1, z_2\}]$ is isomorphic to~$\Psi _{k, \log r}$, proving this direction of the claim.

($\Leftarrow$) Suppose that~$G^*$ contains~$\Psi _{k, \log r}$ as an induced subgraph. As~$G^*$ has a unique six-clique, and~$\Psi_{k, \log r}$ has a unique six-clique, these six-cliques must be mapped to each other by the isomorphism. Moreover, since~$z_1$ is the only vertex of the six-clique that has neighbors outside the six-clique (in both~$G^*$ and~$\Psi _{k, \log r}$), the vertices labeled~$z_1$ in~$G^*$ and~$\Psi _{k, \log r}$ must be mapped to each other by the induced subgraph isomorphism. Since the graph~$G^* - (C_1 \cup \{z_1\})$ has a unique five-clique, and~$\Psi_{k, \log r} - (C_1 \cup \{z_1\})$ has also a unique five-clique, we infer that these five-cliques must be mapped to each other. Again, since $z_2$ is the only vertex of the five-clique that has a neighbour outside it (in both~$G^*$ and~$\Psi _{k, \log r}$), the two copies of~$z_2$ must be mapped to each other by the isomorphism. Since the only neighbors that~$z_1$ has in~$G^*$ are~$C_1$,~$z_2$ and the set~$(\bigcup _{i\in[r]} Y_i) \cup D^*$, the vertices making up the size-$k$ side of~$\Psi_{k, \log r}$ must correspond to vertices of~$(\bigcup _{i\in[r]} Y_i) \cup D^*$ in~$G^*$. Let~$S^*$ be the~$k$ vertices in~$G^*$ that realize this size-$k$ side. Now consider the vertices in~$G^*$ that realize the~$\log r$-size side of~$\Psi_{k, \log r}$. Since the only neighbors of~$z_2$ in~$G^*$ are~$z_1$,~$C_2$, and~$\bigcup _{j \in [\log r]} \{s^0_j, s^1_j\}$ it follows that the size-$\log r$ side of~$\Psi_{k, \log r}$ is realized by vertices from~$\bigcup _{j \in [\log r]} \{s^0_j, s^1_j\}$; call these vertices~$U$. For each~$j \in [\log r]$ there is an edge~$s^0_j s^1_j$ by construction of~$G^*$. As the size-$\log r$ side of~$\Psi_{k, \log r}$ is an independent set,~$U$ contains at most one vertex of each such pair. As there are~$\log r$ pairs,~$U$ contains exactly one vertex of each pair. Define a number~$i^*$ as follows. For~$j \in [\log r]$, if~$s^0_j \in U$, let the $j$-th bit be a one; if~$s^1_j \in U$, let the $j$-th bit be a zero. Hence the number~$i^*$ is the complement of the binary string represented by the values encoded by~$U$, and therefore no vertex in~$U$ is adjacent to a vertex in~$Y_{i^*}$, by construction. For each~$i \in [r] \setminus \{i^*\}$, however, there is a bit position where the binary expansion of~$i$ differs with that of~$i^*$, and~$Y_i$ is adjacent to the vertex in~$U$ corresponding to that bit position. As there are no edges between the size-$k$ side and the size-$\log r$ side of~$\Psi_{k, \log r}$, the induced subgraph in~$G^*$ cannot contain vertices of~$\bigcup _{i \in [r] \setminus i^*} Y_i$. Hence the set~$S^*$ containing the~$k$ vertices that realize the size-$k$ side, is contained in~$Y_{i^*} \cup D^*$. But as~$G^*[Y_{i^*} \cup D^*]$ is isomorphic to~$G_{i^*}$, we find that~$S^*$ corresponds to a size-$k$ independent set in~$G^*[Y_{i^*} \cup D^*]$. Hence~$G_{i^*}$ has an independent set of size~$k$, concluding the proof.
\end{claimproof}

To define an instance of \InducedPsiTestByVC, observe that the set~$X^* := D^* \cup C_1 \cup C_2 \cup \{z_1, z_2\} \cup (\bigcup _{j \in [\log r]} \{s^0_j, s^1_j\})$ is a vertex cover in~$G^*$, since its complement consists of disjoint unions of independent sets. It is easy to verify that the size of~$X^*$ is polynomial in~$q + \log r$, which is polynomial in the encoding size of an input instance plus~$\log r$. The claim shows that the instance~$(G^*, X^*, s^* := k, t^* := \log r)$ is equivalent to the OR of the input instances. Since the construction can be carried out in polynomial time this is a valid cross-composition, and by \thmref{crossCompositionNoKernel} this concludes the proof.
\myqed
\end{proof}

\subsubsection{Finding Small Graphs as Minors} \label{subsection:test:minor:small:vc}
In this section we consider the minor testing problem parameterized by the sum of the vertex cover size and the size of the query graph.

\parproblemdef
{\HMinorTestByVCH}
{A graph~$G$ with a vertex cover~$X$, and a graph~$H$.}
{The value~$|X| + |V(H)|$.}
{Does~$G$ contain~$H$ as a minor?}

\noindent We prove a superpolynomial kernel lower bound for this problem using the technique of \emph{polynomial parameter transformations}, rather than cross-composition, since this simplifies the proof considerably. We therefore need the following terminology and results. For a parameterized problem~$Q \subseteq \Sigma^* \times \mathbb{N}$, the \emph{unparameterized version of~$Q$} is the set~$\tilde{Q} = \{x1^k \mid (x,k) \in Q\}$, where~$1$ is a new symbol that is added to the alphabet.

\begin{definition}[\cite{BodlaenderTY11}] \label{def:polyParamTransform}
Let~$P$ and~$Q$ be parameterized problems. We say that~$P$ is \emph{polynomial parameter reducible} to~$Q$, written $P \leq_{\mathrm{ptp}} Q$, if there exists a polynomial time computable function $g: \Sigma^* \times \mathbb{N} \to \Sigma^* \times \mathbb{N}$ and a polynomial~$p$, such that for all $(x,k) \in \Sigma^* \times \mathbb{N}$ we have (a) $(x,k) \in P \Leftrightarrow (x',k') = g(x,k) \in Q$ and (b) $k' \leq p(k)$. The function~$g$ is called \emph{polynomial parameter transformation}.
\end{definition}

\begin{theorem}[\cite{BodlaenderTY11}] \label{thm:kernelThroughReduction}
Let~$P$ and~$Q$ be parameterized problems and $\tilde{P}$ and $\tilde{Q}$ be the unparameterized versions of~$P$ and~$Q$ respectively. Suppose that $\tilde{P}$ is NP-hard and $\tilde{Q}$ is in NP. If there is a polynomial parameter transformation from~$P$ to~$Q$ and~$Q$ has a polynomial kernel, then~$P$ also has a polynomial kernel.
\end{theorem}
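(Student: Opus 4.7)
The plan is to chain the given polynomial parameter transformation with the polynomial kernelization for $Q$, and then translate the result back into an instance of $P$ via a classical Karp reduction obtained from the assumptions on the unparameterized versions. First, given an input $(x,k)$ of $P$, I would apply the polynomial parameter transformation $g$ to obtain an equivalent instance $(x',k') = g(x,k)$ of $Q$ with $|x'|$ polynomial in $|x|$ and $k' \leq p(k)$.

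Next, I would feed $(x',k')$ into the hypothesized polynomial kernel for $Q$ to obtain an equivalent instance $(x'',k'')$ of $Q$ satisfying $|x''|, k'' \leq q(k')$ for some polynomial $q$. Combined with $k' \leq p(k)$, the whole description $(x'', k'')$ has bitsize bounded by a polynomial in $k$. At this point we have a small $Q$-instance, but to get a kernel for $P$ we must convert it into a $P$-instance of comparable size.

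For this final conversion I would use the standard NP-hardness argument. Encode $(x'',k'')$ as the string $x'' 1^{k''}$, which by definition lies in $\tilde{Q}$ if and only if $(x'',k'') \in Q$. Because $\tilde{Q} \in \mathrm{NP}$ and $\tilde{P}$ is NP-hard, there exists a polynomial-time Karp reduction $h$ from $\tilde{Q}$ to $\tilde{P}$. Apply $h$ to $x'' 1^{k''}$ to obtain a string $y$ of length polynomial in $|x''| + k''$, hence polynomial in $k$. The string $y$ lies in $\tilde{P}$ iff $(x,k) \in P$, so taking the parameterized instance $(y, k^\star)$ with $k^\star := |y|$ (or, if $y$ happens to have the canonical form $y' 1^m$, the corresponding $(y', m)$) produces an instance of $P$ equivalent to $(x,k)$ whose total size and parameter are both polynomial in $k$. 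This yields the desired polynomial kernel.

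The main technical point to be careful about is the bookkeeping between the parameterized and unparameterized worlds: the output of the $Q$-kernel is parameterized, must be re-encoded into a string of $\tilde{Q}$, pushed through the Karp reduction, and then re-interpreted as a parameterized $P$-instance. The simplest way to guarantee that the resulting parameter is polynomial in $k$ is to set the new parameter equal to the length of the output string, which automatically inherits the polynomial bound. No other subtleties arise, since polynomial composition preserves polynomial bounds throughout the chain.
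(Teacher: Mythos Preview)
The paper does not give its own proof of this theorem; it is quoted from \cite{BodlaenderTY11} and used as a black box, so there is nothing to compare against directly. Your argument is the standard one and is essentially correct: transform, kernelize in~$Q$, then use the Karp reduction from~$\tilde{Q}$ to~$\tilde{P}$ afforded by NP-hardness of~$\tilde{P}$ to land back in~$P$.

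One small point to tighten in the last step: setting~$k^\star := |y|$ and outputting~$(y, |y|)$ is not quite right, since membership of~$(y,|y|)$ in~$P$ is in general unrelated to membership of~$y$ in~$\tilde{P}$; moreover~$y$ may contain the special symbol~$1$ in non-trailing positions and then is not even a string over~$\Sigma$. The clean way (which you mention parenthetically) is to parse~$y$: since~$1$ is a fresh symbol, strip the maximal suffix of~$1$'s to write~$y = y'1^m$. If~$y' \in \Sigma^*$, output~$(y', m)$; otherwise~$y \notin \tilde{P}$ and you output a fixed \no-instance of~$P$. Either way the size and parameter are bounded by~$|y|$, hence polynomially in~$k$. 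With this fix your proof is complete.
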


\noindent The contrapositive of Theorem~\ref{thm:kernelThroughReduction} can be used to obtain kernel lower bounds. We use the following problem as the starting point for the polynomial parameter transformation.

\parproblemdef
{\BipartiteRegularPerfectCodeByTK}
{A bipartite graph~$G$ with partite sets~$T$ and~$N$ such that all vertices in~$N$ have the same degree, and an integer~$k$.}
{$|T| + k$.}
{Is there a set~$N' \subseteq N$ of size at most~$k$ such that every vertex in~$T$ has exactly one neighbor in~$N'$?}

\noindent A set~$N'$ as described above is a \emph{perfect code} for~$G$.

\begin{lemma} [{\cite[Theorem 4]{DomLS09}}] \label{lemma:perfect:code:no:poly}
\BipartiteRegularPerfectCodeByTK does not have a polynomial kernel unless \containment.
\end{lemma}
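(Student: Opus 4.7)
The plan is to prove the lemma by cross-composition from the classical (NP-hard) unparameterized version of perfect code on bipartite regular graphs. First I would establish NP-hardness: reduce from \XTC by letting $T$ be the universe, $N$ be the set family, and including the edge $ts$ whenever element $t$ belongs to set $s$; each set-vertex then has degree exactly $3$, so the produced graph is bipartite and regular on the $N$-side, and exact covers of size $k$ correspond bijectively to perfect codes of size $k$. Having the classical NP-hardness in hand, by \thmref{crossCompositionNoKernel} it then suffices to exhibit a cross-composition into the parameterized problem with the parameter $|T|+k$.

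For the cross-composition I would use the polynomial equivalence relation that groups instances agreeing on $|T|$, $|N|$, $k$, and the common $N$-degree $d$. Given $r$ equivalent instances $(G_i,T_i,N_i,k_i)$ (with $r$ a power of two, by padding), I would build $G^*$ by identifying the $T_i$'s along a fixed bijection into a common set $T^*$, keeping the $N_i$'s as disjoint copies, and attaching a $\log r$-bit instance-selector gadget. Concretely, for each bit position $j\in[\log r]$ add two vertices $s^0_j,s^1_j$ on the $N$-side and, via a set of additional $T$-vertices shared by both $s^0_j$ and $s^1_j$, force any perfect code to contain exactly one vertex from each such pair; then hook $s^0_j$ and $s^1_j$ to $N_i$ according to the $j$-th bit of the binary expansion of $i$, so that selecting one particular bit pattern ``locks out'' every $N_i$ except one. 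Finally, pad with dummy degree-one additions so that \emph{every} vertex of $N^* = (\bigcup_i N_i)\cup\{s^b_j\}$ has the same degree $d^*$, preserving regularity.

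I would then prove correctness by the usual two directions: a perfect code $N'_i$ in some $(G_i,T_i,N_i,k_i)$ extends to a perfect code of size $k+\log r$ in $G^*$ by adding the appropriate bit-selector vertices; conversely, any perfect code in $G^*$ must pick one vertex per bit pair, these bits encode some $i^*$, and the remaining $k$ vertices are forced to lie in $N_{i^*}$ because the auxiliary $T$-vertices attached to the other $N_j$ are saturated. The new parameter value is $|T^*|+k^* = |T_i| + O(\log r) + (k+\log r)$, which is polynomial in $\max_i|x_i|+\log r$ as required by \defref{crossComposition}.

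The main obstacle I anticipate is simultaneously enforcing regularity on the $N$-side, preserving bipartiteness, and keeping the instance selector ``tight'' enough that no perfect code can mix vertices from two different $N_i$'s: the auxiliary $T$-vertices of the selector must be covered exactly once, which is what rules out such mixtures, but the degree-balancing dummies must be added without accidentally creating alternative perfect codes. This is the technical heart of the argument in \cite{DomLS09}; once it is carried out, \thmref{crossCompositionNoKernel} immediately yields the stated lower bound.
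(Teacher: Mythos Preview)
The paper does not prove this lemma at all: it is stated with the citation \cite[Theorem~4]{DomLS09} and used as a black box in the subsequent polynomial-parameter transformation to \HMinorTestByVCH. So there is no ``paper's own proof'' to compare against; you are effectively re-deriving the cited result.

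Your overall plan (NP-hardness via \XTC, then a composition sharing the terminal side~$T$ while keeping the~$N_i$ disjoint and adding $\log r$ bit-selector pairs) is indeed the shape of the argument in \cite{DomLS09}, modulo the later cross-composition terminology. One point in your sketch is not right as written, though: you say you will ``hook~$s^0_j$ and~$s^1_j$ to~$N_i$ according to the $j$-th bit of~$i$''. Both the selectors and the~$N_i$ live on the $N$-side of the bipartition, so you cannot add edges between them without destroying bipartiteness. What actually has to happen is that the interaction goes through \emph{auxiliary $T$-vertices}: every vertex of~$N_i$ is made adjacent to $\log r$ new terminals encoding the bits of~$i$, and each selector~$s^b_j$ is adjacent to the terminal for ``bit~$j$ equals~$1-b$''. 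Then picking~$s^b_j$ saturates that terminal and excludes every~$N_i$ whose $j$-th bit differs. This also keeps the degree increase on the original $N$-vertices uniform (each gains exactly $\log r$ extra neighbours), so regularity survives after padding the selectors. Your final paragraph already identifies this as the delicate step; just be aware that the literal wording of your construction would need to be corrected along these lines.
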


\begin{theorem}
\HMinorTestByVCH does not admit a polynomial kernel unless \containment. \label{theorem:hminortestbyvch:nopoly}
\end{theorem}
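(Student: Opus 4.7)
The plan is to prove the theorem by a polynomial parameter transformation from \BipartiteRegularPerfectCodeByTK into \HMinorTestByVCH, and then to invoke \thmref{thm:kernelThroughReduction}. Combined with \lemmaref{lemma:perfect:code:no:poly}, this will rule out a polynomial kernel for \HMinorTestByVCH unless \containment. The unparameterized target problem is clearly in \NP (an $H$-minor model can be guessed and verified in polynomial time), and the unparameterized source problem is \NPc, so the hypotheses of \thmref{thm:kernelThroughReduction} are satisfied.

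Given an input $(G, T, N, k)$ in which all vertices of $N$ share the common degree $d$, I would output a trivial \no-instance when $kd \neq |T|$, and otherwise construct $(G^*, X^*, H)$ as follows. Build $G^*$ on vertex set $T \cup N$ by keeping every edge of $G$ and, for each $v \in N$, inserting all edges between pairs of vertices in $N_G(v)$; this makes every $v \in N$ simplicial with a $d$-clique neighborhood contained in $T$, while $N$ remains an independent set, so that $X^* := T$ is a vertex cover of $G^*$. Set $H := k \cdot K_{d+1}$, the disjoint union of $k$ cliques on $d+1$ vertices. The output parameter $|X^*| + |V(H)| = |T| + k(d+1)$ is polynomial in $|T| + k$ because $d \leq |T|$, and the construction clearly runs in polynomial time.

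The forward direction is immediate: a perfect code $\{v_1, \ldots, v_k\}$ gives $k$ vertex-disjoint ``natural'' $(d+1)$-cliques $\{v_i\} \cup N_G(v_i)$ in $G^*$, each a clique by construction of $G^*$ and pairwise disjoint because the code property forces $N_G(v_1), \ldots, N_G(v_k)$ to partition $T$, so $H$ is already a subgraph of $G^*$ and hence a minor. The reverse direction is where the work lies: given an $H$-minor model in $G^*$, extract a perfect code in $G$. My plan is to apply \proposref{proposition:boundedDegreeMinorModel} to each component of $H$ so as to restrict attention to a subgraph $G^{**} \subseteq G^*$ of maximum degree at most $d$, and then argue that under this degree cap every $(d+1)$-clique minor inside $G^{**}$ must be \emph{natural}, i.e.\ of the form $\{v\} \cup N_G(v)$ realized by singleton branch sets. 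Once all $k$ copies are pinned to distinct $v_i \in N$, vertex-disjointness together with $kd = |T|$ forces $\{v_1, \ldots, v_k\}$ to be a perfect code.

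The hard step will be ruling out \emph{synthetic} $(d+1)$-clique minors built either from fragments of several $N_G(v)$-cliques inside $T$, or from non-singleton branch sets that bypass a simplicial $N$-vertex altogether. This is precisely the phenomenon highlighted by the remark following \thmref{theorem:CliqueMinorKernel}: in the present setting there is no analogue of \ruleref{rule:CliqueMinorSmallDegree} available because $|V(H)|$ is now part of the parameter, so low-degree simplicial vertices can no longer be deleted, and the reduction is tailored so that the only non-cover vertices are of exactly this type. I expect the key structural ingredient to be an exchange argument showing that any synthetic realization of a $K_{d+1}$-minor in $G^{**}$ can be rerouted to a natural one using some $v \in N$ without disturbing the disjointness of the remaining $k-1$ copies, after which the cardinality identity $kd = |T|$ closes the proof.
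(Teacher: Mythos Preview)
Your reduction is genuinely different from the paper's, and the construction itself is sound, but your write-up has one small bug and misses the clean argument for the hard direction.

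\textbf{The bug.} You reject whenever $kd \neq |T|$, but the source problem asks for a perfect code of size \emph{at most} $k$. If $d \mid |T|$ and $|T|/d < k$, a perfect code of size $|T|/d$ may well exist, yet you output \no. The paper handles this by setting $k' := |T|/d$ and rejecting only when $k' > k$ or $k'$ is not an integer; you should do the same and then use $k'$ in place of $k$ throughout.

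\textbf{The backward direction.} You do not need \proposref{proposition:boundedDegreeMinorModel} or any exchange argument; a straight pigeonhole on $T$-vertices does all the work. Since $N$ is independent in $G^*$, every branch set either contains a $T$-vertex or is a singleton $\{v\}$ with $v \in N$; two singleton-$N$ branch sets in the same $K_{d+1}$-component would be non-adjacent, so each component has at most one such singleton and hence at least $d$ branch sets containing a $T$-vertex. Summing over the $k$ components gives at least $kd = |T|$ distinct $T$-vertices, so equality holds: every component uses exactly $d$ $T$-vertices and has exactly one singleton-$N$ branch set $\{v_i\}$. Adjacency of $\{v_i\}$ to the other $d$ branch sets then forces those $d$ $T$-vertices to be precisely $N_G(v_i)$, and disjointness of the components yields the perfect code $\{v_1,\dots,v_k\}$. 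No rerouting or degree-cap subgraph is needed.

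\textbf{Comparison with the paper.} The paper turns all of $T$ into a clique and takes $H'$ to be a single $|T|$-clique with $k'$ degree-$d$ pendants, so that a degree argument forces the clique-vertices of $H'$ onto singleton $T$-vertices and the pendants onto singleton $N$-vertices. Your construction instead leaves $G^*[T]$ sparse and asks for $k$ disjoint $K_{d+1}$'s; the role of the paper's degree argument is played by the counting argument above. Both routes yield a PPT with the same parameter blow-up, and both are equally short once the right argument is in hand; your version has the minor aesthetic advantage that $H$ depends only on $k$ and $d$, not on the incidence structure of $G$.
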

\begin{proof}
We give a polynomial-parameter transformation from \BipartiteRegularPerfectCodeByTK to \HMinorTestByVCH. As the unparameterized version of the latter problem is easily seen to lie in NP, and the unparameterized version of the perfect code problem is NP-complete (it contains the NP-complete~\cite[SP2]{GareyJ79} \XTC problem as a special case), this suffices to prove the claim by Lemma~\ref{lemma:perfect:code:no:poly} and Theorem~\ref{thm:kernelThroughReduction}. So consider an instance~$(G,T,N,k)$ of \BipartiteRegularPerfectCodeByTK, and let~$r$ be the degree of vertices in~$N$. If~$k < |T| / r$, or~$|T| / r$ is not an integer, then we may safely output \no: at most~$k$ vertices of degree~$r$ cannot uniquely cover all~$|T|$ terminals. In the remainder, let~$k' := |T| / r$ be an integer; any perfect code for~$G$ of size at most~$k$, must have size exactly~$k'$. If~$r \geq |T| - 1$ then a perfect code consists of at most two elements of~$N$; we solve the problem in polynomial time and give the appropriate answer. We therefore assume that~$r < |T| - 1$ from now on. We create an instance of \HMinorTestByVCH consisting of a host graph~$G'$ and a query graph~$H'$.

Construct a graph~$G'$ from~$G$ by turning~$T$ into a clique; the vertex set of~$G'$ is~$T \cup N$. Construct a graph~$H'$ as follows. Start with a clique consisting of vertices~$v_{i,j}$ for~$i \in [k']$ and~$j \in [r]$. We use~$C$ to denote this clique. For~$i \in [k']$ add a vertex~$u_i$ adjacent to~$v_{i,1}, \ldots, v_{i,r}$. Denote these vertices by~$D$.

\begin{claim}
Graph~$G$ has a perfect code of size exactly~$k'$ if and only if~$G'$ contains~$H'$ as a minor.
\end{claim}
\begin{claimproof}
($\Rightarrow$) Assume that~$G$ has a perfect code~$N' \subseteq N$ of size exactly~$k'$. We claim that the subgraph of~$G'$ induced by~$N' \cup T$ is isomorphic to~$H'$. To prove this, we give an isomorphism~$f \colon N' \cup T \to V(H')$ such that for all~$u,v \in N' \cup T$ we have~$uv \in E(G[N' \cup T])$ if and only if~$f(u)f(v) \in E(H')$. Number the~$k'$ vertices in~$N'$ arbitrarily as~$n_1, \ldots, n_{k'}$. For~$i \in [k']$ define~$f(n_i) := u_i$, and consider the~$\deg_G(n_i) = r$ vertices~$N_G(n_i)$. Order them arbitrarily, mapping the first one to~$v_{i,1}$, the second one to~$v_{i,2}$, up to~$v_{i,r}$, under the isomorphism~$f$. Since~$N'$ is a perfect code, every vertex of~$N' \cup T$ is mapped to a unique vertex of~$H'$ by this choice of~$f$. It is straight-forward to verify the correspondence between edges of~$G'[N' \cup T]$ and edges of~$H'$. As an induced subgraph is a special case of a minor, this yields the proof in this direction.

($\Leftarrow$) Assume that~$G'$ contains~$H'$ as a minor, and let~$\phi$ be a minor model that maps~$V(H')$ to connected subsets of~$V(G')$. Consider an arbitrary vertex~$c \in C$ of~$H'$. As~$c$ is adjacent to all~$|C| - 1$ other members of the clique~$C$ in~$H'$, its degree in~$H'$ is at least~$|T|-1$. Since a vertex in~$N$ has degree~$r < |T|-1$, a branch set~$\phi(c)$ for~$c \in C$ cannot consist of a single vertex in~$N$, as such a vartex alone cannot be connected to~$|T|-1$ other branch sets. Since the vertices~$N$ are independent in~$G'$, and a branch set induced a connected subgraph, this implies that each set~$\phi(c)$ contains a vertex in~$T$. As~$|T| = |C|$ this implies that each branch set~$\phi(c)$ for~$c \in C$ contains exactly one vertex of~$T$. But then we may restrict each branch set~$\phi(c)$ to~$\phi(c) \cap T$ without breaking the minor model of~$H'$: vertices of~$N$ that might belong to the branch set are not needed to connect to other branch sets, as all possible connections to~$C$ are already made in the clique~$C$, and vertices of~$N$ do not connect to other vertices of~$N$ since~$N$ is an independent set. So if there is a minor model of~$H'$ in~$G'$, then there is one where the branch set of each~$c \in C$ consists of a unique vertex in~$T$. As~$N$ is an independent set, this also shows that~$\phi(u_i)$ is a singleton for each~$i \in [k']$: to contain more vertices and still induce a connected subgraph, a branch set~$\phi(u_i)$ would have to contain a vertex of~$T$. So we may assume that all branch sets in the minor model~$\phi$ are singleton vertices in~$G'$.

For each vertex $u_i$ let $n_i$ be such that $\phi(u_i)=\{n_i\}$. Let $N'=\{n_i \mid i\in [k']\}$; since vertices $n_i$ are pairwise different, it follows that $|N'|=k'$. We claim that $N'$ is a perfect code in $G$. Since $N'$ has size $k'=|T|/r$ and every vertex of $N'$ has degree exactly $r$ in $G$, a simple degree-counting argument shows that it suffices to argue that each vertex of $T$ is adjacent to at least one vertex of $N'$. Consider any vertex $w\in T$. Since $f$ is surjective on $T$, there exist some indices $i,j$, where $i\in [k']$ and $j\in [r]$, such that $\{w\}=\phi(v_{i,j})$. The vertex $v_{i,j}$, however, is adjacent to $u_i$ in $H'$, so it follows that $w$ must be adjacent to $n_i$ in $G$. As $w$ was picked arbitrarily, we conclude that every vertex of $T$ is adjacent to at least one vertex of $N'$ and we are done.
\end{claimproof}

Observe that the set~$T$ forms a vertex cover of~$G'$. The tuple~$(G',H',X' := T)$ can therefore serve as an instance of \HMinorTestByVCH. As we established earlier that any perfect code in~$G$ must have size exactly~$k'$, the claim shows the equivalence between the original instance and the constructed instance. The new value of the parameter is~$|X'| + |V(H')| = |T| + (|T| + k') \leq 2|T| + k$, which is polynomial in the original parameter of the \BipartiteRegularPerfectCodeByTK instance. As the transformation can easily be computed in polynomial time, it is a polynomial-parameter transformation, which concludes the proof.
\myqed
\end{proof}

Concerning the minor-testing variant of the parameterization discussed in this section, note that the kernel lower bound for \InducedPathByVC (Theorem~\ref{theorem:inducedPathByVCNoPoly}) already implies that \HInducedSubgraphTestByVCH does not admit a polynomial kernel unless \containment.

\section{Conclusion}
We have studied the existence of polynomial kernels for graph problems parameterized by vertex cover. The general theorems we presented unify known positive results for many problems, and the characterization in terms of forbidden or desired induced subgraphs from a class characterized by few adjacencies gives a common explanation for the results obtained earlier. Our comparison of induced subgraph and minor testing problems shows that the kernelization complexity landscape of problems parameterized by vertex cover is rich and difficult to capture with a single meta-theorem. The kernel lower bounds for \VariableBicliqueTest, \InducedPathByVC, and \InducedMatchingByVC, show that besides connectivity and domination requirements, an inducedness requirement can form an obstacle to polynomial kernelizability for parameterizations by vertex cover.

An obvious direction for further work is to find even more general kernelization theorems that can also encompass the known positive results for problems like \TreewidthVC~\cite{BodlaenderJK11b}, \PathwidthVC~\cite{BodlaenderJK12a}, and \CliqueMinorTestVC. There are also various problems for which the kernelization complexity parameterized by vertex cover is still open; among these are \PerfectDeletion, \IntervalDeletion, \bandwidth, and \orientablegenus. One may also investigate whether \thmref{theorem:deletionToPiVC} has an analogue for edge-deletion problems.

In light of the parameter ecology program~\cite{FellowsJR12} it is natural to ask whether the general kernelization theorems obtained in \sectref{section:metatheorems} can be transferred to smaller parameters than the vertex cover number. As this parameter measures the vertex-deletion distance to a graph of treewidth zero, an obvious next step would be parameterization by the feedback vertex number --- the vertex-deletion distance to a graph of treewidth one. Unfortunately, this seems difficult. While \VertexCover and \OddCycleTransversal admit polynomial kernels for this parameter~\cite{JansenB11,JansenK12}, the kernelization schemes are rather involved and lack any similarity. For the \LongPath problem, the existence of a polynomial kernel parameterized by feedback vertex number is still open. In the case of \ThreeColoring~\cite{JansenK11b} and \disjointPaths~\cite{BodlaenderTY11} we even know that no polynomial kernel exists for the parameterization by feedback vertex number (unless \containment). Hence it seems that a better understanding of polynomial kernelizability for parameterizations by feedback vertex number is needed before attempting to capture the phenomenon by general theorems.

The case study of \sectref{section:orderTesting} raises some interesting questions. To devise a polynomial kernel for \ConstantBicliqueTest we used a reduction to~$|X|^{\Oh(1)}$ instances of a kernelizable problem. The guessing phase leading to the series of instances is reminiscent of a Turing kernelization (cf.~\cite{Binkele-RaibleFFLSV12,HermelinKSWW11}). Can the power of Turing kernelization be exploited to give polynomial kernels for induced subgraph problems that do not admit polynomial many-one kernels? For example, does the \VariableBicliqueTest problem admit a polynomial Turing kernel, or can the recent framework of Hermelin et al.~\cite{HermelinKSWW11} be used to prove that this is unlikely? The question of Turing kernelization seems especially relevant for the area of induced subgraph testing, as \Clique parameterized by vertex cover does not admit a polynomial many-one kernel (unless \containment) but has a trivial linear-vertex Turing kernel~\cite{BodlaenderJK11}. Could it be that the induced $H$-subgraph testing problem has a polynomial Turing kernel for any graph~$H$ as input, when parameterized by vertex cover?

\bibliography{Paper}

\begin{thebibliography}{10}

\bibitem{AlonGKSY11}
N.~Alon, G.~Gutin, E.~J. Kim, S.~Szeider, and A.~Yeo.
\newblock Solving {MAX}-$r$-{SAT} above a tight lower bound.
\newblock {\em Algorithmica}, 61(3):638--655, 2011.
\newblock \href {http://dx.doi.org/10.1007/s00453-010-9428-7}
  {\path{doi:10.1007/s00453-010-9428-7}}.

\bibitem{Berge61}
C.~Berge.
\newblock F\"arbung von graphen, deren s\"amtliche bzw.\ deren ungerade kreise
  starr sind.
\newblock {\em Wiss. Z. Martin-Luther-Univ. Halle-Wittenberg Math.-Natur.
  Reihe}, 10:114, 1961.

\bibitem{Binkele-RaibleFFLSV12}
D.~Binkele-Raible, H.~Fernau, F.~V. Fomin, D.~Lokshtanov, S.~Saurabh, and
  Y.~Villanger.
\newblock Kernel(s) for problems with no kernel: On out-trees with many leaves.
\newblock {\em ACM Trans. Algorithms}, 8(4):38, 2012.
\newblock \href {http://dx.doi.org/10.1145/2344422.2344428}
  {\path{doi:10.1145/2344422.2344428}}.

\bibitem{Bodlaender98}
H.~L. Bodlaender.
\newblock A partial $k$-arboretum of graphs with bounded treewidth.
\newblock {\em Theor. Comput. Sci.}, 209(1-2):1--45, 1998.
\newblock \href {http://dx.doi.org/10.1016/S0304-3975(97)00228-4}
  {\path{doi:10.1016/S0304-3975(97)00228-4}}.

\bibitem{Bodlaender09}
H.~L. Bodlaender.
\newblock Kernelization: New upper and lower bound techniques.
\newblock In {\em Proc. 4th {IWPEC}}, pages 17--37, 2009.
\newblock \href {http://dx.doi.org/10.1007/978-3-642-11269-0_2}
  {\path{doi:10.1007/978-3-642-11269-0_2}}.

\bibitem{BodlaenderDFH09}
H.~L. Bodlaender, R.~G. Downey, M.~R. Fellows, and D.~Hermelin.
\newblock On problems without polynomial kernels.
\newblock {\em J. Comput. Syst. Sci.}, 75(8):423--434, 2009.
\newblock \href {http://dx.doi.org/10.1016/j.jcss.2009.04.001}
  {\path{doi:10.1016/j.jcss.2009.04.001}}.

\bibitem{BodlaenderFLPST09}
H.~L. Bodlaender, F.~V. Fomin, D.~Lokshtanov, E.~Penninkx, S.~Saurabh, and
  D.~M. Thilikos.
\newblock ({M}eta) {K}ernelization.
\newblock In {\em Proc. 50th FOCS}, pages 629--638, 2009.
\newblock \href {http://dx.doi.org/10.1109/FOCS.2009.46}
  {\path{doi:10.1109/FOCS.2009.46}}.

\bibitem{BodlaenderJK11}
H.~L. Bodlaender, B.~M.~P. Jansen, and S.~Kratsch.
\newblock Cross-composition: A new technique for kernelization lower bounds.
\newblock In {\em Proc. 28th STACS}, pages 165--176, 2011.
\newblock \href {http://dx.doi.org/10.4230/LIPIcs.STACS.2011.165}
  {\path{doi:10.4230/LIPIcs.STACS.2011.165}}.

\bibitem{BodlaenderJK11b}
H.~L. Bodlaender, B.~M.~P. Jansen, and S.~Kratsch.
\newblock Preprocessing for treewidth: A combinatorial analysis through
  kernelization.
\newblock In {\em Proc. 38th {ICALP}}, pages 437--448, 2011.
\newblock \href {http://dx.doi.org/10.1007/978-3-642-22006-7_37}
  {\path{doi:10.1007/978-3-642-22006-7_37}}.

\bibitem{BodlaenderJK12c}
H.~L. Bodlaender, B.~M.~P. Jansen, and S.~Kratsch.
\newblock Kernel bounds for path and cycle problems.
\newblock {\em Theor. Comput. Sci.}, 2012.
\newblock Online First.
\newblock \href {http://dx.doi.org/10.1016/j.tcs.2012.09.006}
  {\path{doi:10.1016/j.tcs.2012.09.006}}.

\bibitem{BodlaenderJK12a}
H.~L. Bodlaender, B.~M.~P. Jansen, and S.~Kratsch.
\newblock Kernel bounds for structural parameterizations of pathwidth.
\newblock In {\em Proc. 13th SWAT}, pages 352--363, 2012.
\newblock \href {http://dx.doi.org/10.1007/978-3-642-31155-0_31}
  {\path{doi:10.1007/978-3-642-31155-0_31}}.

\bibitem{BodlaenderK08}
H.~L. Bodlaender and A.~M. C.~A. Koster.
\newblock Combinatorial optimization on graphs of bounded treewidth.
\newblock {\em Comput. J.}, 51(3):255--269, 2008.
\newblock \href {http://dx.doi.org/10.1093/comjnl/bxm037}
  {\path{doi:10.1093/comjnl/bxm037}}.

\bibitem{BodlaenderTY11}
H.~L. Bodlaender, S.~Thomass{\'e}, and A.~Yeo.
\newblock Kernel bounds for disjoint cycles and disjoint paths.
\newblock {\em Theor. Comput. Sci.}, 412(35):4570--4578, 2011.
\newblock \href {http://dx.doi.org/10.1016/j.tcs.2011.04.039}
  {\path{doi:10.1016/j.tcs.2011.04.039}}.

\bibitem{BrandstadtLS99}
A.~Brandst\"{a}dt, V.~B. Le, and J.~P. Spinrad.
\newblock {\em Graph classes: a survey}.
\newblock Society for Industrial and Applied Mathematics, Philadelphia, PA,
  USA, 1999.

\bibitem{Cameron89}
K.~Cameron.
\newblock Induced matchings.
\newblock {\em Discrete Appl. Math.}, 24(1-3):97--102, 1989.
\newblock \href {http://dx.doi.org/10.1016/0166-218X(92)90275-F}
  {\path{doi:10.1016/0166-218X(92)90275-F}}.

\bibitem{ChudnowskyRST06}
M.~Chudnovsky, N.~Robertson, P.~D. Seymour, and R.~Thomas.
\newblock The strong perfect graph theorem.
\newblock {\em Ann. Math.}, 164:51--229, 2006.
\newblock \href {http://dx.doi.org/10.4007/annals.2006.164.51}
  {\path{doi:10.4007/annals.2006.164.51}}.

\bibitem{Courcelle90}
B.~Courcelle.
\newblock The monadic second-order logic of graphs {I}: Recognizable sets of
  finite graphs.
\newblock {\em Inf. Comput.}, 85(1):12--75, 1990.
\newblock \href {http://dx.doi.org/10.1016/0890-5401(90)90043-H}
  {\path{doi:10.1016/0890-5401(90)90043-H}}.

\bibitem{CyganLPPS11a}
M.~Cygan, D.~Lokshtanov, M.~Pilipczuk, M.~Pilipczuk, and S.~Saurabh.
\newblock On cutwidth parameterized by vertex cover.
\newblock In {\em Proc. 6th IPEC}, pages 246--258, 2011.
\newblock \href {http://dx.doi.org/10.1007/978-3-642-28050-4_20}
  {\path{doi:10.1007/978-3-642-28050-4_20}}.

\bibitem{CyganLPPS12}
M.~Cygan, D.~Lokshtanov, M.~Pilipczuk, M.~Pilipczuk, and S.~Saurabh.
\newblock On the hardness of losing width.
\newblock In {\em Proc. 6th IPEC}, pages 159--168, 2011.
\newblock \href {http://dx.doi.org/10.1007/978-3-642-28050-4_13}
  {\path{doi:10.1007/978-3-642-28050-4_13}}.

\bibitem{DellM10}
H.~Dell and D.~van Melkebeek.
\newblock Satisfiability allows no nontrivial sparsification unless the
  polynomial-time hierarchy collapses.
\newblock In {\em Proc. 42nd STOC}, pages 251--260, 2010.
\newblock \href {http://dx.doi.org/10.1145/1806689.1806725}
  {\path{doi:10.1145/1806689.1806725}}.

\bibitem{Diestel10}
R.~Diestel.
\newblock {\em Graph Theory}.
\newblock Springer-Verlag, Heidelberg, 4th edition, 2010.

\bibitem{DomLS09}
M.~Dom, D.~Lokshtanov, and S.~Saurabh.
\newblock Incompressibility through colors and {ID}s.
\newblock In {\em Proc. 36th ICALP}, pages 378--389, 2009.
\newblock \href {http://dx.doi.org/10.1007/978-3-642-02927-1_32}
  {\path{doi:10.1007/978-3-642-02927-1_32}}.

\bibitem{DowneyF99}
R.~Downey and M.~R. Fellows.
\newblock {\em Parameterized Complexity}.
\newblock Monographs in Computer Science. Springer, New York, 1999.

\bibitem{FellowsHR12}
M.~R. Fellows, D.~Hermelin, and F.~A. Rosamond.
\newblock Well quasi orders in subclasses of bounded treewidth graphs and their
  algorithmic applications.
\newblock {\em Algorithmica}, 64:3--18, 2012.
\newblock \href {http://dx.doi.org/10.1007/s00453-011-9545-y}
  {\path{doi:10.1007/s00453-011-9545-y}}.

\bibitem{FellowsJR12}
M.~R. Fellows, B.~M.~P. Jansen, and F.~A. Rosamond.
\newblock Towards fully multivariate algorithmics: Parameter ecology and the
  deconstruction of computational complexity.
\newblock {\em European J. Combin.}, 34(3):541--566, 2013.
\newblock \href {http://dx.doi.org/10.1016/j.ejc.2012.04.008}
  {\path{doi:10.1016/j.ejc.2012.04.008}}.

\bibitem{FlumG06}
J.~Flum and M.~Grohe.
\newblock {\em Parameterized Complexity Theory}.
\newblock Springer-Verlag New York, Inc., 2006.

\bibitem{FominLST10}
F.~Fomin, D.~Lokshtanov, S.~Saurabh, and D.~M. Thilikos.
\newblock Bidimensionality and kernels.
\newblock In {\em Proc. 21st SODA}, pages 503--510, 2010.

\bibitem{FortnowS11}
L.~Fortnow and R.~Santhanam.
\newblock Infeasibility of instance compression and succinct {PCP}s for {NP}.
\newblock {\em J. Comput. Syst. Sci.}, 77(1):91--106, 2011.
\newblock \href {http://dx.doi.org/10.1016/j.jcss.2010.06.007}
  {\path{doi:10.1016/j.jcss.2010.06.007}}.

\bibitem{Ganian12}
R.~Ganian.
\newblock Twin-cover: Beyond vertex cover in parameterized algorithmics.
\newblock In {\em Proc. 6th IPEC}, pages 259--271, 2011.
\newblock \href {http://dx.doi.org/10.1007/978-3-642-28050-4_21}
  {\path{doi:10.1007/978-3-642-28050-4_21}}.

\bibitem{GareyJ79}
M.~R. Garey and D.~S. Johnson.
\newblock {\em Computers and Intractability, A Guide to the Theory of
  {NP}-Complete\-ness}.
\newblock W.H. Freeman and Company, New York, 1979.

\bibitem{Grohe07log}
M.~Grohe.
\newblock Logic, graphs, and algorithms.
\newblock In J.~Flum, E.~Gr{\"a}del, and T.~Wilke, editors, {\em Logic and
  Automata: History and Perspectives}, pages 357--422. Amsterdam University
  Press, 2007.

\bibitem{GuoN07a}
J.~Guo and R.~Niedermeier.
\newblock Invitation to data reduction and problem kernelization.
\newblock {\em SIGACT News}, 38(1):31--45, 2007.
\newblock \href {http://dx.doi.org/10.1145/1233481.1233493}
  {\path{doi:10.1145/1233481.1233493}}.

\bibitem{HarnikN10}
D.~Harnik and M.~Naor.
\newblock On the compressibility of {NP} instances and cryptographic
  applications.
\newblock {\em SIAM J. Comput.}, 39(5):1667--1713, 2010.
\newblock \href {http://dx.doi.org/10.1137/060668092}
  {\path{doi:10.1137/060668092}}.

\bibitem{HermelinKSWW11}
D.~Hermelin, S.~Kratsch, K.~So\l{}tys, M.~Wahlstr\"om, and X.~Wu.
\newblock Hierarchies of inefficient kernelizability.
\newblock {\em CoRR}, abs/1110.0976, 2011.

\bibitem{JansenB11}
B.~M.~P. Jansen and H.~L. Bodlaender.
\newblock Vertex cover kernelization revisited: Upper and lower bounds for a
  refined parameter.
\newblock In {\em Proc.\ 28th {STACS}}, pages 177--188, 2011.
\newblock \href {http://dx.doi.org/10.4230/LIPIcs.STACS.2011.177}
  {\path{doi:10.4230/LIPIcs.STACS.2011.177}}.

\bibitem{JansenK11b}
B.~M.~P. Jansen and S.~Kratsch.
\newblock Data reduction for graph coloring problems.
\newblock In {\em Proc. 18th FCT}, pages 90--101, 2011.
\newblock \href {http://dx.doi.org/10.1007/978-3-642-22953-4_8}
  {\path{doi:10.1007/978-3-642-22953-4_8}}.

\bibitem{JansenK12}
B.~M.~P. Jansen and S.~Kratsch.
\newblock On polynomial kernels for structural parameterizations of odd cycle
  transversal.
\newblock In {\em Proc. 6th IPEC}, pages 132--144, 2011.
\newblock \href {http://dx.doi.org/10.1007/978-3-642-28050-4_11}
  {\path{doi:10.1007/978-3-642-28050-4_11}}.

\bibitem{KratschW12}
S.~Kratsch and M.~Wahlstr{\"o}m.
\newblock Compression via matroids: a randomized polynomial kernel for odd
  cycle transversal.
\newblock In {\em Proc. 23rd SODA}, pages 94--103, 2012.
\newblock \href {http://arxiv.org/abs/1107.3068} {\path{arXiv:1107.3068}}.

\bibitem{Lampis11a}
M.~Lampis.
\newblock Algorithmic meta-theorems for restrictions of treewidth.
\newblock {\em Algorithmica}, pages 1--19, 2011.
\newblock \href {http://dx.doi.org/10.1007/s00453-011-9554-x}
  {\path{doi:10.1007/s00453-011-9554-x}}.

\bibitem{Niedermeier06}
R.~Niedermeier.
\newblock {\em Invitation to Fixed-Parameter Algorithms}.
\newblock Oxford University Press, 2006.

\bibitem{Niedermeier10}
R.~Niedermeier.
\newblock Reflections on multivariate algorithmics and problem
  parameterization.
\newblock In {\em Proc. 27th STACS}, pages 17--32, 2010.
\newblock \href {http://dx.doi.org/10.4230/LIPIcs.STACS.2010.2495}
  {\path{doi:10.4230/LIPIcs.STACS.2010.2495}}.

\bibitem{RobertsonS04}
N.~Robertson and P.~D. Seymour.
\newblock Graph minors. {XX}. {W}agner's conjecture.
\newblock {\em J. Comb. Theory, Ser. B}, 92(2):325--357, 2004.
\newblock \href {http://dx.doi.org/10.1016/j.jctb.2004.08.001}
  {\path{doi:10.1016/j.jctb.2004.08.001}}.

\end{thebibliography}

\end{document}